\documentclass{article}

\usepackage{dilab_arxiv}

\usepackage{amsfonts}
\usepackage{mathtools}
\usepackage{array}
\usepackage{enumitem}
\setlist[itemize]{itemsep=1pt, topsep=2pt, parsep=0pt, leftmargin=1.2em}
\usepackage{multirow}
\usepackage{wrapfig}
\usepackage{algorithm}
\usepackage{algorithmic}
\usepackage{placeins}
\newtheorem{theorem}{Theorem}
\newtheorem{proposition}[theorem]{Proposition}
\newtheorem{lemma}[theorem]{Lemma}
\newtheorem{corollary}[theorem]{Corollary}

\newtheorem{remark}[theorem]{Remark}

\newcommand{\eps}{\varepsilon}
\newcommand{\cX}{\mathcal{X}}
\newcommand{\cI}{\mathcal{I}}
\newcommand{\cZ}{\mathcal{Z}}
\newcommand{\xbar}{\bar{x}}
\newcommand{\sbar}{\bar{\sigma}}
\newcommand{\dbar}{\bar{d}}
\newcommand{\mubar}{\bar{\mu}}
\newcommand{\Rbar}{\bar{R}}
\newcommand{\Rhat}{\widehat{R}}
\newcommand{\gap}{\mathrm{NashGap}}
\newcommand{\KL}{\mathrm{KL}}
\newcommand{\ip}[2]{\langle #1, #2 \rangle}

\newlength{\figbase}
\title{When Are We Near Nash?\\Certifying Solutions in General-Sum Games}
\runningtitle{Certifying Solutions in General-Sum Games}
\date{\today}

\paperlogo{\includegraphics[height=1.5cm]{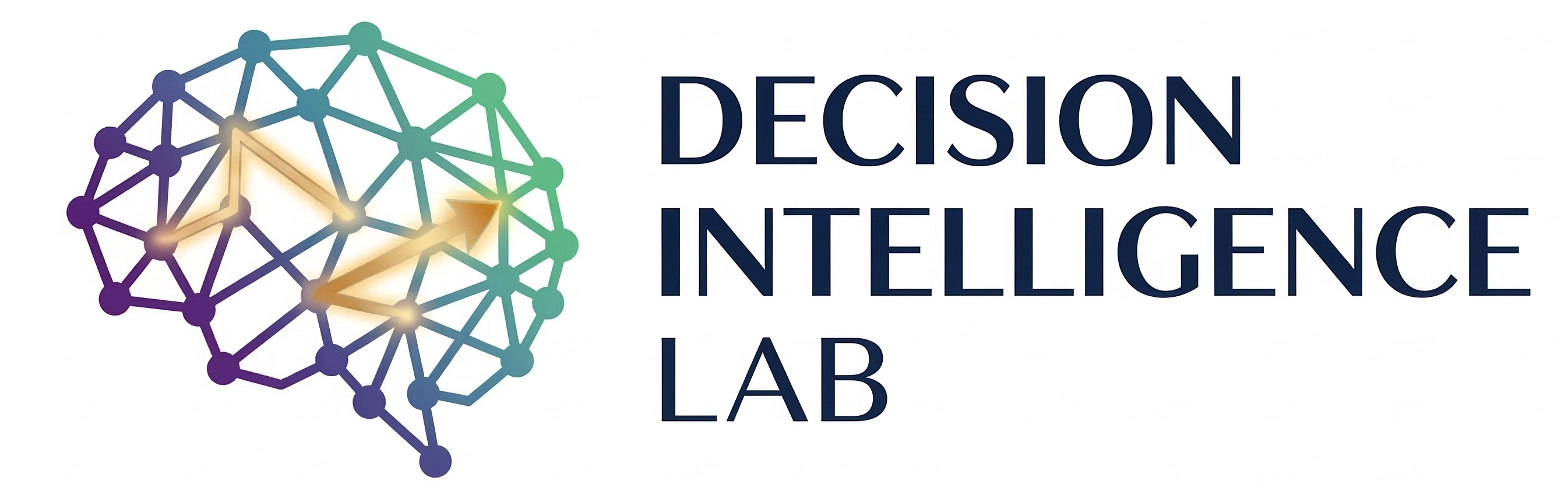}}

\author{
  Boning Li$^{1}$ and Longbo Huang$^{1\,\text{\faEnvelope}}$
  \\[0.3em]\normalfont
  $^1$Institute for Interdisciplinary Information Sciences, Tsinghua University
  \\
  \text{\faEnvelope}\ Correspondence: longbohuang@tsinghua.edu.cn
}

\begin{document}
\maketitle
\thispagestyle{fancy}

\begin{abstract}
Heuristic solvers for general-sum extensive-form games produce a stream of
iterates, and practice reports the NashGap of the returned profile.
A recent meta-learning benchmark reports the \emph{lowest} NashGap along the
training trajectory, located by exact evaluation at every checkpoint, without
stating the selection rule or the evaluation budget behind that number. We ask
what a solver can know about its own iterates from the regret accumulators and
the terminal-distribution accumulators its updates already maintain.
In two-player games the NashGap of a weighted average strategy
\emph{equals}, at the maximizing player, the weighted external regret plus a
signed value correction.
Bounding the regret by
the running counterfactual-regret accumulator gives an online certificate
whose cost per checkpoint is linear in the number of terminal histories.
The certificate stays sound under alternating updates and in games with three
or more players. The mismatch between what
alternation evaluates and what the averages record enters as one signed
comparator term, maximized in a single backward pass per player whose
comparator weight is nonzero. The bound is
read off the regret state, with no best response against the average. Run a
portfolio of training-free regret-matching solvers, output the iterate with the
smallest certificate, and verify that iterate once.
On the benchmark's biased Leduc family a pool of sixteen training-free solvers,
seven original and nine recent variants, solves $89\%$ of the $64$ instances
at threshold $10^{-5}$ this way, matching the trajectory-minimum union of the
solvers. The same protocol solves $92\%$ of the $64$ pre-registered held-out
instances and $89\%$ again on a third set of $64$ instances that no earlier
stage used. The zero-best-response certificate certifies all $57$ in-sample
solves and all $57$ fresh ones, and $58$ of the $59$ held-out ones. Its
exact-comparator refinement certifies every solve in the three sets. At the
selected iterate the refinement tracks the true gap within a median factor of
$1.4$ over the $36$ instances whose selected gap is above $10^{-12}$. The
meta-trained method reports $96\%$ under the trajectory-minimum protocol.
That protocol leaves its selection rule and evaluation budget unstated. The
certificate states both and attributes its bound. Code is available at
\url{https://github.com/lbn187/CertNE}.
\end{abstract}

\section{Introduction}
\label{sec:intro}

Computing a Nash equilibrium of a general-sum game to precision
inverse-polynomial in the game size is PPAD-complete (Polynomial Parity
Arguments on Directed graphs), for two players
\citep{chen2009settling} and for three or more players
\citep{daskalakis2009complexity}. The no-regret dynamics
that solve two-player zero-sum games at scale
\citep{zinkevich2007regret,tammelin2014solving,li2026gpu,li2026real}
guarantee only
coarse correlated equilibria in general \citep{anagnostides2022near}. Practice
therefore runs heuristic solvers and judges them afterwards by their
NashGap, the largest gain any player could obtain by deviating alone
\citep{lisy2017eqilibrium,davis2014using}.

Computing that gap exactly costs one best-response pass per player over the
game tree, the same order as one iteration for a tabular solver
(Table~\ref{tab:budget}). The certificate below builds a bound from the
solver's own regret state, without a best response. Under aligned two-player
play, where every accumulator refers to the same start-of-round profile, an
exact evaluator that also records the running realized value recovers the
same regret and value terms from its best-response values. Under alternation,
or with three or more players, the certificate adds a comparator term, the
signed functional of Section~\ref{sec:update-order}, which an exact evaluator
does not compute. With that term the raw regret state is a bound again.

A recent benchmark shows why the scoring protocol matters. Neural predictive
counterfactual regret minimization$^+$ (NPCFR$^+$)
\citep{sychrovsky2026approximating} meta-learns a predictive bias. On
biased two-player Leduc games it reports solving $96\%$ of instances to
NashGap $10^{-5}$; its untrained counterpart PCFR$^+$ reports $16\%$. The
metric is the \emph{lowest} NashGap along the training trajectory, located
by exact evaluation at roughly $19$ checkpoints over $2^{18}$ iterations. The
reported number is therefore the output of a selection rule, the argmin of
the exact gap, and an evaluation budget of one gap per checkpoint. Neither
is part of the method or stated with the result
(Section~\ref{sec:oracle-gap}). The quantity the same paper proposes for
bounding the gap without a best response combines a regret term with a
Pinsker-type correlation term. It is the natural candidate for such a rule,
and we find it sits three orders of magnitude above the true gap and grows.

We give an online NashGap certificate computed from the regret accumulators
and the terminal accumulators, together with an
account of which quantities the gap actually depends on. In two-player
games the correlation term is avoidable. The comparator term is
exact by bilinearity, and the value term becomes exact once the
distribution exchange is replaced by directly measurable quantities. The
inequality therefore becomes an \emph{identity} (Theorem~\ref{thm:tight}).
The NashGap of the weighted average strategy \emph{equals} the weighted
external regret plus a signed value correction. Bounding the regret by the
running counterfactual-regret accumulator (Lemma~\ref{lem:weighted-decomp})
gives a certificate whose per-checkpoint cost is linear in the number of
terminal histories.

The strongest solvers update players in alternation
\citep{tammelin2014solving,brown2019solving,burch2019revisiting}. The regrets they
evaluate then refer to a partially updated opponent, while the averages
record the start-of-round profile. A second regret pass per iteration is
the usual remedy. The mismatch itself is one more measurable comparator
term, a signed linear functional maximized in one backward pass per player
with nonzero comparator weight, so no second regret pass is needed. The
certificate is therefore sound for any update
order and any number of players (Proposition~\ref{prop:sequence}). In
games with three or more players the same term is the coalition imbalance
that the correlation level is structurally blind to
(Remark~\ref{rem:eps-blind}). No bound on the coalition term vanishes with
the correlation level, so the Pinsker-type route fails beyond two players;
the comparator term measures it directly. The
certificate splits its bound into a regret term, a value correction and
comparator terms (Table~\ref{tab:guarantees}).

The certificate turns the post-hoc protocol into an algorithm
(Figure~\ref{fig:protocol}). Run a portfolio of training-free
regret-matching solvers, weighted, predictive, discounted, scheduled and
stabilised variants
\citep{tammelin2014solving,farina2021faster,brown2019solving,xu2024minimizing,zhang2026faster,farina2023regret,zhang2025scale},
and record the certificate at every checkpoint of every solver. Output the
iterate with the smallest certificate, so the selection rule and its budget
are explicit. The zero-best-response certificate ($C^0$) reads accumulators
alone, and its exact-comparator refinement ($C^1$) adds one signed backward
pass per player with nonzero comparator weight at each checkpoint. On biased
Leduc this is a single pass, costing about as much wall-clock as an exact gap
evaluation and less than one training iteration. The selected iterate can be
verified exactly in one further gap evaluation (Corollary~\ref{cor:one-shot}).

\begin{figure}[t]
\centering
\includegraphics[width=\textwidth]{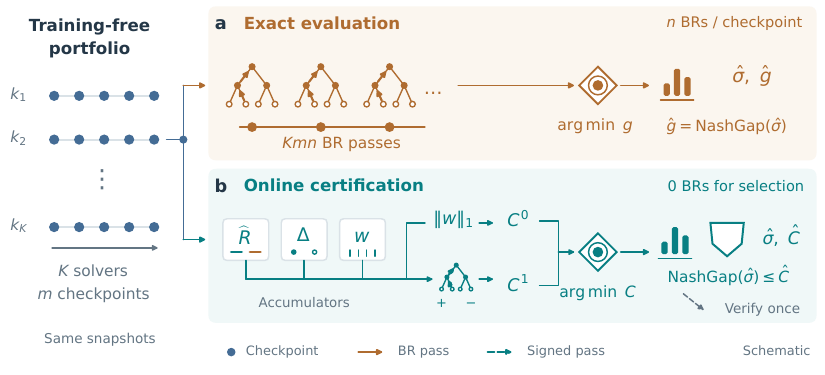}
\caption{Two selection protocols on the same $K$ solvers and $m$
checkpoints per solver. \textbf{a,} Exact evaluation scores every candidate
with $n$ best-response (BR) passes, costing $Kmn$ passes overall, and
outputs the minimum-gap strategy with its exact gap.
\textbf{b,} Online certification combines signed regret, value correction,
and comparator terms (Proposition~\ref{prop:sequence}). $C^0$ uses
accumulators alone; $C^1$ adds one signed backward pass per player with
nonzero comparator weight at each checkpoint. Either certificate selects
its own argmin across solvers and checkpoints and bounds the selected
strategy's NashGap. Optional verification evaluates only that strategy,
with $n$ BR passes (Corollary~\ref{cor:one-shot}). Nodes, trees and policy
glyphs are schematic; arrows on trees indicate backward traversal.}
\label{fig:protocol}
\end{figure}

On the benchmarks of \citet{sychrovsky2026approximating} this protocol
matches the trajectory-minimum rule with no exact-gap query during
selection; the cost it does pay is the accumulator bookkeeping
(Table~\ref{tab:budget}). On biased
Leduc ($64$ instances), certificate selection over sixteen training-free
solvers solves $89\%$ of instances at threshold $10^{-5}$, the same fraction
as the trajectory-minimum union over the same solvers. The seven original
solvers alone reach $84\%$, and the sixteen-solver
pool, extended by nine recent variants in a second pre-registered stage,
reaches $92\%$ on $64$ held-out instances. On a third instance set that no earlier stage
used, the protocol repeats at $89\%$ ($57/64$).

The certificate itself is below the threshold on every in-sample solve, for
$C^0$ read off accumulators alone and for $C^1$ with one signed backward
pass per checkpoint. Across the three sets $C^1$ certifies every solve, and
$C^0$ all but one held-out solve. At the selected iterate $C^1$ is within a
median $1.4\times$ of the true gap over the $36$ instances whose selected gap
is above $10^{-12}$. The aligned start-of-round certificate pays an extra
regret pass per player per iteration yet is looser ($1.9\times$) and
certifies $86\%$. On the single deterministic three-player Leduc game,
PDCFR$^+$ reaches NashGap $1.17\times10^{-6}$ while the certificate-selected
HS-PCFR$^+$ run is certified at $2.83\times10^{-6}$, both below the reported
$10^{-3}$ NPCFR$^+$ comparator. Under our protocol the training-free
portfolio reaches $0.89$ against the reported $0.96$. The certificate is
solver-agnostic and attaches to NPCFR$^+$ or any other update sequence
unchanged. Our $7$ in-sample failures cluster in games with a low
tie-refund fraction (Section~\ref{app:hard-instances}).

Our contributions are:
\begin{itemize}
  \item \textbf{Theory.} We prove that in two-player games the NashGap of a
    weighted average \emph{equals} its weighted external regret plus a
    signed value correction, an identity, not a bound
    (Theorem~\ref{thm:tight}). We build from it an online certificate whose
    zero-best-response form reads accumulators alone and whose exact form
    costs one signed backward pass per player with nonzero comparator
    weight, sound for any update order and player count
    (Proposition~\ref{prop:sequence}). We show that in games with three or
    more players no correlation level controls the comparator term
    (Remark~\ref{rem:eps-blind}), and the certificate measures it
    directly. One exact gap evaluation verifies the selected iterate
    (Corollary~\ref{cor:one-shot}).
  \item \textbf{Algorithm.} We turn the certificate into the selection
    rule: score every checkpoint of every solver, output the argmin, and
    report the candidate with a stated budget and no exact-gap query
    during selection. The argmin comes from accumulators alone under
    $C^0$; under $C^1$ it costs one signed backward pass per player with
    nonzero comparator weight.
  \item \textbf{Re-evaluation under a stated selection rule.} On the NPCFR
    benchmarks, we show that training-free certificate selection reaches
    $89\%$ in-sample, matching the trajectory-minimum union over the same
    solvers, $92\%$ held-out and $89\%$ on a third, never-used set,
    against the reported meta-trained $96\%$. $C^1$ certifies every solve
    in all three sets, and $C^0$ certifies all but one held-out solve.
    The residual in-sample failures show the same lean.
\end{itemize}

\section{Related Work}
\label{sec:related}

\textbf{Meta-learning and steering.}
Learning-not-to-regret meta-learns regret-minimizer internals for faster
convergence \citep{sychrovsky2024learning}, following the general
meta-learning-in-games program of \citet{harrismeta}. NPCFR
\citep{sychrovsky2026approximating} redirects this machinery at
general-sum Nash approximation by meta-learning a predictive bias, and is
the reference point for our re-evaluation;
\citet{sychrovsky2025selfplay} extends the line to self-play. An
alternative channel steers no-regret learners toward equilibria through
external payments \citep{zhang2024steering}. Our core methods RM$^+$ and
PRM$^+$ are standard training-free algorithms; DPRM$^+$ is also training-free
but modifies the PRM$^+$ prediction with an analytic decorrelation term.
The portfolio-level selection among their iterates is new.

\textbf{Certificates and gap evaluation.}
Theorem~1 of \citet{sychrovsky2026approximating} bounds the NashGap by a
regret term plus a term in the extensive-form marginalizability of the
empirical play distribution, via a Pinsker-type inequality, and states it
as an analysis tool. Our
Proposition~\ref{prop:kl-cert} is its weighted-averaging analogue, and
Theorem~\ref{thm:tight} removes the distribution inequality entirely in
the two-player case. Outside these structural certificates, obtaining a NashGap means
computing it, one exact best-response pass per player. Small equilibrium
certificates exploit tree structure to compress the strategy witness
in large games
\citep{zhang2020small}, and accelerating the best-response computation
itself has long been an engineering goal there
\citep{johanson2011accelerating,li2026agents}. Exploitability is likewise the
standard for judging deployed agents
\citep{davis2014using,lisy2017eqilibrium,lockhart2019computing,balduzzi2018re,li2026av}.

\textbf{Regret-based certificates.}
They survive only through zero-sum structure. There the
marginals of a coarse correlated equilibrium are $2\varepsilon$-Nash
\citep{macqueen2023cce}, and the average's gap is bounded by total regret at
every step \citep{zinkevich2007regret}. Since $u_2=-u_1$ forces
$\Delta_1=-\Delta_2$, that bound is the zero-sum case of
Theorem~\ref{thm:tight}, in which the two signed value corrections cancel.
In general-sum play they do not, and the theorem measures the residual from
the running realized value. Faster regret rates, such as those of the $+$ and
predictive variants \citep{tammelin2014solving,farina2021faster}, shrink the
regret term and leave that residual untouched. AIVAT \citep{burch2018aivat} estimates a fixed
agent's value without bias, and anytime PSRO \citep{mcaleer2022anytime} gates a growing
policy population on restricted-game exploitability.
None of these gives an online, best-response-free certificate
for iterate selection during training in general-sum extensive-form games.

\textbf{Regret minimization and equilibria.}
\label{app:extra-related}
Counterfactual regret minimization (CFR) decomposes external regret over
infosets \citep{zinkevich2007regret}. CFR$^+$
\citep{tammelin2014solving,tammelin2015solving} with alternating updates
\citep{burch2019revisiting,li2026alternation} and linear averaging solved heads-up limit
poker \citep{bowling2015heads}. Predictive regret matching$^+$
\citep{farina2021faster}, discounted regrets and weights
\citep{brown2019solving}, their predictive combination PDCFR$^+$
\citep{xu2024minimizing}, hyperparameter schedules for the discount and
weight exponents \citep{zhang2026faster}, the smooth, stable and
extragradient stabilisations of regret matching$^+$ \citep{farina2023regret}
and scale-invariant IR-PRM$^+$ \citep{zhang2025scale} are training-free
accelerants, and one configuration of each is in our pool. The last-iterate
linear convergence of the scale-invariant step is analyzed by
\citet{li2026last}. All of
these guarantees concern regret, which
in general-sum games yields only (coarse) correlated equilibria
\citep{anagnostides2022near,celli2019learning}. Even averaging with an
explicit approximation guarantee, as under fictitious play
\citep{conitzer2009approximation}, works only in restricted settings. Monte Carlo variants
\citep{lanctot2009monte,li2026correlated} trade these exact terminal
distributions for sampling. Abstraction and pruning
\citep{li2024rl,li2025efficient,li2026abstraction,li2026effective} are a
complementary route to scale in
large games (Appendix~\ref{app:future-work}).

\textbf{Portfolios and algorithm selection.}
Running several solvers and selecting among their outputs is classical in
algorithm selection \citep{rice1976algorithm,gomes2001algorithm,xu2008satzilla}.
Here the selection signal is a certificate with a soundness guarantee,
so the selected iterate comes with
a bound on its NashGap.
The task differs from optimizing a \emph{strategy} portfolio, which is already NP-hard
in zero-sum games \citep{drabent2025understanding}, and from solver-dependent selection
among equal-value equilibria \citep{leal2026which}. Ours is a portfolio of no-regret
solvers, and the chosen iterate is scored by the certificate.

\section{Preliminaries}
\label{sec:prelim}

\textbf{Extensive-form games and sequence form.}
We consider finite $n$-player perfect-recall extensive-form games (EFGs).
Each has terminal set $\cZ$, information sets (infosets) $\cI_i$ that group each
player's decision points by what that player observes there, at most $A$
actions per infoset, and fixed chance reach probability $p_c(z)$. Utilities are
$u_i \colon \cZ \to [-M, M]$, where
$M \coloneqq \max_{i,z} |u_i(z)|$. In the sequence-form
representation \citep{koller1994fast,koller1996efficient},
player $i$'s strategy is a realization plan $x_i$ in a convex polytope
$\cX_i$; we write $x_i(z) \in [0,1]$ for the realization weight of player
$i$'s last sequence on the path to $z$. A profile
$\sigma = (x_1, \dots, x_n)$ induces the terminal distribution
\begin{equation}
  d_\sigma(z) \;=\; p_c(z) \prod_{i=1}^n x_i(z),
  \qquad
  u_i(\sigma) \;=\; \ip{d_\sigma}{u_i}
  \;=\; \textstyle\sum_{z \in \cZ} d_\sigma(z)\, u_i(z).
\end{equation}
Here and below $\ip{\cdot}{\cdot}$ is the Euclidean inner product on
$\mathbb{R}^{|\cZ|}$. The map $x_i \mapsto u_i(x_1,\dots,x_n)$ is linear in
each player's realization plan separately, by multilinearity.

\textbf{NashGap.}
For a profile $\sigma$,
\begin{equation}
  \gap(\sigma) \;=\; \max_{i} \; \Big[
    \max_{x' \in \cX_i} u_i\big(x', \sigma_{-i}\big) - u_i(\sigma)
  \Big],
\end{equation}
so $\sigma$ is an $\eps$-Nash equilibrium iff $\gap(\sigma) \le \eps$.
Evaluating $\gap$ exactly requires one best-response (BR) computation per
player, i.e.\ $n$ backward passes over the game tree. Computing an
$\eps$-Nash equilibrium of a general-sum game, for $\eps$ inverse-polynomial
in the game size, is PPAD-complete
\citep{chen2009settling,daskalakis2009complexity}. Regret-minimization
dynamics guarantee convergence only to coarse correlated equilibria, joint
play distributions that no player improves on by deviating before play
begins, in general \citep{anagnostides2022near}.

\textbf{Weighted iterates and online accumulators.}
An iterative solver produces profiles
$\sigma^t = (x_1^t, \dots, x_n^t)$, $t = 1, \dots, T$. Fix nonnegative
weights $w_t \ge 0$ with $W_T \coloneqq \sum_{t \le T}w_t>0$ (we use
$w_t=t^p$, with $p=0$ uniform, $p=1$ linear and $p=2$ quadratic
\citep{tammelin2014solving,brown2019solving}), and let
$\rho_t = w_t / W_T$. Define the following quantities.
\begin{itemize}
  \item the \emph{weighted average strategy}
    $\xbar_i = \sum_t \rho_t x_i^t \in \cX_i$, a realization plan by
    convexity of $\cX_i$. Normalizing per infoset gives the behavioral
    profile $\sbar$, whose terminal distribution
    $d_{\sbar} = p_c \prod_i \xbar_i \eqqcolon \mubar$ equals that of
    $\xbar$ by realization equivalence. The behavioral profile and the
    realization plan are distinct objects whenever player $i$ acts at
    multiple infosets on a path, but they induce the same terminal
    distribution.
  \item the \emph{weighted empirical terminal distribution}
    $\dbar(z) = \sum_t \rho_t\, p_c(z) \prod_i x_i^t(z)$, i.e.\ the
    distribution of play when a round $t \sim \rho$ is drawn and all
    players jointly use $\sigma^t$;
  \item the \emph{weighted external regret}
    $\Rbar_i = \max_{x' \in \cX_i} \sum_t \rho_t
      \big[ u_i(x', \sigma^t_{-i}) - u_i(\sigma^t) \big]$;
  \item the \emph{correlation level}, measured by the Kullback--Leibler
    divergence
    $\eps \,\coloneqq\, \KL\!\big(\dbar \,\Vert\, \mubar\big)$ between the
    empirical joint terminal distribution and the independent-play
    distribution $\mubar = p_c \prod_i \xbar_i$ induced by the average
    realization plans. It is
    the weighted analogue of $\eps$-extensive-form marginalizability
    ($\eps$-EFM) in \citet{sychrovsky2026approximating}, finite since
    $\dbar \ll \mubar$.
\end{itemize}
\textbf{Counterfactual values and regrets.}
For an infoset $I \in \cI_i$ and action $a \in A(I)$, the
\emph{counterfactual action value} at iteration $t$ is
\begin{equation}
  \label{eq:cfv}
  v_i^t(I,a) \;=\; \sum_{z \in \cZ(I,a)} p_c(z)\, x_{-i}^t(z)\,
    \pi_i^t(z \mid Ia)\, u_i(z),
\end{equation}
where $\cZ(I,a) \subseteq \cZ$ collects the terminals reached through $I$
after taking $a$, $x_{-i}^t(z) = \prod_{j \ne i} x_j^t(z)$ is the
opponents' realization weight on the path to $z$, and
$\pi_i^t(z \mid Ia)$ is the product of player $i$'s behavioral
probabilities from $Ia$ to $z$. The counterfactual convention therefore
excludes own reach \emph{before} $I$ and includes own continuation
\emph{after} the action, as in
\citet[Theorem~3]{zinkevich2007regret}. The own reach
$\pi_i^{x'}(I) \in [0,1]$ of a realization plan $x'$ is the product of its
behavioral probabilities along the path to $I$. The \emph{instantaneous
counterfactual regret} of $a$ at $I$ against the current play is
$
  ir_i^t(I,a) \;=\; v_i^t(I,a)
  - \sum_{a' \in A(I)} \sigma^t(a' \mid I)\, v_i^t(I,a').
$

These are byproducts of a counterfactual-regret-minimization (CFR) style
loop \citep{zinkevich2007regret}. The accumulators $\xbar$ and
$\dbar$ and the cumulative counterfactual regrets are running sums updated
in $O(|\cZ|)$
per iteration, and $\Rbar_i$ is bounded online without a best response
(Lemma~\ref{lem:weighted-decomp}). Averaging destroys independence across
players, so $\dbar \ne \mubar$ in general, and that distinction separates
Nash from coarse correlated play \citep{anagnostides2022near}.

\textbf{Regret matching variants.}
A $+$ superscript denotes the regret-matching$^+$ step
\citep{tammelin2014solving}, which floors the accumulated regret at zero
before matching. We reserve CFR$^+$/PCFR$^+$/NPCFR$^+$ for the reference
study's nomenclature, where PCFR$^+$ is predictive CFR$^+$ and NPCFR$^+$
its meta-learned counterpart. Our solvers are RM$^+$, the core of CFR$^+$
\citep{hart2000simple,tammelin2014solving,tammelin2015solving}, PRM$^+$,
predictive RM$^+$ \citep{farina2021faster}, and DPRM$^+$, PRM$^+$ with the
analytic
decorrelation term of Appendix~\ref{app:impl}, all with simultaneous or
alternating updates \citep{burch2019revisiting} and weight exponent
$p \in \{0, 1, 2\}$. The nine solvers added in the second stage keep their
published names. DCFR \citep{brown2019solving}, DCFR$^+$ and PDCFR$^+$
\citep{xu2024minimizing}, HS-PCFR$^+$ and HS-DCFR \citep{zhang2026faster},
Smooth PRM$^+$, Stable PRM$^+$ and ExRM$^+$ \citep{farina2023regret}, and
IR-PRM$^+$ \citep{zhang2025scale} are training-free, as are the original
seven.

\section{The Oracle Gap in Trajectory-Minimum Evaluation}
\label{sec:oracle-gap}

\textbf{The protocol and why it matters.}
The NPCFR benchmark studied here reports the \emph{trajectory minimum}:
given a checkpoint grid $t_1 < t_2 < \dots < t_m$, the reported score of
a run is
\begin{equation}
  \label{eq:traj-min}
  g^\star \;=\; \min_{k \le m} \; \gap\big(\sbar^{\,t_k}\big),
\end{equation}
and a game counts as ``solved'' if $g^\star \le 10^{-5}$
\citep{sychrovsky2026approximating}. The original study uses checkpoints
at $2^k$ (about $19$ evaluations through $T=2^{18}$ on biased Leduc); our
experiments use a $128$-point dense grid for selection and diagnostics. Locating either grid's minimizer requires an exact gap
evaluation ($n$ best-response passes) at every checkpoint, so $g^\star$
is the output of a selection rule, a running argmin of the exact gap,
that is implementable online but is neither part of the method nor stated
with the result. Two consequences
follow. First, \emph{post-hoc selection}. The directly available outputs of a run
are the final average or an iterate chosen by an online rule. For
plus-family algorithms the trajectory of $\gap(\sbar^{\,t})$ is not
monotone (on the benchmark of Section~\ref{sec:experiments}, our
PRM$^+$ control routinely reaches its minimum and then rebounds by an order of
magnitude), so $g^\star$ can be far below the final average a run
returns. Obtaining it takes the same-grid exact selector of
Section~\ref{sec:experiments}. Second, \emph{misalignment}. The protocol compares trajectory
optima, which a run yields only by spending $m$ exact gap queries and which
its reported output, the final average, does not attain. Any conclusion of
the form ``method A solves $X\%$ of games'' silently credits every method
with those queries.

\textbf{What a certificate should do.}
We require the selection rule to be \emph{algorithmic}: an online quantity
$C_t$ computable from the run itself such that
(i)~\emph{soundness}: $C_t \ge \gap(\sbar^{\,t})$, so certifying
$C_t \le \eps$ proves the iterate is an $\eps$-Nash equilibrium;
(ii)~\emph{no best response against the reported profile}: $C^0$ reads
the regret accumulators and the terminal-distribution accumulators kept
alongside the updates, and $C^1$ adds one signed
backward pass per player with nonzero weight, so a checkpoint of either is a
fraction of one training
iteration; (iii)~\emph{tightness}: $C_t$ tracks
$\gap(\sbar^{\,t})$ within a small factor on the trajectories at hand, so that
$\arg\min_t C_t$ preserves the solved-instance decisions of the exact
selector at the thresholds tested; and
(iv)~\emph{comparability}: certificates of different runs are mutually
comparable, enabling portfolio-level selection. Section~\ref{sec:certificates}
constructs such a certificate and shows that requirement (iii) fails for the
natural Pinsker-based candidate, whose correlation term sits three orders
of magnitude above the true gap. It holds empirically for the exact
reformulation. The certificate tracks the gap within a median $1.27\times$
over the $127{,}849$ biased-Leduc checkpoints whose gap is above $10^{-12}$,
and within $1.4\times$ at the selected
iterate over the $36$ instances whose selected gap is above $10^{-12}$.
Soundness (i) is the theorem.

\section{Online Certificates from an Exact Decomposition}
\label{sec:certificates}

All statements below are made only at checkpoints with $W_T>0$; throughout,
a bound $B \ge \gap(\sbar)$ is \emph{sound}, and it is \emph{tight to a
factor $c$} when $B \le c\,\gap(\sbar)$. No multiplicative tightness holds
for all games (Appendix~\ref{app:no-universal-tightness}). Then $\sbar$, $\dbar$, $\mubar$, $\Rbar_i$, and
$\eps = \KL(\dbar \Vert \mubar)$ are the normalized weighted quantities of
Section~\ref{sec:prelim}, accumulated through $T$, and we suppress $T$ in
the notation.
Proofs not given inline are in Appendix~\ref{app:proofs}.
All soundness statements are in exact arithmetic. The experiments evaluate
the bounds in double precision without outward rounding, so a ``certified''
count means that the computed bound is below the threshold;
Section~\ref{sec:experiments} reports the smallest computed certificate-minus-gap
residual and the smallest threshold margin on each instance set.

The weighted external regret $\Rbar_i$ involves a maximization over
$\cX_i$ and is not observable without a best response; the
counterfactual-regret decomposition bounds it online.

\begin{lemma}[Weighted regret decomposition]
\label{lem:weighted-decomp}
Let $v_i^t(I,a)$ and $ir_i^t(I,a)$ be the counterfactual action value and
the instantaneous counterfactual regret of Section~\ref{sec:prelim}. Then
\begin{equation}
  \Rbar_i \;\le\; \Rhat_i \;\coloneqq\;
  \frac{1}{W_T} \sum_{I \in \cI_i}
    \Big[ \max_{a} \sum_{t \le T} w_t \, ir_i^t(I,a) \Big]^{\!+},
\end{equation}
where $[x]^+ \coloneqq \max\{x,0\}$.
\end{lemma}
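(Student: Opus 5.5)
The plan is to reproduce the classical CFR decomposition of \citet[Theorem~3]{zinkevich2007regret} with weights $w_t$ carried through. Since the weights enter every per-iteration quantity linearly, nothing new happens. The key observation is that the instantaneous external regret against a fixed deviation $x' \in \cX_i$ unrolls over player $i$'s infosets, and the inequality then follows by dropping the own-reach factors.

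First, fix $x'$ and write the weighted regret against $x'$ as $\frac{1}{W_T}\sum_t w_t[u_i(x',\sigma^t_{-i}) - u_i(\sigma^t)]$. Define the counterfactual state value $v_i^t(I) = \sum_{a}\sigma^t(a\mid I)v_i^t(I,a)$, and the analogous value $v_i^{t,x'}(I)$ when player $i$ plays $x'$ from $I$ onward. The latter uses $x'$'s behavioral probabilities below $I$, while the opponents' and chance's reach stay at round $t$. The central step is a telescoping identity proved by induction over the tree of player $i$'s infosets, from the leaves upward, using perfect recall:
\begin{equation*}
  u_i(x',\sigma^t_{-i}) - u_i(\sigma^t)
  \;=\; \sum_{I \in \cI_i} \pi_i^{x'}(I) \sum_{a} x'(a\mid I)\, ir_i^t(I,a).
\end{equation*}
The induction works as follows. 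At each infoset $I$, split $v_i^{t,x'}(I) - v_i^t(I)$ into two parts. The first is $\sum_a x'(a\mid I)[v_i^t(I,a) - v_i^t(I)] = \sum_a x'(a\mid I)\, ir_i^t(I,a)$. The second is $\sum_a x'(a\mid I)[v_i^{t,x'}(I,a) - v_i^t(I,a)]$. The second part decomposes over the child infosets $I'$ of player $i$ reached after $(I,a)$. Those child infosets inherit own reach $\pi_i^{x'}(I') = \pi_i^{x'}(I)\,x'(a\mid I)$, and opponent/chance reach is unaffected. At the root-level infosets (own reach $1$), the values sum to the utilities $u_i(x',\sigma^t_{-i})$ and $u_i(\sigma^t)$.

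Second, multiply by $w_t$, sum over $t$, and exchange the sums. The key point is that $\pi_i^{x'}(I)$ and $x'(a\mid I)$ do not depend on $t$. This gives
\begin{equation*}
  \sum_t w_t\big[u_i(x',\sigma^t_{-i}) - u_i(\sigma^t)\big]
  \;=\; \sum_{I} \pi_i^{x'}(I) \sum_a x'(a\mid I) \sum_t w_t\, ir_i^t(I,a).
\end{equation*}
Each inner convex combination over $a$ is at most $\max_a \sum_t w_t ir_i^t(I,a)$, which in turn is at most its positive part. Since $\pi_i^{x'}(I)\in[0,1]$ and the positive part is nonnegative, we may replace the weight $\pi_i^{x'}(I)$ by $1$. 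Dividing by $W_T>0$ and maximizing over $x'$ yields $\Rbar_i \le \Rhat_i$. The max over the polytope is attained, and it is harmless because the bound is uniform in $x'$.

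The main obstacle is the first step: checking the telescoping identity carefully under the paper's counterfactual convention. That convention excludes own reach before $I$ and includes own continuation after $a$ under $\sigma^t$, not under $x'$. The correction terms $v_i^{t,x'}(I,a)-v_i^t(I,a)$ must therefore be exactly absorbed by the children's regret terms. This holds because of perfect recall: every terminal below $(I,a)$ passes through a unique next own infoset, or through none, in which case the correction is zero. I would handle this by writing all values in sequence form as sums over $z\in\cZ(I,a)$ and checking the recursion terminal by terminal.
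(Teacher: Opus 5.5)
Your proposal is correct and follows the paper's proof. Both use the per-iteration laminar identity, sum it with the $t$-independent factors $\pi_i^{x'}(I)\,x'(a\mid I)$, and then bound by the maximum over actions and its positive part, dropping the own-reach factor only afterwards. The one minor difference is that you verify the identity by a leaf-to-root induction valid for any behavioral $x'$, whereas the paper checks it on pure plans and extends it to all of $\cX_i$ by affinity over the extreme points; either route works.
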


The proof in Appendix~\ref{app:proofs} applies the laminar expansion of
\citet[Theorem~3]{zinkevich2007regret} per iteration, sums with the weights
$w_t$, and only then takes the per-infoset maximum and positive part.
$\Rhat_i$ uses a raw signed accumulator $Q_i(I,a)=\sum_{t\le T}w_t
ir_i^t(I,a)$ of the regrets the update evaluates, kept apart from the solver's
clipped state; the lemma holds for whatever opponent sequence those regrets
were measured against.

\begin{lemma}[Distribution exchange]
\label{lem:exchange}
For any $f \colon \cZ \to [-M, M]$ and terminal measures $p, q$,
$\;|\ip{p - q}{f}| \le M \, \lVert p - q \rVert_1$; if $p, q$ are
probability distributions then
$\lVert p - q \rVert_1 \le \sqrt{2 \, \KL(p \Vert q)}$ (Pinsker; \citet{cover1991elements}).
\end{lemma}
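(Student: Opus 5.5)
The statement has two parts: a bound $|\ip{p-q}{f}|\le M\lVert p-q\rVert_1$ for arbitrary terminal measures, and Pinsker's inequality for probability distributions. I would prove them separately, treating the first as an elementary H\"older-type estimate and the second as the cited classical result. The only care needed is to keep the hypotheses straight: the first part holds for arbitrary (even signed or unnormalized) measures on the finite set $\cZ$, while Pinsker requires $p,q$ to be probability distributions.

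For the first inequality, write $\ip{p-q}{f}=\sum_{z\in\cZ}(p(z)-q(z))f(z)$, which is a finite sum. The triangle inequality and $|f(z)|\le M$ give
\begin{equation*}
  |\ip{p-q}{f}| \;\le\; \sum_{z}|p(z)-q(z)|\,|f(z)| \;\le\; M\sum_z|p(z)-q(z)| \;=\; M\lVert p-q\rVert_1 .
\end{equation*}
This is H\"older's inequality with the pair $(\ell_1,\ell_\infty)$, and it uses nothing about $p,q$ beyond finiteness of $\cZ$.

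For the second inequality I would invoke Pinsker's inequality as stated in \citet{cover1991elements}: $\KL(p\Vert q)\ge \tfrac12\lVert p-q\rVert_1^2$ in nats. Rearranging gives $\lVert p-q\rVert_1\le\sqrt{2\KL(p\Vert q)}$. If $p\not\ll q$, then $\KL=\infty$ and the bound is trivial. In the intended application $\dbar\ll\mubar$, so $\KL$ is finite there.

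For a self-contained argument I would use the standard two-point reduction. Let $S=\{z:p(z)\ge q(z)\}$, and set $a=p(S)$, $b=q(S)$, so that $\lVert p-q\rVert_1=2(a-b)$. The log-sum (data-processing) inequality gives $\KL(p\Vert q)\ge a\log\frac ab+(1-a)\log\frac{1-a}{1-b}$. It then remains to verify the binary inequality $g(b)\coloneqq a\log\frac ab+(1-a)\log\frac{1-a}{1-b}-2(a-b)^2\ge0$ for $b\le a$. This follows because $g(a)=0$ and $g'(b)=-\frac{a-b}{b(1-b)}+4(a-b)\le0$, using $b(1-b)\le\tfrac14$. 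The main obstacle is only this calculus step, and it is routine.

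Combining the two parts yields $|\ip{p-q}{f}|\le M\sqrt{2\KL(p\Vert q)}$ for probability distributions. This is the form used later with $p=\dbar$ and $q=\mubar$.
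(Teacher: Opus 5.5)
Your proof is correct and takes the paper's route. The paper gives no separate proof: it treats the first inequality as the immediate $\ell_1$--$\ell_\infty$ H\"older bound, which is your triangle-inequality computation, and the second as the cited Pinsker inequality in nats. Your optional self-contained argument is the standard two-point reduction of \citet{cover1991elements}, and its calculus step $g'(b)=(a-b)\bigl(4-\tfrac{1}{b(1-b)}\bigr)\le 0$ is correct.
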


\textbf{The Pinsker certificate and its bottleneck.}
\label{sec:pinsker-cert}

The natural certificate is the weighted analogue of the
$\eps$-EFM-based bound of \citet{sychrovsky2026approximating}. It combines
the two lemmas.

\begin{proposition}[Pinsker certificate, $n = 2$]
\label{prop:kl-cert}
For two-player games and any nonnegative weights,
\begin{equation}
  \gap(\sbar) \;\le\; \max_i \Rbar_i \;+\; M\sqrt{2\eps}.
\end{equation}
\end{proposition}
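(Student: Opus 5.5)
Fix a player $i$ and a deviation $x'\in\cX_i$. Write the deviation gain against the average profile as
\[
u_i(x',\xbar_{-i})-u_i(\sbar)
\]
and compare it, term by term, with the weighted regret expression
\[
\sum_t\rho_t\big[u_i(x',\sigma^t_{-i})-u_i(\sigma^t)\big].
\]
The plan is to show two things. First, the deviation term is exactly linear in the opponent's plan, so averaging costs nothing there. Second, only the realized-value term differs, and that difference is a pairing of $\dbar-\mubar$ against $u_i$.

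First, the deviation part. In a two-player game, $u_i(x',x_{-i})=\sum_z p_c(z)\,x'(z)\,x_{-i}(z)\,u_i(z)$ is linear in the single opponent plan $x_{-i}$. Since $\xbar_{-i}=\sum_t\rho_t x^t_{-i}$, this gives
\[
u_i(x',\xbar_{-i})=\sum_t\rho_t\,u_i(x',\sigma^t_{-i}).
\]
This is exactly where $n=2$ is used: with more players, $\prod_{j\ne i}\xbar_j$ is not the average of the products. Second, the realized part. By realization equivalence $u_i(\sbar)=\ip{\mubar}{u_i}$, while $\sum_t\rho_t u_i(\sigma^t)=\ip{\dbar}{u_i}$ by the definition of $\dbar$. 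Combining the two,
\[
u_i(x',\xbar_{-i})-u_i(\sbar)=\sum_t\rho_t\big[u_i(x',\sigma^t_{-i})-u_i(\sigma^t)\big]+\ip{\dbar-\mubar}{u_i}.
\]

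Next, take the maximum over $x'\in\cX_i$. The first term becomes $\Rbar_i$ by definition, and the second term does not depend on $x'$. Applying Lemma~\ref{lem:exchange} with $f=u_i$, $p=\dbar$ and $q=\mubar$ bounds the second term by $M\lVert\dbar-\mubar\rVert_1\le M\sqrt{2\,\KL(\dbar\Vert\mubar)}=M\sqrt{2\eps}$. The Pinsker step needs $\dbar$ and $\mubar$ to be probability distributions on $\cZ$. This holds because each $d_{\sigma^t}$ is one, so their convex combination $\dbar$ is too, and $\mubar$ is the terminal distribution of the profile $\sbar$. Finally, take the maximum over $i$ and use $\max_i(a_i+b)=\max_i a_i+b$ with the common bound $b=M\sqrt{2\eps}$ to get the statement.

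The only delicate point is the bilinearity step. Two things must be checked there: that the $x'(z)$ factor is fixed across rounds, and that the average realization plan, not the behavioral profile, is what enters the linear expression. The latter is legitimate because $u_i(x',\cdot)$ depends on the opponent only through her realization weights $x_{-i}(z)$, and these average correctly. The remaining steps are routine.
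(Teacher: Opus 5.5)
Your proof is correct and follows the paper's own argument: bilinearity, valid because $n=2$, makes the comparator term exact, and the only relaxation is on the value term $\ip{\dbar-\mubar}{u_i}$, via Lemma~\ref{lem:exchange} and Pinsker. The equality you derive before relaxing is the identity of Theorem~\ref{thm:tight}, which the paper states separately, while its proof of this proposition applies the inequality to the value term directly.
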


The proof in Appendix~\ref{app:proofs} handles the comparator term
exactly by bilinearity and relaxes only the value term through
Lemma~\ref{lem:exchange}. Its
deployable form replaces $\Rbar_i$ by $\Rhat_i$
(Lemma~\ref{lem:weighted-decomp}).

Proposition~\ref{prop:kl-cert} is sound but overestimates the true NashGap
by roughly three orders of magnitude on our benchmark (median
$1.3\times10^3\times$). The regret term is $\sim\!1\times$ tight; the
distribution-exchange relaxation dominates, since late in training $\eps \approx 10^{-9}$
yet $\sqrt{2\eps} \approx 10^{-4.5}$, orders above a gap of
$10^{-6}$--$10^{-8}$.

\textbf{An exact decomposition for two players.}
\label{sec:tight-cert}

The proof of Proposition~\ref{prop:kl-cert} relaxes only the value term.
That term needs no relaxation either: it is a signed, directly measurable
quantity, and saying so turns the inequality into an identity.

\begin{theorem}[Exact NashGap decomposition, $n = 2$]
\label{thm:tight}
For two-player games and any nonnegative weights,
\begin{equation}
  \label{eq:tight-identity}
  \gap(\sbar) \;=\; \max_i \big[\Rbar_i + \Delta_i\big],
  \qquad
  \Delta_i \;\coloneqq\; \ip{\dbar}{u_i} - u_i(\sbar).
\end{equation}
where $\Delta_i$ is the signed value correction, the difference between
player $i$'s weighted average
realized value and the value of the average strategy profile. It may be
negative.
\end{theorem}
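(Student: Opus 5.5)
The plan is to prove the identity player by player: for each $i$ I show
\[
\max_{x'\in\cX_i} u_i(x',\sbar_{-i}) - u_i(\sbar) \;=\; \Rbar_i + \Delta_i ,
\]
and then take the maximum over $i$ on both sides. By definition, $\gap(\sbar)$ is exactly the maximum over $i$ of the left-hand side.

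First I rewrite the weighted external regret. Since $u_i(\sigma^t)=\ip{d_{\sigma^t}}{u_i}$ and $\dbar=\sum_t\rho_t d_{\sigma^t}$, the realized part collapses:
\[
\Rbar_i \;=\; \max_{x'\in\cX_i}\Big[\sum_t \rho_t\, u_i(x',\sigma^t_{-i})\Big] \;-\; \ip{\dbar}{u_i}.
\]
This step uses only linearity of the inner product and the fact that the subtracted term does not depend on $x'$, so it comes out of the max.

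Second I handle the comparator term, where two players are essential. For fixed $x'$, $u_i(x',x_{-i})=\sum_z p_c(z)\,x'(z)\,x_{-i}(z)\,u_i(z)$. With $n=2$, $x_{-i}=x_j$ is a single realization plan and enters linearly. Hence
\[
\sum_t\rho_t\, u_i(x',x_j^t) \;=\; u_i\big(x',\textstyle\sum_t\rho_t x_j^t\big) \;=\; u_i(x',\xbar_j).
\]
By realization equivalence this equals $u_i(x',\sbar_{-i})$, because the behavioral profile $\sbar_j$ and the plan $\xbar_j$ induce the same terminal weights $\xbar_j(z)$. Taking the max over $x'$ gives
\[
\Rbar_i \;=\; \max_{x'} u_i(x',\sbar_{-i}) - \ip{\dbar}{u_i}.
\]

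Finally I add $\Delta_i=\ip{\dbar}{u_i}-u_i(\sbar)$. The $\ip{\dbar}{u_i}$ terms cancel and leave the player-$i$ deviation gain exactly, which gives the displayed identity. The value $u_i(\sbar)=\ip{\mubar}{u_i}$ is taken with $\mubar=p_c\,\xbar_1\xbar_2$, again by realization equivalence.

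The only delicate point is the bilinearity step. Nothing needs bounding, but I must check that:
\begin{itemize}
\item averaging over $t$ commutes with the max only after it has been pushed inside onto the opponent's plan, which the rewrite in the first step arranges;
\item the sequence-form weight $x_{-i}(z)$ for the averaged opponent is literally $\xbar_j(z)$, which holds because realization plans, not behavioral strategies, are averaged.
\end{itemize}
With three or more players $\prod_{j\ne i}x_j^t$ is not linear in the average, and this is precisely where the identity fails. I would note that as a remark, since it motivates the comparator terms introduced later.
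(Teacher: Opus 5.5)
Your proposal is correct and follows the paper's proof essentially step for step. Like the paper, you use $\sum_t \rho_t u_i(\sigma^t)=\ip{\dbar}{u_i}$, identify $\sum_t\rho_t u_i(x',x_{-i}^t)$ with $u_i(x',\xbar_{-i})$ pointwise in $x'$ by bilinearity (valid only for $n=2$), and take $\max_i$ at the end. Your explicit appeal to realization equivalence between $\sbar_{-i}$ and $\xbar_{-i}$ is a detail the paper leaves to its preliminaries.
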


\begin{proof}
Fix $i$ and add and subtract $\sum_t \rho_t u_i(\sigma^t)$: the comparator
term is then handled exactly by bilinearity (valid because $n=2$), the
value term satisfies $\sum_t \rho_t u_i(\sigma^t) = \ip{\dbar}{u_i}$, and
taking $\max_i$ yields \eqref{eq:tight-identity}. The algebra is in
Appendix~\ref{app:proofs}.
\end{proof}

\begin{corollary}[Online certificate, simultaneous updates, $n=2$]
\label{cor:online-cert}
For $n=2$ define $C^{\mathrm{tight}} \coloneqq \max_i \big[ \Rhat_i + \Delta_i \big]$.
When the raw signed regret accumulators, averages, and terminal accumulators
refer to the same profile sequence, $C^{\mathrm{tight}} \ge \gap(\sbar)$.
Every ingredient is online: $\Rhat_i$ from the raw weighted accumulator
of Lemma~\ref{lem:weighted-decomp}, an $O(\sum_i |\cI_i| A)$ scan;
$\ip{\dbar}{u_i}$ from the running realized-value sum; and
$u_i(\sbar)=\ip{\mubar}{u_i}$ from the average realization plans, $O(|\cZ|)$
per checkpoint and no best response. The only slack relative to
\eqref{eq:tight-identity} is $\Rhat_i-\Rbar_i$.
\end{corollary}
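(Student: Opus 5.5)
The plan is to combine Theorem~\ref{thm:tight} with Lemma~\ref{lem:weighted-decomp}, and then check that each ingredient is computable online. Fix a player $i$. Theorem~\ref{thm:tight} gives $\gap(\sbar)=\max_i[\Rbar_i+\Delta_i]$. This identity holds for any nonnegative weights and any profile sequence $\sigma^1,\dots,\sigma^T$, provided that $\sbar$, $\dbar$ and $\Rbar_i$ are all built from that same sequence. The hypothesis that the raw signed accumulators, the averages, and the terminal accumulators \emph{refer to the same profile sequence} is exactly what lets us apply the identity. It also lets us apply Lemma~\ref{lem:weighted-decomp} to the same $\Rbar_i$ that appears in the identity. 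That lemma bounds $\Rbar_i$ by the regret of whatever opponent sequence the counterfactual regrets were measured against, and here that sequence is the one averaged.

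\textbf{Key steps.} First, invoke Lemma~\ref{lem:weighted-decomp} to get $\Rbar_i\le\Rhat_i$ for each $i$. Adding $\Delta_i$ to both sides, which is legitimate whatever the sign of $\Delta_i$, gives $\Rbar_i+\Delta_i\le\Rhat_i+\Delta_i$. Second, take $\max_i$; the max is monotone, so
\[
  \gap(\sbar)=\max_i\big[\Rbar_i+\Delta_i\big]\le\max_i\big[\Rhat_i+\Delta_i\big]=C^{\mathrm{tight}}.
\]
Third, the slack is exactly $\Rhat_i-\Rbar_i\ge 0$ in each coordinate before the max is taken, which is the last sentence of the corollary.

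\textbf{Online computability and cost.} For $\Rhat_i$, the signed accumulator $Q_i(I,a)=\sum_t w_t\,ir_i^t(I,a)$ is updated each iteration. At a checkpoint, one scan takes the per-infoset max over actions and the positive part, sums these, and divides by $W_T$, which costs $O(\sum_i|\cI_i|A)$. For the realized value, $\sum_t\rho_t u_i(\sigma^t)=\ip{\dbar}{u_i}$ follows by linearity from the definition of $\dbar$, so either a running weighted sum of $u_i(\sigma^t)$ or the $\dbar$ accumulator suffices. For the average-profile value, $u_i(\sbar)=\ip{\mubar}{u_i}$ with $\mubar(z)=p_c(z)\xbar_1(z)\xbar_2(z)$, by realization equivalence. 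Forming $\mubar$ from the accumulated average realization plans takes one pass over $\cZ$. None of these steps maximizes over $\cX_i$ against $\sbar$, so no best response is needed.

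\textbf{Main obstacle.} The only delicate point is the alignment hypothesis. One must make sure that the regrets fed to $Q_i$ are evaluated at the same $\sigma^t$ whose $x_i^t$ enters $\xbar_i$ and whose terminal distribution enters $\dbar$. Otherwise the $\Rbar_i$ bounded by the lemma is not the $\Rbar_i$ in the identity. Under simultaneous updates this holds by construction. I would state it explicitly and note that alternation breaks it, which motivates Proposition~\ref{prop:sequence}.
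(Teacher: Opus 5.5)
Your proof is correct and follows the paper's route. You take the exact identity of Theorem~\ref{thm:tight}, apply $\Rbar_i \le \Rhat_i$ from Lemma~\ref{lem:weighted-decomp} under the same-sequence hypothesis, use monotonicity of the maximum, and read the online costs off the same accumulators; your point that alternation breaks this alignment is the paper's motivation for Proposition~\ref{prop:sequence}.
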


Lemma~\ref{lem:exchange} gives $\Delta_i = \ip{\dbar - \mubar}{u_i} \le
M \lVert \dbar - \mubar \rVert_1 \le M\sqrt{2\eps}$.
So $C^{\mathrm{tight}}$ is no larger than the distribution-exchange
certificate termwise whenever both sides use the same regret proxy $\Rhat_i$,
$C^{\mathrm{tight}} = \max_i[\Rhat_i + \Delta_i] \le \max_i[\Rhat_i + M\sqrt{2\eps}]$.
For simultaneous two-player runs this is the deployable certificate. The
alternating updates used by the strongest solvers need one further comparator
term, supplied by Proposition~\ref{prop:sequence} below, a signed linear
functional over the terminal weights that measures the mismatch between the
profile a regret was evaluated on and the profile the averages record.

\textbf{The coalition term for three or more players.}
\label{sec:coalition}

For $n \ge 3$ the comparator step of Theorem~\ref{thm:tight} fails.
Indeed $\prod_{j \ne i} \xbar_j \ne \sum_t \rho_t \prod_{j \ne i} x_j^t$ in
general, because averaging destroys independence \emph{within the opposing
coalition}. The deviation is again directly measurable. Define the
\emph{coalition imbalance}
\begin{equation}
  D_{-i} \;\coloneqq\; \sum_{z \in \cZ} p_c(z)\,
  \Big| \bar{c}_{-i}(z) \;-\; \prod_{j \ne i} \xbar_j(z) \Big|,
\end{equation}
where $\bar{c}_{-i}(z) \coloneqq \sum_t \rho_t \prod_{j \ne i} x_j^t(z)$
is maintained online by accumulating the coalition products (cost
$O(n|\cZ|)$ per iteration).

\begin{proposition}[Coalition certificate, general $n$]
\label{prop:coalition}
For any number of players and nonnegative weights,
\begin{equation}
  \gap(\sbar) \;\le\; \max_i
  \Big[\, \Rbar_i + \Delta_i + M \, D_{-i} \,\Big],
\end{equation}
and for $n = 2$, $D_{-i} \equiv 0$, recovering the bound of
Theorem~\ref{thm:tight}.
\end{proposition}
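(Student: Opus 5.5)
The plan is to fix a player $i$ and a deviation $x' \in \cX_i$, and to reproduce the add-and-subtract argument of Theorem~\ref{thm:tight}. The one difference is that the comparator step is no longer an identity. We write
\begin{equation*}
  u_i(x',\sbar_{-i}) - u_i(\sbar)
  \;=\; \Big[u_i(x',\sbar_{-i}) - \textstyle\sum_t \rho_t u_i(x',\sigma^t_{-i})\Big]
  + \Big[\textstyle\sum_t \rho_t \big(u_i(x',\sigma^t_{-i}) - u_i(\sigma^t)\big)\Big]
  + \Big[\textstyle\sum_t \rho_t u_i(\sigma^t) - u_i(\sbar)\Big].
\end{equation*}
Each bracket is then handled separately, after which we take the maximum over $x'$ and over $i$.

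\textbf{Second and third brackets.} The middle bracket is at most $\Rbar_i$ by the definition of weighted external regret, for every $x'$. The last bracket equals $\ip{\dbar}{u_i} - u_i(\sbar) = \Delta_i$. This holds because $\sum_t \rho_t u_i(\sigma^t) = \sum_t \rho_t \ip{d_{\sigma^t}}{u_i} = \ip{\dbar}{u_i}$, and because $u_i(\sbar) = \ip{\mubar}{u_i}$ by realization equivalence. Neither step uses $n=2$.

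\textbf{First bracket (the comparator term).} This is the step that needs care. By the terminal-distribution formula, $u_i(x',\sigma_{-i}) = \sum_z p_c(z)\, x'(z) \prod_{j\ne i} x_j(z)\, u_i(z)$. This expression is linear in the coalition product $\prod_{j\ne i} x_j(z)$ for fixed $x'$. Averaging over $t$ therefore gives
\begin{equation*}
  \textstyle\sum_t \rho_t u_i(x',\sigma^t_{-i}) = \sum_z p_c(z)\, x'(z)\, \bar c_{-i}(z)\, u_i(z),
\end{equation*}
and
\begin{equation*}
  u_i(x',\sbar_{-i}) = \textstyle\sum_z p_c(z)\, x'(z) \prod_{j\ne i}\xbar_j(z)\, u_i(z),
\end{equation*}
again using realization equivalence for the opponents' behavioral average. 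The first bracket is the difference of these two sums. We bound it in absolute value by $\sum_z p_c(z)\, x'(z)\, |u_i(z)|\, \big|\prod_{j\ne i}\xbar_j(z) - \bar c_{-i}(z)\big|$. Using $0 \le x'(z) \le 1$ (realization weights lie in $[0,1]$) and $|u_i(z)| \le M$, this is at most $M D_{-i}$, uniformly in $x'$. This is Lemma~\ref{lem:exchange} applied to the measures $p_c\bar c_{-i}$ and $p_c\prod_{j\ne i}\xbar_j$ with test function $f = x' u_i$, which takes values in $[-M,M]$. The only subtlety is confirming that $x'(z)$ is the realization weight of a valid plan, so it is bounded by one; no Pinsker step is needed.

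\textbf{Conclusion.} Summing the three bounds gives $u_i(x',\sbar_{-i}) - u_i(\sbar) \le \Rbar_i + \Delta_i + M D_{-i}$ for every $x'$. Taking the supremum over $x' \in \cX_i$ and then the maximum over $i$ yields the stated bound on $\gap(\sbar)$. For $n=2$ the coalition consists of the single opponent $j$. Then $\bar c_{-i}(z) = \sum_t \rho_t x_j^t(z) = \xbar_j(z)$ by the definition of the weighted average realization plan, so $D_{-i}=0$ and the bound reduces to the one in Theorem~\ref{thm:tight}.
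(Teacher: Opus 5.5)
Your proposal is correct and follows the paper's proof. The paper also bounds the comparator discrepancy $\bigl|u_i(x',\xbar_{-i}) - \sum_t \rho_t u_i(x', x^t_{-i})\bigr|$ by $M D_{-i}$ uniformly in $x'$, using $x'(z)\in[0,1]$ and $|u_i|\le M$; it reuses the $n$-independent value identity of Theorem~\ref{thm:tight} for $\Delta_i$, bounds the remaining regret term by $\Rbar_i$, and notes that $\bar c_{-i}=\xbar_j$ when $n=2$.
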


The deployable coalition certificate is
\begin{equation}
 C^{\mathrm{coal}} \coloneqq \max_i
 [\Rhat_i+\Delta_i+M D_{-i}],
\end{equation}
sound when the raw signed regret, average-strategy, empirical-terminal,
and coalition accumulators all describe the same profile sequence. The
$\Rbar_i$ form in the proposition is a theoretical bound, and the
certificate uses the measured signal.

The correlation level $\eps$ is structurally blind to $D_{-i}$, which is not a
function of $\eps$. The certificate therefore measures
$D_{-i}$ directly. Remark~\ref{rem:eps-blind} and
Appendix~\ref{app:proofs} give the details.

\begin{remark}[$\eps$ is blind to the coalition term]
\label{rem:eps-blind}
A bound on $D_{-i}$ that vanishes with $\eps = \KL(\dbar \Vert \mubar)$
cannot follow from $\eps$ alone: at terminals $z$ with $\xbar_i(z)=0$,
both $\dbar(z)$ and $\mubar(z)$ vanish whatever the opposing coalition
does there, so two configurations differing only in the coalition's
correlation at such $z$ share the same $\eps$ and differ in $D_{-i}$, while
a deviation $x'$ can still reach $z$. The coalition term has to be measured
directly, since the correlation level cannot bound it.
Appendix~\ref{app:eps-blind-game} gives a three-player game with $\eps=0$,
all regrets $0$ and $\gap(\sbar)=\tfrac12$.
\end{remark}

The factor $M$ in Proposition~\ref{prop:coalition} is the only remaining
relaxation of the comparator term, and it can be removed exactly.

\begin{proposition}[Exact coalition comparator]
\label{prop:coalition-lbr}
Define the signed imbalance weight
$w_i(z) \coloneqq p_c(z)\, u_i(z)
 \big( \prod_{j \ne i} \xbar_j(z) - \bar{c}_{-i}(z) \big)$
and $L_{-i} \coloneqq \max_{x' \in \cX_i} \ip{x'}{w_i}$. Then for any
number of players and nonnegative weights,
\begin{equation}
\begin{aligned}
  \gap(\sbar) &\;\le\; \max_i
  \Big[\, \Rbar_i + \Delta_i + L_{-i} \,\Big], \\
  L_{-i} &\;\le\; M \, D_{-i}.
\end{aligned}
\end{equation}
and $L_{-i}$ can be negative.
\end{proposition}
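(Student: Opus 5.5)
We fix a player $i$ and split player $i$'s deviation gain against the average opponents into the weighted-regret comparator, the value correction and a residual. The residual is exactly the linear functional $\ip{x'}{w_i}$. Write $x'(z)$ for the realization weight of a deviation $x' \in \cX_i$ at $z$. By multilinearity,
\begin{equation*}
  u_i(x',\sbar_{-i}) \;=\; \sum_{z} p_c(z)\, x'(z) \prod_{j\ne i}\xbar_j(z)\, u_i(z),
\end{equation*}
and
\begin{equation*}
  \sum_t \rho_t\, u_i(x',\sigma^t_{-i}) \;=\; \sum_z p_c(z)\, x'(z)\, \bar{c}_{-i}(z)\, u_i(z).
\end{equation*}
Subtracting the second from the first gives the identity
\begin{equation*}
  u_i(x',\sbar_{-i}) \;=\; \sum_t \rho_t\, u_i(x',\sigma^t_{-i}) + \ip{x'}{w_i}.
\end{equation*}
Here $w_i$ is the signed imbalance weight of the statement, and we read $\ip{x'}{w_i}$ as the sum over terminals of $x'(z)w_i(z)$.

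Next we add and subtract $\sum_t\rho_t u_i(\sigma^t)$ and use $\sum_t\rho_t u_i(\sigma^t)=\ip{\dbar}{u_i}$, which holds by the definition of $\dbar$, exactly as in the proof of Theorem~\ref{thm:tight}. This gives
\begin{equation*}
  u_i(x',\sbar_{-i}) - u_i(\sbar)
  \;=\; \sum_t \rho_t\big[u_i(x',\sigma^t_{-i}) - u_i(\sigma^t)\big]
  + \Delta_i + \ip{x'}{w_i}.
\end{equation*}
We then maximize over $x'\in\cX_i$ using $\max(f+g)\le \max f+\max g$. The first bracket is at most $\Rbar_i$, the middle term is the constant $\Delta_i$, and the last term is at most $L_{-i}$. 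Taking $\max_i$ gives the first inequality.

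For $L_{-i}\le M D_{-i}$, note that every realization plan satisfies $x'(z)\in[0,1]$ and $|u_i(z)|\le M$. Hence for every $x'$,
\begin{equation*}
  \ip{x'}{w_i} \;\le\; \sum_z p_c(z)\,|u_i(z)|\,\Big|\prod_{j\ne i}\xbar_j(z)-\bar{c}_{-i}(z)\Big| \;\le\; M\,D_{-i},
\end{equation*}
and maximizing over $x'$ preserves this. That $L_{-i}$ can be negative needs only an example. Take a game in which player $i$ has no decisions, so that $x'\equiv 1$ is the only plan. Make the coalition average product exceed $\bar{c}_{-i}$ at a terminal with negative $u_i$ and fall below it at one with positive $u_i$; then $\ip{x'}{w_i}<0$. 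Any such small instance suffices, for example two opponents anti-correlated across two rounds, possibly placed inside the construction of Appendix~\ref{app:eps-blind-game}.

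The computation itself is routine. The step needing care is the justification of the deviation identity. We must check that $x'(z)$ factors out of the time average, which holds because the deviation is fixed across rounds, so that the only non-linearity is inside the coalition product. We must also check that $\ip{x'}{w_i}$ is a linear function of the sequence-form plan $x'$. Linearity is what makes $L_{-i}$ a single backward pass, maximized by a best response to the signed leaf payoffs $w_i(z)$. For the inequality itself, however, only the pointwise bound and the split of the maximum are used.
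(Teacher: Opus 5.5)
Your proof is correct and is essentially the paper's argument: the multilinear identity $u_i(x',\xbar_{-i})-\sum_t\rho_t u_i(x',x^t_{-i})=\ip{x'}{w_i}$, then adding and subtracting $\sum_t\rho_t u_i(\sigma^t)$, splitting the maximum, and the pointwise bound from $x'(z)\in[0,1]$ and $|u_i(z)|\le M$. Your negativity example (player $i$ with no decision, anti-correlated opponents) is a valid concrete instance of the paper's sufficient condition, but drop the aside about embedding it in the game of Appendix~\ref{app:eps-blind-game}: there player $1$ can play Out, whose terminal has $w_1=0$, so $L_{-1}\ge 0$ in that game.
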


$L_{-i}$ is one backward pass per player over the logged imbalance weight
$w_i$, a signed linear functional maximized over $\cX_i$. It also
attributes the zero-best-response certificate's slack at the certificate-maximizing
player $i$, $C^{\mathrm{coal}} - \gap(\sbar) = (\Rhat_i - \Rbar_i) + (M D_{-i} - L_{-i})
+ \xi_i$, where the residual $\xi_i$ is the slack of splitting the maximum
over $x'$ in the proof, less $\gap(\sbar) - \mathrm{gap}_i$. On three-player
games the comparator relaxation
$M D_{-i} - L_{-i}$ accounts for $96$--$99\%$ of that slack
(Section~\ref{sec:exp-3p}).

\textbf{Any update order.}
\label{sec:update-order}
The statements so far measure player $i$'s regret against the profile
sequence $\sigma^t$ that the averages use. Alternating updates
\citep{burch2019revisiting}, which the strongest solvers rely on, evaluate
player $i$ after players $1,\dots,i-1$ have moved, so its regrets refer to
a partially updated opponent and the $\Rhat_i$ read off them need not
dominate $\Rbar_i$. A second regret pass per iteration against the
start-of-round profile is the usual remedy. The mismatch itself is one
more measurable comparator term, so no second pass is needed. Write
$y^{(i),t}_{-i}$ for the opponent
profile against which $ir_i^t$ was computed ($y^{(i),t}_j = x_j^{t+1}$ for
$j<i$ and $x_j^t$ for $j>i$ under alternation; $x^t_{-i}$ under
simultaneous updates) and accumulate, when that regret is computed,
\begin{equation}
  \dbar^{(i)}(z) \coloneqq \sum_t \rho_t\, p_c(z)\, x_i^t(z)
    \prod_{j \ne i} y^{(i),t}_j(z), \qquad
  \bar c^{(i)}_{-i}(z) \coloneqq \sum_t \rho_t \prod_{j \ne i} y^{(i),t}_j(z).
\end{equation}

\begin{proposition}[Certificate along the evaluation sequence]
\label{prop:sequence}
For any number of players, nonnegative weights, and evaluation sequences,
with $\Rhat_i$ the bound of Lemma~\ref{lem:weighted-decomp} applied to the
regrets $ir_i^t$ measured at $(x_i^t, y^{(i),t}_{-i})$,
\begin{equation}
\label{eq:seq-cert}
\begin{aligned}
  \gap(\sbar) &\;\le\; \max_i \big[\, \Rhat_i + \Delta^{(i)}_i
    + L^{(i)}_{-i} \,\big], \qquad
  \Delta^{(i)}_i \coloneqq \ip{\dbar^{(i)}}{u_i} - u_i(\sbar), \\
  w^{(i)}_i(z) &\coloneqq p_c(z)\, u_i(z)
    \Big( \prod_{j \ne i} \xbar_j(z) - \bar c^{(i)}_{-i}(z) \Big), \qquad
  L^{(i)}_{-i} \coloneqq \max_{x' \in \cX_i} \ip{x'}{w^{(i)}_i},
\end{aligned}
\end{equation}
and $L^{(i)}_{-i} \le \lVert w^{(i)}_i \rVert_1 \le M D^{(i)}_{-i}$ with
$D^{(i)}_{-i} \coloneqq \sum_z p_c(z)\,|\prod_{j\ne i}\xbar_j(z) -
\bar c^{(i)}_{-i}(z)|$.
\end{proposition}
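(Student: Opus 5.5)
Fix a player $i$ and a deviation $x'\in\cX_i$. The plan is to bound $\mathrm{gap}_i=\max_{x'} u_i(x',\sbar_{-i})-u_i(\sbar)$ by the bracket in \eqref{eq:seq-cert}, then take $\max_i$. The three ingredients are linearity of $u_i(x',\cdot)$ in the terminal weights, Lemma~\ref{lem:weighted-decomp} applied to the regrets actually measured, and a direct accounting of the mismatch between the averages and the evaluation sequence.

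\textbf{Splitting the deviation value.} Write the deviation value as a linear functional of the coalition weight,
$u_i(x',\sbar_{-i})=\sum_z x'(z)\,p_c(z)\,u_i(z)\prod_{j\ne i}\xbar_j(z)$,
using realization equivalence so that $\sbar_{-i}$ induces the products of the averaged plans. Add and subtract
$\sum_t\rho_t u_i(x',y^{(i),t}_{-i})=\sum_z x'(z)\,p_c(z)\,u_i(z)\,\bar c^{(i)}_{-i}(z)$.
The difference of the two sums is exactly $\ip{x'}{w^{(i)}_i}$, so
\[
u_i(x',\sbar_{-i}) = \sum_t\rho_t u_i(x',y^{(i),t}_{-i}) + \ip{x'}{w^{(i)}_i}.
\]
Next subtract and add the realized values along the evaluation sequence. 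The realized value satisfies $\sum_t\rho_t u_i(x_i^t,y^{(i),t}_{-i})=\ip{\dbar^{(i)}}{u_i}$ by the definition of $\dbar^{(i)}$. Hence
\[
u_i(x',\sbar_{-i})-u_i(\sbar)=\sum_t\rho_t\big[u_i(x',y^{(i),t}_{-i})-u_i(x_i^t,y^{(i),t}_{-i})\big]+\Delta^{(i)}_i+\ip{x'}{w^{(i)}_i}.
\]

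\textbf{Bounding the two $x'$-dependent terms.} Maximize over $x'$ and split the maximum of the sum into the sum of maxima. The first bracket becomes the weighted external regret of the sequence $(x_i^t)$ against the opponents $y^{(i),t}_{-i}$. By Lemma~\ref{lem:weighted-decomp} this is at most $\Rhat_i$, since that lemma holds for whatever opponent sequence the regrets $ir_i^t$ were computed against. The last term is at most $L^{(i)}_{-i}$ by definition. Taking $\max_i$ yields \eqref{eq:seq-cert}.

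\textbf{The norm bound.} For the final chain, note that $x'(z)\in[0,1]$, so $\ip{x'}{w^{(i)}_i}\le\sum_z|w^{(i)}_i(z)|=\lVert w^{(i)}_i\rVert_1$. Since $|u_i|\le M$, this is at most $M D^{(i)}_{-i}$.

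\textbf{Main obstacle.} I expect the only delicate step to be justifying Lemma~\ref{lem:weighted-decomp} for the shifted opponent sequence. That requires the counterfactual values in $ir_i^t$ to use exactly the weights $y^{(i),t}_{-i}$ and player $i$'s current behavioral $x_i^t$. It also requires $\dbar^{(i)}$ and $\bar c^{(i)}_{-i}$ to be accumulated with those same weights and the same $\rho_t$. I would check that the laminar expansion never uses $y_{-i}=x_{-i}$ and depends only on perfect recall. Everything else is bookkeeping by multilinearity.
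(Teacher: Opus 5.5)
Your proposal is correct and matches the paper's proof essentially step for step. Both use the multilinearity identity $u_i(x',\xbar_{-i})-\sum_t\rho_t u_i(x',y^{(i),t}_{-i})=\ip{x'}{w^{(i)}_i}$, add and subtract the realized values $\ip{\dbar^{(i)}}{u_i}$, split the maximum, apply Lemma~\ref{lem:weighted-decomp} to the evaluated profiles $(x_i^t,y^{(i),t}_{-i})$ (whose laminar identity indeed never needs $y^{(i),t}_{-i}=x^t_{-i}$), and close with the norm chain from $x'(z)\in[0,1]$ and $|u_i(z)|\le M$.
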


The proof in Appendix~\ref{app:proofs} is that of
Proposition~\ref{prop:coalition-lbr} with $\bar c_{-i}$ replaced by
$\bar c^{(i)}_{-i}$. The comparator identity holds for any sequence. The
regret step is Lemma~\ref{lem:weighted-decomp} on the sequence player $i$
was evaluated on. The value term is $\ip{\dbar^{(i)}}{u_i}$ by
definition. Simultaneous updates recover Corollary~\ref{cor:online-cert}
($w_i \equiv 0$ for $n=2$) and
Propositions~\ref{prop:coalition}--\ref{prop:coalition-lbr}. Under
two-player alternation the first mover's weight vanishes and the second
mover's is a drift between consecutive iterates of the first,
\begin{equation}
\label{eq:drift}
  w^{(2)}_2 = p_c\, u_2 \sum_t \rho_t \big( x_1^t - x_1^{t+1} \big),
  \qquad
  \sum_t \rho_t \big( x_1^t - x_1^{t+1} \big)
  = \rho_1 x_1^1 + \sum_{t=2}^{T} (\rho_t - \rho_{t-1})\, x_1^t
    - \rho_T x_1^{T+1},
\end{equation}
so for nondecreasing weights the coefficients $\rho_1$ and
$\rho_t - \rho_{t-1}$ are nonnegative and sum to $\rho_T$, and
$\sum_t \rho_t (x_1^t - x_1^{t+1}) = \rho_T(\tilde x_1 - x_1^{T+1})$ with
$\tilde x_1 \in \cX_1$ a convex combination of the iterates. Hence for
every deviation $x'$,
$\ip{x'}{w^{(2)}_2} = \rho_T\big(u_2(x', \tilde x_1) - u_2(x', x_1^{T+1})\big)$,
so $|L^{(2)}_{-2}| \le 2M\rho_T$, which is $O((p+1)/T)$ for $w_t = t^p$,
the scale of one iteration's weight, against the $O(1/\sqrt{T})$
worst-case regret bound of CFR \citep{zinkevich2007regret}. The drift therefore decays faster than
that bound at long horizons. This is a statement about the bound, not about
the observed regret or gap. This drift is the
alternation correction that \citet[Theorem~1]{burch2019revisiting} write as a trailing
sum for zero-sum uniform averages and discard through an improved-strategy monotonicity
argument. We keep it as a measured signed term: no such monotonicity holds in
general-sum play, and at every finite $T$ the certificate reports it. Every
ingredient of \eqref{eq:seq-cert} is a byproduct of the update: the
opponent reach $p_c \prod_{j \ne i} y_j$ is formed to evaluate $ir_i^t$, so
$\dbar^{(i)}$ and $\bar c^{(i)}_{-i}$ cost two $O(|\cZ|)$ accumulations
per player per iteration and no tree pass. We call
\begin{equation}
\label{eq:C0-C1}
  C^{0} \coloneqq \max_i \big[\Rhat_i + \Delta^{(i)}_i
    + \lVert w^{(i)}_i \rVert_1\big], \qquad
  C^{1} \coloneqq \max_i \big[\Rhat_i + \Delta^{(i)}_i + L^{(i)}_{-i}\big]
\end{equation}
the zero-best-response certificate and its exact-comparator refinement.
$C^1$ costs one signed backward pass per player with $w^{(i)}_i \ne 0$ at
each checkpoint, a single pass under two-player alternation, and never a best
response against the average. The certificate is a stated rule whose terms
attribute its bound. The same backward pass, applied to the raw regret
table and $w^{(i)}_i$ together, returns the exact gap
(Proposition~\ref{prop:exact-from-logs},
Appendix~\ref{app:exact-from-logs}); $C^1$ keeps the terms apart.

\begin{corollary}[One-shot best-response verification]
\label{cor:one-shot}
Consider the protocol: run $K$ solvers; freeze one stored snapshot
$\hat{\sigma}$ using an online signal, without consulting its exact gap;
then compute one exact best response per player at that snapshot,
obtaining $\hat{g}=\gap(\hat{\sigma})$. The pair $(\hat{\sigma},\hat g)$ reports
its exact NashGap. The selection signal and
the verification are operationally separate, so the signal may be any online rule.
The cost is one gap evaluation of $n$ best-response passes, against one at
every checkpoint in trajectory-minimum reporting, and the result holds for
any certificate slack, update scheme and player count.
\end{corollary}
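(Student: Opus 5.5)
The plan is to prove Corollary~\ref{cor:one-shot} directly from the definition of $\gap$ in Section~\ref{sec:prelim}, with no appeal to the certificate bounds. The point is that exactness of $\hat g$ comes from the best-response computation itself, not from any property of the selection signal.

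First, I would fix the selection stage as an arbitrary (possibly randomized) measurable function of the run's online state. This covers $C^0$, $C^1$, $C^{\mathrm{tight}}$, a final-iterate rule, or anything else, as long as its output is one stored snapshot $\hat\sigma$ among the $K$ runs' checkpoints. Conditional on that output, $\hat\sigma$ is a fixed profile; for each player it is a realization plan in $\cX_i$ (or its behavioral normalization, which induces the same terminal distribution by realization equivalence).

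Second, I would verify that one backward pass per player computes $\max_{x'\in\cX_i} u_i(x',\hat\sigma_{-i})$ exactly. The objective is linear in $x'$ by multilinearity. Under perfect recall its maximum over the sequence-form polytope is attained by the standard bottom-up recursion that takes a per-infoset max of counterfactual values against the fixed $\hat\sigma_{-i}$. Together with $u_i(\hat\sigma)=\ip{d_{\hat\sigma}}{u_i}$, computed in $O(|\cZ|)$, this gives each player's deviation gain exactly. Taking the max over $i$ yields $\hat g=\gap(\hat\sigma)$ by definition, using $n$ best-response passes.

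Third, I would make the operational-separation and cost claims precise. Since $\hat g$ is computed only after $\hat\sigma$ is frozen, and the selector never reads $\hat g$, no exact gap enters selection. The verification cost is exactly one gap evaluation, i.e.\ $n$ passes, versus $Kmn$ for trajectory-minimum reporting over $K$ solvers and $m$ checkpoints. Independence from certificate slack, update order and $n$ holds because none of these appear in the second step, which only uses the profile $\hat\sigma$ and the game tree.

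The only step needing care is the second one. The proof must state which object the best response is taken against: the behavioral $\sbar$ rather than a raw realization plan, so that opponent reach is well defined at infosets with zero own reach. It must also note that perfect recall makes the per-infoset recursion exact. Everything else is bookkeeping. Optionally, if the selector was a sound certificate $C$, I would remark that $\hat g\le C(\hat\sigma)$ then provides a consistency check; this is not needed for the corollary.
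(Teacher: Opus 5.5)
Your argument is correct and matches the paper, which gives no separate proof and treats the corollary as immediate: $\hat g$ is exact because it is by definition $n$ best-response passes at the frozen snapshot, and the operational separation makes the selection rule, its slack, the update order and $n$ irrelevant to that evaluation. One small correction: the behavioral-versus-realization caveat is unnecessary, since $u_i(x',\hat\sigma_{-i})=\sum_z p_c(z)\,x'(z)\prod_{j\ne i}\hat x_j(z)\,u_i(z)$ depends on the opponents only through their terminal realization weights, which are always well defined and which realization equivalence makes identical for both objects.
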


Three layers recur throughout. Layer~I is the exact identity of
Theorem~\ref{thm:tight}: it contains $\Rbar_i$ and is an analysis statement.
Layer~II is the online certificate of
Proposition~\ref{prop:sequence}, a theorem-backed upper bound for every
update scheme and player count. $C^0$ is read off the regret accumulators and
the terminal accumulators kept alongside the update, and $C^1$ adds one signed backward pass per player with
nonzero comparator weight at each checkpoint.
Layer~III is one-shot verification, $n$ best-response passes on the
selected snapshot alone. The start-of-round \emph{aligned} certificate,
which logs a second regret pass per iteration so that every accumulator
refers to $\sigma^t$, is also a Layer~II bound. Section~\ref{sec:exp-main}
finds it looser than $C^1$ on the same trajectories
(Table~\ref{tab:guarantees}).

\section{Certificate-Guided Portfolio Selection}
\label{sec:portfolio}
Algorithm~\ref{alg:portfolio} turns Proposition~\ref{prop:sequence} into a
selection rule. Each solver keeps, next to its clipped optimizer state, the raw
accumulators behind the certificate: the weighted realization plans $X$, the
joint reach $D_i$ and opponent-coalition reach $H_i$ of the profile each
player was evaluated on, and the unclipped signed regret $Q_i$ of those
evaluations. These are the unnormalized counterparts, scaled by $W_T$,
of $\xbar_i$, $\dbar^{(i)}$ and $\bar c^{(i)}_{-i}$ of
Section~\ref{sec:certificates}. At a checkpoint they define
both the stored profile and its score.

\begin{algorithm}[t]
\caption{Certificate-guided portfolio selection and verification}
\label{alg:portfolio}
\begin{algorithmic}[1]
\REQUIRE game $\Gamma$, solvers $\{A_k\}_{k=1}^K$, horizon $T$,
 checkpoints $\mathcal T$, score $C\in\{C^0,C^1\}$
\FOR{each solver $A_k$ independently on game $\Gamma$; $W\leftarrow0$ (the running weight sum $W_t=\sum_{s\le t}w_s$); raw $Q_i,X_i,D_i,H_i\leftarrow0$; $s_k^\star\leftarrow+\infty$}
  \FOR{$t=1,\ldots,T$; $W{+}{=}w_t$; $X_i{+}{=}w_tx_i^t$ for all $i$}
    \FOR{each player $i$ in the solver's update order, with $y^{(i),t}$ the profile player $i$ is evaluated on}
      \STATE $H_i{+}{=}w_t\prod_{j\ne i}y^{(i),t}_j$;\;
        $D_i{+}{=}w_t\,p_c\,x_i^t\prod_{j\ne i}y^{(i),t}_j$;\;
        $Q_i{+}{=}w_t\,ir_i^t$ \COMMENT{same pass as the update}
      \STATE perform $A_k$'s update of player $i$; alternating solvers apply it
        before the next player is evaluated, simultaneous solvers after the loop
        (so $y^{(i),t}=x^t$ for all $i$)
    \ENDFOR
    \IF{$t\in\mathcal T$}
      \STATE $s_k(t)\leftarrow C$ from $(Q,X,D,H)/W$ via
        \eqref{eq:seq-cert}--\eqref{eq:C0-C1}; store profile $S_k(t)$ \COMMENT{Layer II}
      \STATE \textbf{if} $s_k(t)<s_k^\star$ \textbf{then}
        $(s_k^\star,S_k^\star)\leftarrow(s_k(t),S_k(t))$
    \ENDIF
  \ENDFOR
\ENDFOR
\STATE $\hat k\leftarrow\arg\min_k s_k^\star$;
 $(\hat\sigma,\hat s)\leftarrow(S_{\hat k}^\star,s_{\hat k}^\star)$;
 \textbf{optional:} $\hat g\leftarrow\gap(\hat\sigma)$ with $n$ BR passes
  \COMMENT{Layer III}
\RETURN $(\hat\sigma,\hat s)$, or $(\hat\sigma,\hat g)$ if verified
\end{algorithmic}
\end{algorithm}

\section{Experiments}
\label{sec:experiments}

\textbf{Setup.}
We evaluate on the two extensive-form benchmarks of
\citet{sychrovsky2026approximating}. \emph{Biased Leduc}: two-player Leduc
poker in which on a tie each player recovers only a fraction
$\beta \sim U(0, \tfrac12)$ of their wager, making the game general-sum,
with $64$ instances from paired seeds, $T = 2^{18}$, and our own dense grid
of $128$ checkpoints (the reference reports about $19$). \emph{Three-player
Leduc}: the fixed game of their Table~6, $T = 2^{16}$. The portfolio has
$K = 16$ training-free regret-matching solvers, the seven original RM$^+$,
PRM$^+$ and DPRM$^+$ solvers under alternating updates with polynomial averaging
\citep{tammelin2014solving,farina2021faster,burch2019revisiting}, and nine
solvers added in a second stage, one from each of DCFR \citep{brown2019solving},
DCFR$^+$ and PDCFR$^+$ \citep{xu2024minimizing}, the hyperparameter-schedule
variants HS-PCFR$^+$ and HS-DCFR \citep{zhang2026faster}, Smooth, Stable and
extragradient (ExRM$^+$) regret matching$^+$ \citep{farina2023regret}, and
IR-PRM$^+$ \citep{zhang2025scale}. Each added solver is the configuration of its
family that solved most pilot instances (seeds $0$--$9$) under a rule frozen
before the pilot ran; the in-sample numbers below therefore include that
selection step, the held-out numbers do not. For each run we call the
smallest exact NashGap over its checkpoints the \emph{trajectory minimum},
and an \emph{oracle} row reports the fraction of instances whose trajectory
minimum clears the threshold. The solver pool, the definitions of the selection
and certified rows, and the recomputation of every reported number from the
stored logs are in Appendix~\ref{app:impl}.

\begin{table}[tp]
\centering
\small
\setlength{\tabcolsep}{3pt}
\caption{Biased Leduc, solved fractions at four thresholds on the $64$
paired instances. \emph{Reported} rows are figures of
\citet{sychrovsky2026approximating}; the rest are our runs. Single-solver rows
are trajectory-minimum oracles. Selection rows give exact gaps, and certified
rows count certificates below the threshold. Control rows are outside both
portfolios; the full names of the added solvers are in
Appendix~\ref{app:impl}.}
\label{tab:main}
\begin{tabular*}{\textwidth}{@{\extracolsep{\fill}}lcccc@{}}
\toprule
 & \multicolumn{4}{c}{fraction of games with NashGap $\le$} \\
\cmidrule(lr){2-5}
Method & $10^{-2}$ & $10^{-3}$ & $10^{-4}$ & $10^{-5}$ \\
\midrule
\multicolumn{5}{@{}l}{\emph{Reported}} \\
CFR$^+$ & $1$ & $0$ & $0$ & $0$ \\
PCFR$^+$ & $1$ & $0.87$ & $0.48$ & $0.16$ \\
NPCFR$^+$ (oracle) & $1$ & $0.98$ & $0.96$ & $\mathbf{0.96}$ \\
\midrule
\multicolumn{5}{@{}l}{\emph{Controls (outside the pool)}} \\
RM$^+$, uniform & $1.00$ & $0.33$ & $0.00$ & $0.00$ \\
PRM$^+$, uniform & $1.00$ & $1.00$ & $0.77$ & $0.00$ \\
PRM$^+$, $p{=}2$, simult. & $1.00$ & $0.73$ & $0.64$ & $0.41$ \\
\midrule
\multicolumn{5}{@{}l}{\emph{Original seven solvers}} \\
RM$^+$, $p{=}1$ & $1.00$ & $0.50$ & $0.00$ & $0.00$ \\
RM$^+$, $p{=}2$ & $1.00$ & $0.52$ & $0.00$ & $0.00$ \\
PRM$^+$, $p{=}1$ & $1.00$ & $1.00$ & $0.88$ & $0.67$ \\
PRM$^+$, $p{=}2$ & $1.00$ & $1.00$ & $0.89$ & $0.70$ \\
DPRM$^+$, $p{=}1$, $\lambda{=}0.3$ & $1.00$ & $1.00$ & $0.88$ & $0.73$ \\
DPRM$^+$, $p{=}2$, $\lambda{=}0.1$ & $1.00$ & $1.00$ & $0.88$ & $0.67$ \\
DPRM$^+$, $p{=}2$, $\lambda{=}0.3$ & $1.00$ & $1.00$ & $0.91$ & $0.75$ \\
\midrule
\multicolumn{5}{@{}l}{\emph{Added nine solvers}} \\
DCFR, $p{=}2$ & $1.00$ & $0.62$ & $0.02$ & $0.00$ \\
DCFR$^+$, $p{=}4$ & $1.00$ & $0.42$ & $0.00$ & $0.00$ \\
PDCFR$^+$, $p{=}5$ & $1.00$ & $1.00$ & $0.75$ & $0.66$ \\
HS-PCFR$^+$ & $1.00$ & $1.00$ & $0.80$ & $0.72$ \\
HS-DCFR & $1.00$ & $0.52$ & $0.00$ & $0.00$ \\
Smooth PRM$^+$, $p{=}2$ & $1.00$ & $1.00$ & $0.88$ & $0.73$ \\
Stable PRM$^+$, $p{=}2$ & $1.00$ & $1.00$ & $0.89$ & $0.80$ \\
ExRM$^+$, $p{=}2$, simult. & $1.00$ & $0.94$ & $0.53$ & $0.27$ \\
IR-PRM$^+$, $p{=}2$ & $1.00$ & $0.81$ & $0.28$ & $0.19$ \\
\midrule
Portfolio of $7$ solvers: selection by $C^1$ & $1.00$ & $1.00$ & $0.91$ & $0.84$ \\
\quad selection by aligned certificate & $1.00$ & $1.00$ & $0.91$ & $0.80$ \\
\quad certified, $C^1 \le$ threshold & $1.00$ & $1.00$ & $0.89$ & $0.75$ \\
\quad certified, $C^0 \le$ threshold & $1.00$ & $1.00$ & $0.88$ & $0.73$ \\
\quad certified, aligned $\le$ threshold & $1.00$ & $1.00$ & $0.88$ & $0.69$ \\
\midrule
Portfolio of $16$ solvers: selection by $C^1$ & $1.00$ & $1.00$ & $0.97$ & $\mathbf{0.89}$ \\
\quad selection by aligned certificate & $1.00$ & $1.00$ & $0.97$ & $0.89$ \\
\quad certified, $C^1 \le$ threshold & $1.00$ & $1.00$ & $0.97$ & $\mathbf{0.89}$ \\
\quad certified, $C^0 \le$ threshold & $1.00$ & $1.00$ & $0.97$ & $0.89$ \\
\quad certified, aligned $\le$ threshold & $1.00$ & $1.00$ & $0.95$ & $0.86$ \\
\bottomrule
\end{tabular*}
\end{table}

\label{sec:exp-main}

Table~\ref{tab:main} holds the central result. The reported CFR$^+$ anchor
solves nothing at $10^{-3}$ and below, and the controls trail the pool. At
$10^{-5}$ the controls solve $0.00$, $0.00$ and $0.41$ against $0.89$ for the
certificate selection, and on
three-player Leduc the uniform RM$^+$ anchor sits about $2.25\times$ below
the reported alternating CFR$^+$ (Table~\ref{tab:leduc3}). Uniform
averaging caps RM$^+$ and PRM$^+$ at $10^{-5}$; predictiveness lifts RM$^+$ to an
oracle fraction of $0.70$, and the strongest single solver of the extension,
Stable PRM$^+$, reaches $0.80$. Regret discounting alone does not help. DCFR,
DCFR$^+$ and HS-DCFR solve $0.00$ at $10^{-5}$ (Table~\ref{tab:perarm}).

\begin{figure}[t]
\centering
\includegraphics[width=0.46\figbase]{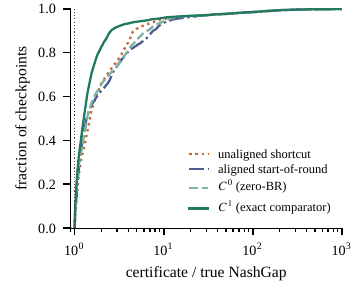}
\caption{Tightness over the $16\times64\times128$ biased-Leduc
checkpoints whose exact gap is above floating-point zero (excludes
$3{,}223$ checkpoints, all but $31$ on HS-PCFR$^+$): an ECDF of certificate
over true NashGap for $C^1$ and $C^0$ (Proposition~\ref{prop:sequence}), the
aligned start-of-round certificate, and the unaligned shortcut of
Corollary~\ref{cor:online-cert} under alternation. Sound certificates lie at
ratio $\ge 1$. Iteration traces are in Figures~\ref{fig:hard-instances}
and~\ref{fig:tightness-curves}.}
\label{fig:tightness}
\end{figure}

Selection by $C^1$ solves $57/64$ at $10^{-5}$, exactly the
trajectory-minimum union of the sixteen solvers over $16\times128$
checkpoints.

Within fourteen of the solvers the certificate argmin lands on an iterate at
or below the threshold whenever the solver's best checkpoint does; HS-PCFR$^+$
and ExRM$^+$ each miss their own best checkpoint on one instance
(Table~\ref{tab:perarm}), and another solver covers the instance. The same grid supports
an exact-gap selector, which attains that union at $128$ gap evaluations per
run (Table~\ref{tab:budget}). Per instance, the selected gap exceeds the exact
trajectory minimum by a median factor of $1.01$ ($1.36$ at the $90$th
percentile) on the $36$ in-sample instances whose minimum is above
$10^{-12}$; on the remaining $28$ both sit at or below $10^{-12}$
(Table~\ref{tab:selection-loss}). Selection by $C^0$, and by
$\max_i[\Rhat_i+\Delta^{(i)}_i]$ alone, gives the same $57$
(Appendix~\ref{app:pool-ablation}).

The certified rows report what a solver can state without a best response
against the reported profile. On this set the sixteen-solver pool certifies
every instance it solves, $57/64$ under both $C^1$ and $C^0$. On the
seven-solver pool certified counts trail selection by six or seven instances,
$48/64$ for $C^1$ and $47/64$ for $C^0$ against $54$ solved, with
certificates between $10^{-5}$ and $1.3\times10^{-5}$. The added solvers close
that margin because the selector now lands on iterates whose certificate is
well below the threshold.

The selected solver is HS-PCFR$^+$ on $28$ instances (PDCFR$^+$ $11$, ExRM$^+$
$6$, Smooth and Stable PRM$^+$ $5$ each, IR-PRM$^+$ $4$, an original solver
$5$), and on the $28$ HS-PCFR$^+$ selections the exact gap is at
floating-point zero, below $10^{-12}$. The reported $0.96$ is a
trajectory-minimum oracle. Our selection reaches $0.89$ on the instances
above, $0.92$ on $64$ held-out instances, and $0.89$ again on a later
instance set that no earlier stage had used, all from the same distribution
(Tables~\ref{tab:heldout} and~\ref{tab:fresh}).

\textbf{Three-player Leduc.}
\label{sec:exp-3p}
On the single deterministic three-player benchmark twelve solvers were run,
the six original RM$^+$/PRM$^+$ solvers and the six added solvers whose two-player
oracle fraction or family membership qualified them (Appendix~\ref{app:impl}).
PDCFR$^+$ reaches true NashGap $1.17\times10^{-6}$ and HS-PCFR$^+$
$2.16\times10^{-6}$, both below the $5.50\times10^{-6}$ of quadratically
weighted alternating PRM$^+$ and three orders below the reported $10^{-3}$
NPCFR$^+$ comparator. With no aligned rerun, $C^1$ certifies
$2.83\times10^{-6}$ for HS-PCFR$^+$ at $T$ ($1.31\times$ the true gap) and
$7.65\times10^{-6}$ for PRM$^+$ ($1.39\times$); on the PRM$^+$ solver $C^0$
gives $1.59\times10^{-5}$ and the $M D_{-i}$ relaxation $6.42\times10^{-5}$.
At $T$ $C^1$ stays within $1.31$--$8.2\times$ of the true gap on all twelve
solvers, and all three certificates were above the exact gap at every one of
the $64$ checkpoints of every solver (Table~\ref{tab:leduc3}). The cross-solver
$C^1$ argmin selects HS-PCFR$^+$ at $T$, whose gap is $1.84\times$ the
pool's trajectory minimum. The certificate picks the most tightly certified
run. The smallest gap belongs to PDCFR$^+$, which carries the looser
certificate $3.26\times10^{-6}$. Certificate selection and the trajectory
minimum therefore no longer coincide on this game once the pool is extended,
and the selected certificate is still $354\times$ below the reported
comparator.

\begin{table}[t]
\centering
\small
\setlength{\tabcolsep}{3.5pt}
\caption{Three-player Leduc ($T=2^{16}$, deterministic single runs). The
min NashGap column is the trajectory minimum with the checkpoint $t$ at
which it is attained; $C^1$, $C^0$ and the $M D$ relaxation are the
certificates of Proposition~\ref{prop:sequence} at $T$, with ratios to the
true gap at $T$. All solvers use alternating updates. Reported rows are from
Table~6 of \citet{sychrovsky2026approximating}; \emph{n/a} marks quantities
that study does not publish.}
\label{tab:leduc3}
\begin{tabular}{@{}lccccccc@{}}
\toprule
Solver & min NashGap & at $t$ & $C^1$ at $T$ & ratio & $C^0$ & ratio & $M D$ form \\
\midrule
\multicolumn{8}{@{}l}{\emph{Added solvers}} \\
PDCFR$^+$, $p{=}5$ & $\mathbf{1.17 \times 10^{-6}}$ & $59392$ & $3.26\times10^{-6}$ & $2.41\times$ & $1.73\times10^{-5}$ & $12.8\times$ & $8.59\times10^{-5}$ \\
HS-PCFR$^+$ & $2.16\times10^{-6}$ & $T$ & $\mathbf{2.83\times10^{-6}}$ & $1.31\times$ & $2.83\times10^{-6}$ & $1.3\times$ & $2.85\times10^{-6}$ \\
Stable PRM$^+$, $p{=}2$ & $1.59\times10^{-5}$ & $T$ & $4.29\times10^{-5}$ & $2.70\times$ & $3.76\times10^{-4}$ & $23.7\times$ & $1.02\times10^{-3}$ \\
DCFR$^+$, $p{=}4$ & $7.68\times10^{-5}$ & $33792$ & $3.25\times10^{-4}$ & $3.75\times$ & $4.08\times10^{-3}$ & $47.1\times$ & $1.57\times10^{-2}$ \\
DCFR, $p{=}2$ & $1.11\times10^{-4}$ & $15360$ & $7.68\times10^{-4}$ & $3.34\times$ & $9.05\times10^{-3}$ & $39.4\times$ & $3.60\times10^{-2}$ \\
Smooth PRM$^+$, $p{=}2$ & $6.51\times10^{-4}$ & $53248$ & $3.64\times10^{-3}$ & $5.37\times$ & $2.47\times10^{-2}$ & $36.5\times$ & $1.04\times10^{-1}$ \\
\midrule
\multicolumn{8}{@{}l}{\emph{Original solvers}} \\
PRM$^+$, $p{=}2$ & $5.50 \times 10^{-6}$ & $T$ & $7.65\times10^{-6}$ & $1.39\times$ & $1.59\times10^{-5}$ & $2.9\times$ & $6.42\times10^{-5}$ \\
PRM$^+$, $p{=}1$ & $5.63\times10^{-6}$ & $T$ & $2.64\times10^{-5}$ & $4.7\times$ & $3.07\times10^{-4}$ & $55\times$ & $1.24\times10^{-3}$ \\
PRM$^+$, $p{=}0$ & $1.98\times10^{-4}$ & $T$ & $1.62\times10^{-3}$ & $8.2\times$ & $2.08\times10^{-2}$ & $105\times$ & $7.67\times10^{-2}$ \\
RM$^+$, $p{=}2$ & $1.06\times10^{-4}$ & $T$ & $3.39\times10^{-4}$ & $3.2\times$ & $3.23\times10^{-3}$ & $30\times$ & $1.10\times10^{-2}$ \\
RM$^+$, $p{=}1$ & $2.61\times10^{-4}$ & $T$ & $5.57\times10^{-4}$ & $2.1\times$ & $4.77\times10^{-3}$ & $18\times$ & $1.78\times10^{-2}$ \\
RM$^+$, $p{=}0$ & $1.78\times10^{-3}$ & $T$ & $3.83\times10^{-3}$ & $2.2\times$ & $3.87\times10^{-2}$ & $22\times$ & $1.53\times10^{-1}$ \\
\midrule
\multicolumn{8}{@{}l}{\emph{Reported methods (reference nomenclature)}} \\
CFR$^+$ & $4\times10^{-3}$ & n/a & n/a & n/a & n/a & n/a & n/a \\
PCFR$^+$ & $8\times10^{-3}$ & n/a & n/a & n/a & n/a & n/a & n/a \\
NPCFR$^+$ & $1\times10^{-3}$ & n/a & n/a & n/a & n/a & n/a & n/a \\
\bottomrule
\end{tabular}
\end{table}

\begin{figure}[t]
\centering
\includegraphics[width=0.76\figbase]{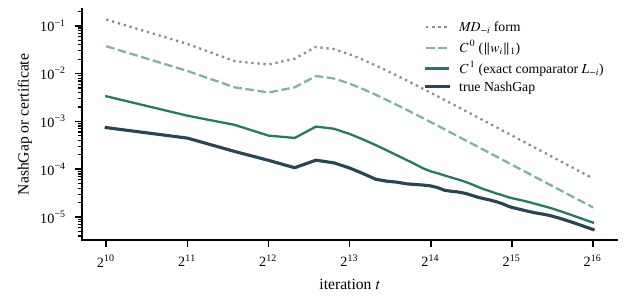}
\caption{Slack attribution on three-player Leduc (PRM$^+$, $p{=}2$,
alternating, log--log). The $M D_{-i}$ form of the certificate sits
$11.7$--$241\times$ above the true NashGap over training; the zero-BR
$C^0$ replaces $M D_{-i}$ by $\lVert w_i\rVert_1$ and tightens it to
$2.9$--$59\times$; the exact comparator $C^1$ (one backward pass per player
per checkpoint) reaches $1.39$--$5.2\times$.}
\label{fig:coalition}
\end{figure}

Figure~\ref{fig:coalition} attributes the slack on the quadratically
weighted PRM$^+$ solver, the run the six-solver pool of the first stage selected.
It reaches true NashGap $5.50\times10^{-6}$, and $C^1$ certifies
$7.65\times10^{-6}$ for it with no aligned rerun, $131\times$ below the best
reported oracle-protocol number. The uniform RM$^+$ anchor is
$1.78\times10^{-3}$ versus the reported CFR$^+$ value $4\times10^{-3}$. The
aligned start-of-round rerun of the $p{=}2$ solver reproduces the NashGap
trajectory to $8.3\times10^{-17}$; the start-of-round $M D$ form coincides
with its evaluation-sequence value at every checkpoint, while the underlying
regret terms differ by a median of $3.0\times10^{-4}$ of the gap (maximum
$0.30$). The slack is predominantly ($96$--$99\%$, median $0.98$) the
comparator relaxation, which $C^1$ removes. On HS-PCFR$^+$ the three certificates
nearly coincide ($C^1$ and $C^0$ both $2.83\times10^{-6}$, the $M D$ form
$2.85\times10^{-6}$). Its schedule puts
almost all averaging weight on the last iterates, so the coalition
imbalance that the comparator term prices is itself near zero.

\begin{table}[t]
\centering
\small
\setlength{\tabcolsep}{3pt}
\caption{Verification budgets (one gap evaluation equals $n$ best-response
passes). $C^1$ spends one signed backward pass per player whose comparator
weight is nonzero, the second mover alone under two-player alternation.
$C^1$'s pass maximizes a logged linear functional, not a best response
against the average; one-shot verification is exact for the selected
snapshot. The oracle
protocol uses about $19$ evaluations on the reference schedule and $128$
on our dense two-player grid. $C^1$ costs nothing extra when the comparator
weight vanishes, as it does for simultaneous updates in two-player games.}
\label{tab:verification-cost}
\begin{tabular}{@{}>{\raggedright\arraybackslash}p{0.26\columnwidth}
                    >{\centering\arraybackslash}p{0.44\columnwidth}
                    >{\raggedright\arraybackslash}p{0.22\columnwidth}@{}}
\toprule
Protocol & Exact-gap queries & Guarantee \\
\midrule
Selection by $C^0$ & $0$ & Layer II bound \\
Selection by $C^1$ & one signed backward pass per checkpoint under two-player alternation; $0$ for simultaneous updates & Layer II bound \\
One-shot verification & $1$ & exact $\gap(\hat\sigma)$ \\
Trajectory minimum & per checkpoint & evaluation protocol \\
\bottomrule
\end{tabular}
\end{table}

\begin{table}[tp]
\centering
\small
\setlength{\tabcolsep}{3pt}
\caption{Compute budget behind Table~\ref{tab:main} (biased Leduc). One gap
evaluation equals $n{=}2$ best-response passes, and the timed policy is
PRM$^+$, $p{=}2$. The isolated block times each term alone on one core
outside the training loop; the end-to-end and checkpoint-work blocks time
full runs with three repeats per policy in shuffled order
($\mathrm{mean}\,[\min,\max]$). \emph{n/a} marks a cell the measurement does
not cover; ``not reported'' marks a value the reference does not publish.
The \emph{bare solver, exact-gap selection} row is a later re-measurement,
with the uninstrumented baseline of its own batch at $81.9$ s.}
\label{tab:budget}
\begin{tabular}{@{}lccc@{}}
\toprule
 & Ours, $C^0$/$C^1$ & Ours, aligned & Reported \\
 & (Prop.~\ref{prop:sequence}) & (extra pass) & NPCFR$^+$ \\
\midrule
Training trajectories & $16$ solvers & $16$ solvers & $1$ \\
Instances & $64$ (paired) & $64$ (paired) & not stated \\
Iterations per run & $2^{18}$ & $2^{18}$ & $2^{18}$ \\
Tree passes per iteration & $n{=}2$ & $2n{=}4$ & not reported \\
Candidate snapshots & $16\times128{=}2048$ & $16\times128{=}2048$ & $\approx 19$ \\
Gap evals, selection & $0$ ($C^0$); one signed backward pass per ckpt.\ ($C^1$) & $0$ & $19$ \\
Gap evals, validation & $128$/run & $128$/run & n/a \\
\multicolumn{4}{@{}l}{\emph{Measured wall-clock, biased Leduc $\beta{=}0.25$, one core, isolated terms}} \\
One training iteration & $0.37$ ms & $0.51$ ms & not reported \\
One checkpoint ($C^0$; $C^1$) & $0.12$; $0.19$ ms & $0$ extra & not reported \\
One exact gap (validation) & $0.15$ ms & $0.15$ ms & not reported \\
\multicolumn{4}{@{}l}{\emph{End-to-end run, $T{=}2^{18}$, one core, $128$ checkpoints, three repeats ($\mathrm{mean}\,[\min,\max]$)}} \\
Uninstrumented solver & $91.9$ s $[83.4, 100.2]$ & n/a & not reported \\
Certificate state, no checkpoint work & $117.3$ s $[109.4, 128.6]$ & n/a & not reported \\
$C^0$ checkpoints & $116.2$ s $[113.0, 121.9]$ ($0.03$ s, $0.03\%$) & n/a & not reported \\
$C^1$ checkpoints & $118.6$ s $[108.8, 124.4]$ ($0.05$ s, $0.04\%$) & n/a & not reported \\
Exact-gap selection & $118.2$ s $[100.6, 130.9]$ ($0.03$ s, $0.02\%$) & n/a & not reported \\
Bare solver, exact-gap selection & $84.4$ s $[82.0, 87.9]$ ($0.02$ s, $0.03\%$) & n/a & not reported \\
\multicolumn{4}{@{}l}{\emph{Same policies, checkpoint every $8$ iterations ($32768$ total)}} \\
Checkpoint work & \multicolumn{3}{l}{$C^0$ $6.8$ s, $C^1$ $9.8$ s, exact gap $5.6$ s ($6.0$, $8.7$, $5.4\%$ of the run)} \\
Per checkpoint & \multicolumn{3}{l}{$0.21$ / $0.30$ / $0.17$ ms; peak RSS $\le 62$ MB} \\
\bottomrule
\end{tabular}
\end{table}

Per checkpoint, the certificate's own cost stays low. $C^0$ costs about
three quarters of an exact gap, and the $C^1$ backward pass is comparable to
one. Inside a full run these evaluations are negligible either way. The
$128$ validation evaluations cost $19$\,ms in total, about $0.02\%$ of the
run, and in situ each checkpoint costs $1.2$--$1.7\times$ its isolated
figure. What the certificate does cost is the accumulator bookkeeping,
$25.4$\,s, which is $21.7\%$ of an instrumented run and $27.6\%$ on top of the bare
solver. $C^0$ is no cheaper than an exact gap, and the four certificate-state
policies sit within run-to-run variation of each other.

The aligned variant spends the portfolio budget. It adds a second regret pass
on every iteration, $0.14$\,ms on a training iteration of $0.37$\,ms.
The $7\times64$ aligned grid of the original solvers is $24.4$ core-h and the
$9\times64$ grid of the added solvers $12.8$ core-h, $37.2$ core-h for the
$1024$ runs in total, measured end to end on the cluster including the
diagnostic evaluations. The added solvers run at $54$--$60$\,s per $2^{18}$
iterations, except ExRM$^+$ ($161$\,s, two projections per iteration) and
IR-PRM$^+$ ($134$\,s).

\begin{table}[t]
\centering
\small
\setlength{\tabcolsep}{3pt}
\caption{The three guarantee layers, with their best-response cost and
status. One gap evaluation equals $n$ best-response passes.}
\label{tab:guarantees}
\begin{tabular}{@{}>{\raggedright\arraybackslash}p{0.28\columnwidth}
                    >{\centering\arraybackslash}p{0.22\columnwidth}
                    >{\raggedright\arraybackslash}p{0.42\columnwidth}@{}}
\toprule
Layer & BR cost & Status \\
\midrule
I. Exact identity (2p) & exact $\Rbar$ & equality; not a deployment signal \\
II. $C^0$, zero-BR certificate & $0$ & theorem-backed upper bound, any update order and $n$ \\
II. $C^1$, exact comparator & $1$ signed pass per player with $w_i\ne0$, per ckpt. & theorem-backed upper bound; tighter \\
II. Aligned start-of-round bound & $1$ extra regret pass per player per iteration & theorem-backed upper bound; looser (ablation) \\
III. One-shot verification & $1$ gap & exact selected gap only \\
Trajectory minimum & per ckpt. & post-hoc evaluation protocol \\
\bottomrule
\end{tabular}
\end{table}

Alternating updates record the $i$-th updater's regret against a partially
updated opponent, whereas the averages use the start-of-round profile.
Proposition~\ref{prop:sequence} absorbs the mismatch into the comparator
weight $w^{(i)}_i$, so $C^0$ and $C^1$ are sound without an extra regret
pass per iteration. On the two-player grid they were above the exact gap
at every one of
the $131{,}072$ checkpoints, as the theorem requires. For comparison we also
paid the aligned pass across the full portfolio. Start-of-round regret
logging on all $16\times64$ runs (a rerun on identical trajectories for the
seven original solvers, the same run for the nine added ones) was likewise sound
at every checkpoint but certified fewer instances (Table~\ref{tab:main}).
The three-player runs report $C^1$ directly (Table~\ref{tab:leduc3}).

\textbf{Held-out instances.}
\label{app:heldout}
The main table's biased-Leduc column uses seeds $0$--$63$, the instances on
which the solver pool was chosen. To test whether the selection rule transfers,
we drew $64$ further instances (seeds $64$--$127$) from the same distribution
$\beta \sim U(0,\tfrac12)$ and evaluated the pool on them in two
pre-registered stages, each frozen before its held-out runs started, with the
Stage~1 document released alongside the code. Stage~1 ran the seven original
solvers with
$T = 2^{18}$ and $128$ checkpoints on the same engine and interpreter, and
fixed the three selectors ($C^1$ selection, regret-only selection and
same-grid exact-gap selection), the threshold $10^{-5}$ and a migration band
from $0.74$ to $0.91$. That band is the rounded Wilson $95\%$ interval
\citep{wilson1927probable} for $54/64$. Its normal-approximation
construction would give $0.75$ to $0.93$. Stage~2, after the pool was extended
on seeds $0$--$63$, fixed the nine added solvers, the same selectors and
threshold, and the band from $0.81$ to $0.97$. That band is the normal
approximation around the sixteen-solver in-sample estimate $57/64$, rounded
outward. The added solvers were then run on the held-out seeds without the
aligned pass, as in Stage~1.

\begin{table}[t]
\centering
\small
\setlength{\tabcolsep}{4pt}
\caption{Solved counts at $10^{-5}$ on the in-sample and held-out
instance sets for the seven-solver pool of Stage~1 and the sixteen-solver pool of
Stage~2, all from the stored checkpoint histories; brackets on the $C^1$ rows
are the frozen migration bands (in-sample) and $95\%$ Wilson intervals
(held-out). The $C^0$ and value-corrected rows are diagnostic ablations;
the frozen selectors are $C^1$, regret-only and exact-gap.}
\label{tab:heldout}
\begin{tabular}{@{}lcccc@{}}
\toprule
 & \multicolumn{2}{c}{$7$ solvers (Stage~1)} & \multicolumn{2}{c}{$16$ solvers (Stage~2)} \\
\cmidrule(lr){2-3}\cmidrule(lr){4-5}
Selector & seeds $0$--$63$ & seeds $64$--$127$ & seeds $0$--$63$ & seeds $64$--$127$ \\
\midrule
$C^1$ selection & $54/64$ \; $[0.74, 0.91]$ & $56/64$ \; $[0.77, 0.94]$ & $57/64$ \; $[0.81, 0.97]$ & $59/64$ \; $[0.83, 0.97]$ \\
$C^0$ selection & $54/64$ & $56/64$ & $57/64$ & $59/64$ \\
Value-corrected $\max_i[\Rhat_i+\Delta^{(i)}_i]$ & $54/64$ & $56/64$ & $57/64$ & $59/64$ \\
Exact-gap selection, same grid & $54/64$ & $56/64$ & $57/64$ & $59/64$ \\
Regret-only selection & $53/64$ & $54/64$ & $49/64$ & $56/64$ \\
Certified $C^1$ & $48/64$ & $55/64$ & $57/64$ & $59/64$ \\
Certified $C^0$ & $47/64$ & $54/64$ & $57/64$ & $58/64$ \\
Final average & $50/64$ & $54/64$ & $56/64$ & $58/64$ \\
Uniform random checkpoint & $0.41$ & $0.42$ & $0.35$ & $0.35$ \\
\bottomrule
\end{tabular}
\end{table}

Both stages land inside their bands. On Stage~1, $C^1$ selection solves
$56/64 = 0.875$ held-out instances, $2/64$ above the in-sample count.
Exact-gap selection on the same grid solves the same $56$, and the two agree
on the solved-or-not decision on $64/64$ instances, as they did in-sample.
Regret-only selection solves $54/64$, a subset of the $56$. Seeds $116$ and
$122$ are the two discordant instances, and the exact McNemar test
\citep{mcnemar1947note} gives $p = 0.5$, so the paired counts do not
establish a population-level advantage. The two instances are recovered by
the value correction $\Delta^{(i)}_i$ alone and equally by $C^0$.

On Stage~2, the sixteen-solver pool solves $59/64 = 0.922$ held-out
instances, again equal
to the exact-gap selection on the same grid and agreeing with it on $64/64$
instances. The extension adds $3$ solved instances in-sample and $3$
held-out and loses none in either set ($n_+ = 3$, $n_- = 0$, McNemar
$p = 0.25$ for each), so the paired counts show the direction and not yet the
size of the gain. Regret-only selection falls to $49/64$ in-sample and
$56/64$ held-out on the larger pool ($n_+ = 8$, $n_- = 0$, $p = 0.008$
in-sample; $3$, $0$, $p = 0.25$ held-out). With sixteen candidates the regret
term alone picks loosely certified runs more often, and the value and
comparator terms of $C^1$ keep the selection on the union. The added
solvers take $59$ of the $64$ in-sample selections and $56$ of the $64$ held-out
ones, against a pre-registered floor of $55$. The certified counts equal the
selection counts on both sets for $C^1$ ($57/64$ and $59/64$) and fall one
short held-out for $C^0$ ($58/64$). The unsolved seeds are $5, 8, 24, 29, 50,
53, 61$ in-sample and $76, 77, 85, 124, 126$ held-out, subsets of the Stage~1
lists ($5, 8, 24, 29, 34, 35, 50, 53, 60, 61$ and $65, 70, 76, 77, 85, 90,
124, 126$).

Stage~3 drew a third instance set, seeds $128$--$191$, from the same
distribution. The full sixteen-solver protocol was rerun on it under a
third pre-registration, frozen before those runs started, with the same
selectors, threshold and band. $C^1$ selection solves $57/64 = 0.891$,
inside the band and identical to the in-sample count. Exact-gap selection
on the same grid agrees on $64/64$ decisions, the added solvers take $59$
of the $64$ picks (floor $55$), and regret-only selection solves $53/64$
(Table~\ref{tab:fresh}). The paired gain over the seven-solver pool is
$n_+ = 2$, $n_- = 0$ (McNemar $p = 0.5$), and both certificates again
cover every solve. The unsolved fresh seeds are $144, 146, 150, 163, 170,
181, 184$, a subset of the seven-solver pool's $129, 142, 144, 146, 150,
163, 170, 181, 184$. Across the three instance sets the sixteen-solver
fraction is $0.891$, $0.922$ and $0.891$; the fresh point estimate equals
the in-sample one. It is $0.031$ below held-out, and the interval for that
difference, $-0.14$ to $+0.08$, contains zero.

\begin{table}[t]
\centering
\small
\setlength{\tabcolsep}{4pt}
\caption{Solved counts at $10^{-5}$ on the fresh instance set (seeds
$128$--$191$, Stage~3) for the seven-solver pool and the sixteen-solver
pool, from the stored checkpoint histories; the bracket on the $C^1$ row is
the frozen band and the row definitions are those of
Table~\ref{tab:heldout}.}
\label{tab:fresh}
\begin{tabular}{@{}lcc@{}}
\toprule
Selector & $7$ solvers & $16$ solvers \\
\midrule
$C^1$ selection & $55/64$ & $57/64$ \; $[0.81, 0.97]$ \\
$C^0$ selection & $54/64$ & $57/64$ \\
Value-corrected $\max_i[\Rhat_i+\Delta^{(i)}_i]$ & $55/64$ & $57/64$ \\
Exact-gap selection, same grid & $55/64$ & $57/64$ \\
Regret-only selection & $55/64$ & $53/64$ \\
Certified $C^1$ & $54/64$ & $57/64$ \\
Certified $C^0$ & $54/64$ & $57/64$ \\
Final average & $54/64$ & $57/64$ \\
Uniform random checkpoint & $0.43$ & $0.36$ \\
\bottomrule
\end{tabular}
\end{table}

\begin{table}[t]
\centering
\small
\caption{Selected gap relative to the exact trajectory minimum, per instance
(sixteen-solver pool; in-sample seeds $0$--$63$, held-out $64$--$127$, fresh
$128$--$191$). The ratio is reported over the instances whose minimum is above
$10^{-12}$; ``excluded'' counts the remaining instances, on all of which the
selected gap is also at or below $10^{-12}$.}
\label{tab:selection-loss}
\begin{tabular}{@{}lrrrr@{}}
\toprule
Instance set & Median & $90$th pct. & Max & Excluded \\
\midrule
In-sample & $1.01$ & $1.36$ & $42.45$ & $28$ \\
Held-out & $1.01$ & $2.67$ & $33.11$ & $26$ \\
Fresh & $1.00$ & $1.74$ & $17.07$ & $25$ \\
\bottomrule
\end{tabular}
\end{table}

\textbf{Where the certificate is loose.}
\label{app:detailed-results}
\label{app:hard-instances}

Figure~\ref{fig:hard-instances} characterizes where the training-free
portfolio stops. The two simultaneous solvers are covered by
Theorem~\ref{thm:tight} with no comparator term. The PRM$^+$ control
certifies $0.31$ at $10^{-5}$ with median tightness $1.3\times$, and ExRM$^+$
certifies $0.23$ at median tightness $1.24\times$ at its selected iterate.
The sixteen-solver portfolio's $7$ in-sample failures lean towards low $\beta$
(median $0.17$ against $0.28$ over all instances). Values $\beta>0.3$
cover $45\%$ of the benchmark and $29\%$ of the failures. On the
failures $C^1$ is within $1.10\times$ of the true gap at the median and
$1.22\times$ at worst, and no solver's trajectory minimum is below the
threshold on them either
(Table~\ref{tab:union-per-threshold}). One of the $7$ failures is borderline,
with gap $1.0\times10^{-5}$. The remaining six lie at
$1.7\times10^{-5}$--$1.7\times10^{-4}$.

The five held-out failures do not share the low-$\beta$ lean
(Table~\ref{tab:heldout}). Their median $\beta$ is $0.39$ against $0.32$ over
all held-out instances, with selected gaps from $4.7\times10^{-5}$ to
$1.8\times10^{-4}$ and $C^1$ within $1.35\times$ of each. The seven-solver pool
of the first stage failed on $10$ instances, at a median $\beta$ of $0.16$,
with $C^1$ within $1.15\times$ at the median and $2.2\times$ at worst. The
extension recovered seeds $34$, $35$ and $60$. The reference study's
unpublished, unpaired instance draws do not show whether NPCFR$^+$ improves
on the same instances.

\begin{figure}[t]
\centering
\includegraphics[width=0.9\figbase]{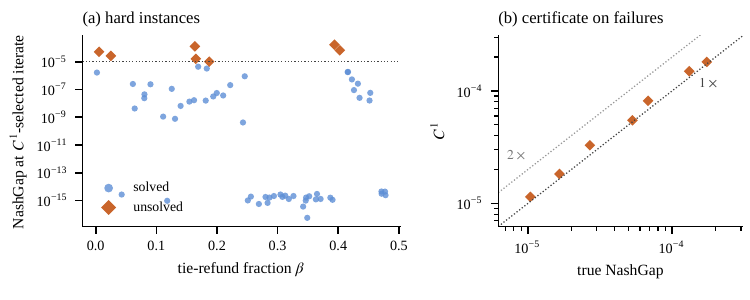}
\caption{Failure analysis (biased Leduc, sixteen-solver pool, threshold
$10^{-5}$). (a) NashGap at the $C^1$-selected iterate versus tie-refund
fraction $\beta$; the dotted line is the threshold, and the cluster at
$10^{-15}$ is floating-point zero, where HS-PCFR$^+$ is selected. (b) On
the seven unsolved instances, $C^1$ against the true gap, with the $1\times$
and $2\times$ lines.}
\label{fig:hard-instances}
\end{figure}

\begin{figure}[t]
\centering
\includegraphics[width=0.84\figbase]{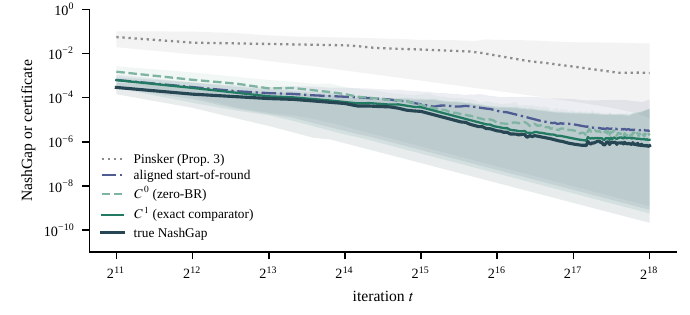}
\caption{Certificates over training (biased Leduc, PRM$^+$ $p{=}2$, median
and interquartile band over $64$ instances, log--log): the true NashGap,
$C^1$, $C^0$, the aligned start-of-round certificate and the Pinsker bound
of Proposition~\ref{prop:kl-cert}.}
\label{fig:tightness-curves}
\end{figure}

\textbf{What the comparator term does.}
\label{sec:exp-alignment}
The aligned start-of-round certificate is sound at all $131{,}072$
checkpoints, and among the four certificates of Figure~\ref{fig:tightness}
$C^1$ is the tightest at $98.5\%$ of quantile levels and never looser than
$C^0$. On the sixteen-solver pool it selects the same $57/64$ as $C^1$
but certifies $55/64$ against $57/64$, and it is looser where the gap is
resolvable. Median tightness at the selected iterate is $1.9\times$ against
$1.4\times$ for $C^1$ over the $36$ instances whose selected gap is above
$10^{-12}$, and over the $127{,}849$ checkpoints with a resolvable gap a
median of $1.37\times$ against $1.27\times$ (Figure~\ref{fig:tightness}). On the seven original solvers alone
the contrast was wider, selection $51/64$ and certification $44/64$ against
$54/64$ and $48/64$, with medians $2.7\times$ against $1.8\times$ at the
selected iterate and $1.75\times$ against $1.34\times$ over all checkpoints.
The term $|L^{(2)}_{-2}|$ is $0.02\times$ the true gap at the median checkpoint
and $0.31\times$ at the $90$th percentile, and its zero-best-response bound
$\lVert w^{(2)}_2 \rVert_1$ is $0.27\times$ the gap at the median. On the
sixteen-solver pool $C^0$ certifies the same $57$ instances as $C^1$ in-sample
and one fewer held-out. The certified counts are not knife-edge. The smallest
$C^1$ certificate margin at $10^{-5}$ is $4.178\times10^{-6}$ ($42\%$ of the
threshold) over the $57$ in-sample solves (ExRM$^+$, seed $35$),
$4.760\times10^{-6}$ ($48\%$) over the $59$ held-out solves (DPRM$^+$), and
$8.235\times10^{-6}$ ($82\%$) over the $57$ fresh solves (Stable PRM$^+$).
The smallest computed $C^1$-minus-gap residual over all checkpoints is
$-4.221\times10^{-16}$ in-sample, $8.638\times10^{-17}$ held-out, and
$6.077\times10^{-16}$ on the fresh set; the negative value is double-precision
rounding noise within the $10^{-12}$ soundness gate.

\textbf{Uncertainty.}
All solved fractions in Table~\ref{tab:main} are counts over the same $64$
instances, and the head-to-head claims rest on few discordant pairs, so the
effective unit is the instance. A Wilson $95\%$ interval on that unit puts
the sixteen-solver $C^1$-selection fraction $57/64$ between $0.79$ and $0.95$,
and the in-sample certified counts share it. The two paired contrasts that
carry the claims are regret-only selection against $C^1$ on the
sixteen-solver pool ($8$ discordant instances, McNemar exact two-sided
$p = 0.008$) and the certified gain of the extension ($9$ instances
certified by the sixteen-solver pool and not by the seven-solver pool, none
the other way, $p = 0.004$). The remaining
paired contrasts, including the seven-solver selection gain over certification
(six instances, $p = 0.031$), rest on at most six instances each, so we
state them as counts.

\section{Conclusion}
\label{sec:conclusion}

The NashGap of a two-player weighted average admits an exact decomposition.
Its accumulator-based bound is sound for any update order and any number
of players, with the alternation mismatch entering as one signed comparator
term. The certificate turns a reported trajectory minimum into a stated
rule. Run a portfolio, score every checkpoint from quantities the updates
maintain, and output the argmin. On biased Leduc that rule matches the solvers'
trajectory-minimum union over sixteen training-free solvers, solving $89\%$
of instances at $10^{-5}$ in-sample, $92\%$ on pre-registered held-out
instances and $89\%$ again on a third, never-used set. It certifies every
solve in all three sets through $C^1$, and all but one held-out solve
through $C^0$, reading the accumulators alone. On three-player Leduc it
certifies a gap $354\times$ below the reported comparator. We prove the
identity, build the certificate that defines the selection rule, and
re-evaluate the benchmark under it. Selection costs no best-response pass, and one exact evaluation verifies the
chosen profile. Because the certificates read only the accumulators, any
regret-matching solver that maintains them can be scored. A solved fraction should report its
selection rule and its verification budget. The certificate supplies both.

\FloatBarrier
\bibliography{references}
\bibliographystyle{dilab_ref}
\makeappendixtoc

\appendix
\section{Deferred Proofs}
\label{app:proofs}

\subsection{Proof of Theorem~\ref{thm:tight}}
Fix $i$ and add and subtract $\sum_t \rho_t u_i(\sigma^t)$:
\begin{align*}
  \mathrm{gap}_i
  ={}& \Big[ \max_{x' \in \cX_i} u_i(x', \xbar_{-i})
        - \sum\nolimits_t \rho_t u_i(\sigma^t) \Big] \\
     &+ \Big[ \sum\nolimits_t \rho_t u_i(\sigma^t) - u_i(\sbar) \Big] \\
  ={}& \max_{x' \in \cX_i} \sum\nolimits_t \rho_t\,
        \big( u_i(x', x_{-i}^t) - u_i(\sigma^t) \big) + \Delta_i \\
  ={}& \Rbar_i + \Delta_i,
\end{align*}
using bilinearity for the comparator term (valid because $n=2$) and
$\sum_t \rho_t u_i(\sigma^t) = \ip{\dbar}{u_i}$ for the value term. Taking
$\max_i$ yields \eqref{eq:tight-identity}.

\subsection{Proof of Lemma~\ref{lem:weighted-decomp}}

Fix player $i$ and a comparator $x' \in \cX_i$; by perfect recall we may
identify $x'$ with behavioral probabilities $x'(a \mid I)$ and own-reach
weights $\pi^{x'}_i(I) \in [0,1]$. Write $v_i^t$ and $ir_i^t$ for the
counterfactual action value and the instantaneous counterfactual regret of
Section~\ref{sec:prelim}.
The laminar (telescoping) expansion of
\citet[Theorem~3]{zinkevich2007regret} over player $i$'s infoset tree is,
for each fixed $t$, an \emph{identity}:
\begin{equation}
  \label{eq:laminar-t}
  \begin{aligned}
  &u_i(x', \sigma_{-i}^t) - u_i(\sigma^t) \\
  &\quad= \sum_{I \in \cI_i} \pi^{x'}_{i}(I)
    \sum_{a \in A(I)} x'(a \mid I)\, ir_i^t(I, a).
  \end{aligned}
\end{equation}
(Both sides are affine in $x' \in \cX_i$ and linear in the leaf-payoff
vector: by the sequence-form convention
$\pi^{x'}_i(I)\,x'(a \mid I) = x'(Ia)$ is the realization-plan coordinate
of the child sequence $Ia$, so the right side is
$\sum_{I \in \cI_i}\sum_{a \in A(I)} x'(Ia)\,ir_i^t(I,a)$, linear in $x'$;
the left side adds only the constant $-u_i(\sigma^t)$. Both sides agree on
every pure $x'$ by direct telescoping, and the pure strategies are the
extreme points of the polytope $\cX_i$, whose convex hull is all of
$\cX_i$, so affine linearity extends the identity to all of $\cX_i$.)
Multiply \eqref{eq:laminar-t} by $w_t \ge 0$ and sum over $t$;
since $x'$ does not depend on $t$, the right side becomes
$\sum_{I} \pi^{x'}_i(I) \sum_a x'(a \mid I)\, Q_i(I,a)$ with
$Q_i(I,a) \coloneqq \sum_t w_t\, ir_i^t(I,a)$.

Before relaxing, note what the weighted sum already is. Writing
$x'(Ia) = \pi^{x'}_i(I)\,x'(a \mid I)$ for the realization-plan coordinate
of the child sequence,
\begin{equation}
  \label{eq:raw-regret-identity}
  r_i(x') \;\coloneqq\; \sum_t \rho_t \big( u_i(x', \sigma_{-i}^t) - u_i(\sigma^t) \big)
  \;=\; \frac{1}{W_T} \sum_{I \in \cI_i} \sum_{a \in A(I)} x'(Ia)\, Q_i(I,a),
\end{equation}
a linear functional of $x'$ with coefficients read off the raw table. Hence
$\Rbar_i = \max_{x' \in \cX_i} \frac{1}{W_T}\sum_{I,a} x'(Ia)\,Q_i(I,a)$
exactly: the raw accumulator loses no information about the weighted
external regret. What the relaxation below trades away is the single
maximization over $\cX_i$, replaced by a sum of independent positive local
maxima.

Now bound in three steps,
in this order:
\begin{align}
  \sum_a x'(a \mid I) Q_i(I,a)
  &\le \max_a Q_i(I,a) \notag\\
  &\le \big[\max_a Q_i(I,a)\big]^+,
\end{align}
because $x'(\cdot \mid I)$ is a probability distribution, and then
$\pi^{x'}_i(I) \big[\max_a Q_i(I,a)\big]^{+} \le
\big[\max_a Q_i(I,a)\big]^{+}$, which is valid \emph{because the positive
part has already made the bracket nonnegative} (dropping
$\pi^{x'}_i \in [0,1]$ before taking the positive part would be unsound:
a bracket that is negative at some infoset would flip the inequality).
Summing over $I$, dividing by $W_T$, and maximizing over $x' \in \cX_i$
yields the claim. Note that the maximum is taken over the
\emph{weighted cumulative} regrets $Q_i$, which is exactly the structure
of $\Rhat_i$, and that the aggregation over $t$ happens strictly before
the maximum and the positive part.
\hfill$\square$

\paragraph{Remark on the raw signed accumulator.}
The certificate requires a separate table
$Q(I,a)=\sum_t w_t ir_i^t(I,a)$ with no RM$^+$ clipping. The algorithm's
clipped internal table cannot in general provide $\Rhat_i$. The raw table
accumulates the very regrets the update evaluates, so it costs no extra
pass and one array of $|\Sigma_i|$ entries per player, the same order as
the sequence-form policy it sits beside;
Proposition~\ref{prop:sequence} accounts for the profile sequence
those regrets refer to.

\subsection{Proof of Proposition~\ref{prop:kl-cert}}

Fix $i$ and write
$\mathrm{gap}_i = \max_{x' \in \cX_i} u_i(x', \xbar_{-i}) - u_i(\sbar)$.
For the comparator term, $n=2$ means the opponent is a single player, so
by bilinearity $u_i(x', \xbar_{-i}) = \sum_t \rho_t\, u_i(x', x_{-i}^t)$
exactly. For the value term,
$u_i(\sbar) = \ip{\mubar}{u_i} \ge \ip{\dbar}{u_i} - M\sqrt{2\eps}
= \sum_t \rho_t\, u_i(\sigma^t) - M\sqrt{2\eps}$ by
Lemma~\ref{lem:exchange}. Combining,
$\mathrm{gap}_i \le \Rbar_i + M\sqrt{2\eps}$; take the maximum over $i$.
\hfill$\square$

\subsection{Proof of Proposition~\ref{prop:coalition}}

For the comparator term, for any $x' \in \cX_i$,
\begin{align*}
  &\left|u_i(x',\xbar_{-i})
    -\sum\nolimits_t\rho_t u_i(x',x_{-i}^t)\right| \\
  &\quad=\left|\sum\nolimits_z p_c(z)x'(z)u_i(z)
    \left(\prod\nolimits_{j\ne i}\xbar_j(z)-\bar c_{-i}(z)\right)\right| \\
  &\quad\le M D_{-i}.
\end{align*}
since $x'(z) \in [0,1]$ and $|u_i(z)| \le M$; the bound is uniform over
$x'$, so it passes through the maximum. The value term is handled
exactly as in Theorem~\ref{thm:tight} (it never used $n = 2$). Combining
the two gives $\mathrm{gap}_i \le \Rbar_i + M D_{-i} + \Delta_i$.
\hfill$\square$

\subsection{Proof of Proposition~\ref{prop:coalition-lbr}}

The display in the proof of Proposition~\ref{prop:coalition} is an
identity before any bound is taken: for every $x' \in \cX_i$,
\begin{equation*}
\begin{aligned}
  u_i(x',\xbar_{-i}) - \sum\nolimits_t \rho_t u_i(x', x_{-i}^t)
  &= \sum\nolimits_z x'(z)\, w_i(z) \\
  &= \ip{x'}{w_i}.
\end{aligned}
\end{equation*}
Let
\begin{equation*}
  r_i(x') \coloneqq \sum_t \rho_t
  \big(u_i(x',x_{-i}^t)-u_i(\sigma^t)\big).
\end{equation*}
Adding and subtracting $\sum_t \rho_t u_i(\sigma^t)$ as in
Theorem~\ref{thm:tight} and splitting the maximum over the sum,
\begin{align*}
  \mathrm{gap}_i
  &= \max_{x' \in \cX_i} \big[r_i(x') + \ip{x'}{w_i}\big]
     + \Delta_i \\
  &\le \max_{x' \in \cX_i} r_i(x')
     + \max_{x' \in \cX_i} \ip{x'}{w_i} + \Delta_i \\
  &= \Rbar_i + L_{-i} + \Delta_i.
\end{align*}
Since $x'(z) \in [0,1]$ and $|u_i(z)| \le M$,
$\ip{x'}{w_i} \le \sum_z |w_i(z)| \le M D_{-i}$ for every $x'$, hence
$L_{-i} \le M D_{-i}$. The maximum of the linear functional
$\ip{\cdot}{w_i}$ over the polytope $\cX_i$ is a sequence-form
best-response computation on the signed terminal weight $w_i$ (one
backward pass for player $i$); $L_{-i} < 0$ precisely when
$\ip{x'}{w_i} < 0$ for every $x' \in \cX_i$, for instance when every
pure plan reaches only terminals with negative $w_i$. For $n = 2$,
$w_i \equiv 0$ and $L_{-i} = 0$, recovering Theorem~\ref{thm:tight}.
The two inequalities of the chain are the only slack of the deployable
certificate: for the player $i$ maximizing $C^{\mathrm{coal}}$, since
$\Delta_i$ appears on both sides,
\begin{align*}
  C^{\mathrm{coal}} - \gap(\sbar)
  &= (\Rhat_i - \Rbar_i) + (M D_{-i} - L_{-i}) + \xi_i, \\
  \xi_i &\coloneqq \Big[\Rbar_i + L_{-i}
    - \max_{x' \in \cX_i}\big(r_i(x') + \ip{x'}{w_i}\big)\Big]
    - \big[\gap(\sbar) - \mathrm{gap}_i\big].
\end{align*}
The first bracket is the slack of splitting the maximum; it is
nonnegative and vanishes exactly when $r_i$ and $\ip{\cdot}{w_i}$ share a
maximizer over $\cX_i$, which agreement of the maximizing player does not
imply. The second bracket is nonnegative and vanishes when $i$ also
maximizes $\gap(\sbar)$.
\hfill$\square$

\subsection{Proof of Proposition~\ref{prop:sequence}}

Fix $i$ and write $\tau^t \coloneqq (x_i^t, y^{(i),t}_{-i})$ for the
profile at which $ir_i^t$ was computed; by construction
$\sum_t \rho_t u_i(\tau^t) = \ip{\dbar^{(i)}}{u_i}$ and
$\sum_t \rho_t \prod_{j \ne i} y^{(i),t}_j(z) = \bar c^{(i)}_{-i}(z)$.
For every $x' \in \cX_i$, multilinearity gives the identity
\begin{equation*}
  u_i(x', \xbar_{-i}) - \sum\nolimits_t \rho_t\, u_i(x', y^{(i),t}_{-i})
  = \sum\nolimits_z p_c(z)\, x'(z)\, u_i(z)
    \Big( \prod\nolimits_{j \ne i} \xbar_j(z) - \bar c^{(i)}_{-i}(z) \Big)
  = \ip{x'}{w^{(i)}_i},
\end{equation*}
which uses nothing about how the sequence $y^{(i),t}$ was generated.
Adding and subtracting $\sum_t \rho_t u_i(\tau^t)$,
\begin{align*}
  \mathrm{gap}_i
  &= \max_{x' \in \cX_i} \Big[ \sum\nolimits_t \rho_t
      \big( u_i(x', y^{(i),t}_{-i}) - u_i(\tau^t) \big)
      + \ip{x'}{w^{(i)}_i} \Big]
     + \ip{\dbar^{(i)}}{u_i} - u_i(\sbar) \\
  &\le \max_{x' \in \cX_i} \sum\nolimits_t \rho_t
      \big( u_i(x', y^{(i),t}_{-i}) - u_i(\tau^t) \big)
     + L^{(i)}_{-i} + \Delta^{(i)}_i
  \;\le\; \Rhat_i + L^{(i)}_{-i} + \Delta^{(i)}_i .
\end{align*}
The last step is Lemma~\ref{lem:weighted-decomp} applied to the sequence
$\tau^t$: the laminar identity \eqref{eq:laminar-t} holds at each
$\tau^t$ with $ir_i^t$ the counterfactual regrets computed there, and the
three relaxation steps are unchanged. The chain
$L^{(i)}_{-i} \le \lVert w^{(i)}_i \rVert_1 \le M D^{(i)}_{-i}$ is
$x'(z) \in [0,1]$ followed by $|u_i(z)| \le M$. Taking $\max_i$ gives
\eqref{eq:seq-cert}. For the two-player alternating drift \eqref{eq:drift},
$\sum_t \rho_t (x^t - x^{t+1})$ telescopes as stated. With $\rho_t$
nondecreasing the coefficients $\rho_1$ and $\rho_t - \rho_{t-1}$ are
nonnegative and sum to $\rho_T$, so $\rho_1 x_1^1 + \sum_{t \ge 2}
(\rho_t - \rho_{t-1}) x_1^t = \rho_T \tilde x_1$ with $\tilde x_1 \in \cX_1$
by convexity. Then $\ip{x'}{w^{(2)}_2} = \rho_T \ip{d_{(x',\tilde x_1)} -
d_{(x', x_1^{T+1})}}{u_2} = \rho_T\big(u_2(x',\tilde x_1) - u_2(x',
x_1^{T+1})\big) \in [-2M\rho_T, 2M\rho_T]$ because both utilities lie in
$[-M, M]$. The earlier $L^1$ bound $\lVert w^{(2)}_2\rVert_1 \le
2M\rho_T\sum_z p_c(z)$ follows from the same decomposition and
$|x_1(z)| \le 1$, but carries the factor $\sum_z p_c(z)$ that the bound on
the exact term does not need.
\hfill$\square$

\subsection{The logged state determines the exact gap}
\label{app:exact-from-logs}

The two relaxations of Proposition~\ref{prop:sequence}, the per-infoset
positive parts in $\Rhat_i$ and the split of one maximum into two, can both
be undone at the cost of one linear maximization per player.

\begin{proposition}[Exact evaluator from the logged state, general $n$]
\label{prop:exact-from-logs}
For any number of players, nonnegative weights and evaluation sequences,
with $Q_i(I,a) = \sum_t w_t\, ir_i^t(I,a)$ the raw signed accumulator of the
regrets measured at $\tau^t = (x_i^t, y^{(i),t}_{-i})$,
\begin{equation}
  \label{eq:exact-from-logs}
  \mathrm{gap}_i(\sbar) \;=\; \Delta^{(i)}_i \;+\;
  \max_{x' \in \cX_i} \Big[ \frac{1}{W_T}\sum_{I \in \cI_i}\sum_{a \in A(I)}
    x'(Ia)\, Q_i(I,a) \;+\; \ip{x'}{w^{(i)}_i} \Big],
\end{equation}
for every player $i$, and $\gap(\sbar) = \max_i \mathrm{gap}_i(\sbar)$.
\end{proposition}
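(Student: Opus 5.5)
The plan is to redo the proof of Proposition~\ref{prop:sequence} and stop before either relaxation is applied. Fix a player $i$ and write $\tau^t = (x_i^t, y^{(i),t}_{-i})$. By definition
\[
\mathrm{gap}_i(\sbar) = \max_{x' \in \cX_i} u_i(x', \xbar_{-i}) - u_i(\sbar).
\]
Add and subtract $\sum_t \rho_t u_i(\tau^t) = \ip{\dbar^{(i)}}{u_i}$. The value part is then exactly $\Delta^{(i)}_i$, and it does not depend on $x'$, so it comes outside the maximum. The comparator part is handled by the multilinearity identity from the proof of Proposition~\ref{prop:sequence}. For every $x'$,
\[
u_i(x', \xbar_{-i}) = \sum_t \rho_t u_i(x', y^{(i),t}_{-i}) + \ip{x'}{w^{(i)}_i},
\]
and this identity uses nothing about how the evaluation sequence was generated. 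Combining the two parts gives
\[
\mathrm{gap}_i(\sbar) = \Delta^{(i)}_i + \max_{x'} \Big[ r^{(i)}_i(x') + \ip{x'}{w^{(i)}_i} \Big],
\]
where $r^{(i)}_i(x') = \sum_t \rho_t \big(u_i(x', y^{(i),t}_{-i}) - u_i(\tau^t)\big)$. This step is an equality, because the maximum is never split.

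The second step is to identify $r^{(i)}_i(x')$ with the raw-table functional. This is equation \eqref{eq:raw-regret-identity} from the proof of Lemma~\ref{lem:weighted-decomp}, applied to the sequence $\tau^t$ in place of $\sigma^t$. For each fixed $t$, the laminar identity \eqref{eq:laminar-t} holds at $\tau^t$ with $ir_i^t$ the counterfactual regrets computed there. Its right side equals $\sum_{I,a} x'(Ia)\, ir_i^t(I,a)$, since $\pi^{x'}_i(I)\,x'(a\mid I) = x'(Ia)$. Multiplying by $w_t$, summing over $t$ and dividing by $W_T$ gives
\[
r^{(i)}_i(x') = \frac{1}{W_T} \sum_{I,a} x'(Ia)\, Q_i(I,a)
\]
for every $x' \in \cX_i$. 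Substituting this into the display above gives \eqref{eq:exact-from-logs}. The final claim $\gap(\sbar) = \max_i \mathrm{gap}_i(\sbar)$ is just the definition of NashGap.

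The main obstacle is to make sure the laminar identity really holds at the mixed evaluation profile $\tau^t$, uniformly over all of $\cX_i$ and not only at pure plans. The point to check is that \eqref{eq:laminar-t} needs only two facts. First, the opponents' reach entering $v_i^t$ is the one actually used, namely $p_c \prod_{j\ne i} y^{(i),t}_j$. Second, the own continuation probabilities $\pi_i^t(z\mid Ia)$ come from $x_i^t$. Both hold by the way $ir_i^t$ is measured. With these, telescoping at pure plans and affine extension over the polytope go through exactly as in the appendix. I would also state that the objective inside the maximum is linear in the sequence-form coordinates, because the equality then shows it is a single backward pass per player.
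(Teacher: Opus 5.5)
Your proposal is correct and is essentially the paper's own proof. You keep the first display of the proof of Proposition~\ref{prop:sequence} as an equality with the maximum unsplit, then apply identity~\eqref{eq:raw-regret-identity} at each evaluated profile $\tau^t$, summed with the weights, to replace $r_i(x')$ by $W_T^{-1}\sum_{I,a} x'(Ia)\,Q_i(I,a)$, and your check that the laminar identity needs only the opponent reach and own continuation actually used when $ir_i^t$ was measured is the fact the paper invokes when it applies that identity at each $\tau^t$.
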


\begin{proof}
In the proof of Proposition~\ref{prop:sequence}, the first display is an
identity in $x'$ before any inequality is taken:
$\mathrm{gap}_i = \max_{x'} [\, r_i(x') + \ip{x'}{w^{(i)}_i} \,] + \Delta^{(i)}_i$
with $r_i(x') = \sum_t \rho_t ( u_i(x', y^{(i),t}_{-i}) - u_i(\tau^t) )$.
Identity~\eqref{eq:raw-regret-identity}, applied at each $\tau^t$ and summed
with the weights, gives $r_i(x') = W_T^{-1}\sum_{I,a} x'(Ia)\,Q_i(I,a)$.
Substituting proves \eqref{eq:exact-from-logs}.
\end{proof}

The bracket in \eqref{eq:exact-from-logs} is one linear functional of the
realization plan: the coefficient of sequence $Ia$ is $Q_i(I,a)/W_T$ plus the
sum of $w^{(i)}_i(z)$ over the terminals $z$ whose last player-$i$ sequence is
$Ia$. Its maximum over $\cX_i$ is one sequence-form backward pass for player
$i$, the same pass that computes $L^{(i)}_{-i}$. Thus the pass that refines
$C^0$ into $C^1$ can instead return the exact gap; $C^1$ is the choice that
keeps the three terms separate so that the bound attributes its slack, and
the exact-gap selector of Table~\ref{tab:budget} is the same pass with the
two linear functionals added before maximizing. The certificate's cost
advantage is $C^0$, which needs no pass at all.

\begin{remark}[The target profile is free]
\label{rem:free-target}
Nothing in the proofs of Propositions~\ref{prop:sequence}
and~\ref{prop:exact-from-logs} uses that the target is the average of the
evaluated own strategies. Replace $\xbar_j$ by the realization plans
$\hat x_j$ of any profile $\hat\sigma$ in $w^{(i)}_i$ and $u_i(\sbar)$ by
$u_i(\hat\sigma)$ in $\Delta^{(i)}_i$; both statements then bound or
evaluate $\mathrm{gap}_i(\hat\sigma)$, provided $Q_i$ and $\dbar^{(i)}$ were
accumulated from the same evaluated own strategies $x_i^t$. The evaluated
own strategy may also differ from the averaged one (as in extragradient or
multiple-evaluation updates); only the matching between $Q_i$, $\dbar^{(i)}$
and the evaluated profile matters.
\end{remark}

\subsection{A zero-pass comparator term no larger than $\lVert w \rVert_1$}
\label{app:c0-plus}

Write $a^{(i)}_i(z) \coloneqq p_c(z)\big(\prod_{j \ne i}\xbar_j(z) -
\bar c^{(i)}_{-i}(z)\big)$, so that $w^{(i)}_i = u_i\, a^{(i)}_i$.

\begin{lemma}[Positive part and shift invariance]
\label{lem:c0-plus}
For every $x' \in \cX_i$ and every constant $c \in \mathbb{R}$,
\begin{equation}
  \ip{x'}{w^{(i)}_i} \;=\; \ip{x'}{(u_i - c)\,a^{(i)}_i}
  \;\le\; \sum_{z \in \cZ} \big[(u_i(z) - c)\,a^{(i)}_i(z)\big]^+
  \;\le\; \lVert (u_i - c)\,a^{(i)}_i \rVert_1 .
\end{equation}
Consequently $L^{(i)}_{-i} \le \min_{c}\sum_z [(u_i(z)-c)\,a^{(i)}_i(z)]^+
\le \sum_z [w^{(i)}_i(z)]^+ \le \lVert w^{(i)}_i \rVert_1$, and the
certificate obtained from $C^0$ by replacing $\lVert w^{(i)}_i\rVert_1$ with
$\min_c \sum_z [(u_i(z)-c)\,a^{(i)}_i(z)]^+$ is sound and never larger than
$C^0$.
\end{lemma}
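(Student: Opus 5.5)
The proof rests on one structural fact. The signed comparator term is tested only against realization-plan weights $x'(z)$, and these sum to a fixed quantity over the support of the coalition imbalance. I would proceed in three steps.

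\textbf{Step 1 (shift invariance).} I would first establish $\ip{x'}{a^{(i)}_i} = 0$ for every $x' \in \cX_i$. This equals $\sum_z p_c(z)\, x'(z) \prod_{j\ne i}\xbar_j(z) - \sum_t \rho_t \sum_z p_c(z)\, x'(z)\prod_{j\ne i} y^{(i),t}_j(z)$. Each sum is the total probability mass of a terminal distribution induced by a full profile: $(x', \xbar_{-i})$ for the first, and $(x', y^{(i),t}_{-i})$ for each $t$ in the second. Every such profile is a product of valid realization plans, so each mass equals $1$. Since $\sum_t \rho_t = 1$, the difference is $1 - 1 = 0$. This is the same observation that makes $\ip{d_\sigma}{\mathbf 1}=1$ in the preliminaries: the terminal measure of any profile is a probability distribution. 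The step needs $\xbar_j \in \cX_j$ (convexity) and $y^{(i),t}_j \in \cX_j$, both available. Consequently $\ip{x'}{c\,a^{(i)}_i} = 0$ for every constant $c$, which gives the first equality $\ip{x'}{w^{(i)}_i} = \ip{x'}{(u_i-c)a^{(i)}_i}$.

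\textbf{Step 2 (the two inequalities).} Write $f_c = (u_i - c)\,a^{(i)}_i$. Because $x'(z)\in[0,1]$, we have $x'(z) f_c(z) \le [f_c(z)]^+$ termwise, so $\ip{x'}{f_c} \le \sum_z [f_c(z)]^+$. Then $[f_c(z)]^+ \le |f_c(z)|$ gives the $\ell_1$ bound.

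\textbf{Step 3 (the consequences).} The bound from Step 2 holds for every $x'$ and every $c$. Maximizing over $x'$ gives $L^{(i)}_{-i} \le \sum_z [f_c(z)]^+$ for all $c$, and taking the infimum over $c$ gives the first inequality. I would check that the infimum is attained: $c\mapsto\sum_z[f_c(z)]^+$ is convex and piecewise linear in $c$. Its slope is $-\sum_z a^{(i)}_i(z)\,\mathbf 1[f_c(z)>0]$, which is not obviously of a sign that prevents the function from being unbounded below. This does not threaten soundness, since the bound $L^{(i)}_{-i} \le \sum_z [f_c(z)]^+$ holds for each $c$. If the infimum were $-\infty$ it could not exceed the finite $L^{(i)}_{-i}$, so the pointwise bound itself rules that out. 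I would therefore phrase the result as an infimum and note that it is at least $L^{(i)}_{-i}$.

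Setting $c=0$ yields $\sum_z[w^{(i)}_i(z)]^+$, and the positive-part bound then gives $\lVert w^{(i)}_i\rVert_1$. Finally, substituting the new term for $L^{(i)}_{-i}$ in Proposition~\ref{prop:sequence} preserves soundness, because it dominates $L^{(i)}_{-i}$. It is never larger than $C^0$ playerwise, hence also after $\max_i$.

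I expect Step 1 to be the main obstacle, specifically verifying that the masses are exactly $1$. This requires that the chance reach together with all players' realization weights sums to one over $\cZ$ for any profile of valid plans, including the averaged plans $\xbar_j$. That follows from multilinearity: the mass is affine in each plan and equals $1$ at pure plans.
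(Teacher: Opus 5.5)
Your proposal is correct and follows the paper's proof step for step. It uses the zero total mass of $a^{(i)}_i$ against every $x'\in\cX_i$ (both sums are masses of terminal distributions, equal to $1$ once $\sum_t\rho_t=1$), then the termwise positive-part bound from $x'(z)\in[0,1]$, then substitution into Proposition~\ref{prop:sequence}.

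You do not need to retreat from $\min_c$ to $\inf_c$. The map is a sum of positive parts, so it is $\ge 0$ and cannot be unbounded below. A piecewise-linear function of $c$ with finitely many breakpoints $u_i(z)$ that is bounded below attains its minimum, at a breakpoint when $a^{(i)}_i\neq 0$. The paper phrases this as coercivity, and your Step~1 already supplies it: every terminal has $x'(z)=1$ for some pure plan $x'$, so $\ip{x'}{a^{(i)}_i}=0$ forces a nonzero $a^{(i)}_i$ to take both signs.
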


\begin{proof}
Shift invariance: $\sum_z x'(z)\,p_c(z)\prod_{j\ne i}\xbar_j(z) = 1$ is the
total mass of the terminal distribution of $(x', \xbar_{-i})$, and
$\sum_z x'(z)\,p_c(z)\,\bar c^{(i)}_{-i}(z) = \sum_t \rho_t \sum_z p_c(z)\,
x'(z)\prod_{j\ne i} y^{(i),t}_j(z) = \sum_t \rho_t = 1$ likewise; hence
$\ip{x'}{a^{(i)}_i} = 0$ and the constant $c$ drops out. The first
inequality uses $x'(z) \in [0,1]$ termwise, keeping only terminals where the
coefficient is positive; the second drops the positive-part restriction.
The minimum over $c$ exists because the map $c \mapsto \sum_z [(u_i(z)-c)
a^{(i)}_i(z)]^+$ is convex, piecewise linear and coercive whenever
$a^{(i)}_i \ne 0$ (it is identically zero otherwise). Soundness of the
modified certificate follows by substituting the bound for
$L^{(i)}_{-i}$ in \eqref{eq:seq-cert}.
\end{proof}

The minimization over $c$ is one-dimensional over the breakpoints
$\{u_i(z)\}$ and needs no tree pass.

\subsection{Tightness is not universal}
\label{app:no-universal-tightness}

Soundness holds for every game and trajectory; no multiplicative tightness
guarantee does. Let player 1 choose Out (payoff $0$) or In, after which
player 1 chooses Bad (payoff $-1$) or Good (payoff $\delta \ge 0$); the
opponent has no decision. Repeat for $T$ rounds the behavioral profile
that plays Out at the root and Bad at the In-infoset. The average profile
is the same profile, so $\gap(\sbar) = \delta$ and $\Delta_1 = 0$ (a single
decision-maker produces no correlation). At the root, $v(\text{Out}) = 0$
and $v(\text{In}) = -1$ (the continuation plays Bad), so the cumulative
regret of In is negative and $[\max_a Q_1(\text{root},a)]^+ = 0$. At the
In-infoset the counterfactual convention removes the own reach of $0$, so
$v(\text{Good}) = \delta$, $v(\text{Bad}) = -1$ and $Q_1(\text{In},
\text{Good})/W_T = 1 + \delta$. Hence $\Rhat_1 = 1 + \delta$ and
$C^0 = C^1 = 1 + \delta$ while the gap is $\delta$: at $\delta = 0$ the
profile is an exact equilibrium certified only to $1$, and the ratio
$C^1/\gap$ is unbounded as $\delta \downarrow 0$. The slack is the
positive part at the root, which discards the negative upstream regret
that would cancel the off-path term; the exact evaluator of
Proposition~\ref{prop:exact-from-logs} returns $x'(\text{In}) = x'(\text{In}\cdot\text{Good}) = 1$
and the value $-1 + (1 + \delta) = \delta$. Tightness on the benchmarks of
Section~\ref{sec:experiments} is therefore a property of the trajectories
regret matching produces there, not of the certificate alone.

\subsection{Additional details for Proposition~\ref{prop:coalition}}

The value-term step used in Theorem~\ref{thm:tight} and
Proposition~\ref{prop:coalition} is the identity
$\sum_t \rho_t u_i(\sigma^t) = \ip{\dbar}{u_i}$, which holds for any $n$:
each $u_i(\sigma^t) = \sum_z p_c(z) \prod_j x_j^t(z)\, u_i(z)$, and
exchanging the (finite) sums over $t$ and $z$ gives the weighted empirical
terminal distribution $\dbar$ by definition. Similarly
$u_i(\sbar) = \ip{\mubar}{u_i}$ with
$\mubar = p_c \prod_j \xbar_j$ by Kuhn's theorem. Hence
$\Delta_i = \ip{\dbar - \mubar}{u_i}$ is measurable from the two running
terminal distributions.

The cumulative sums stored by the implementation are unnormalized:
$J_T(z)=\sum_{t\le T}w_t d_{\sigma^t}(z)$,
$X_{i,T}=\sum_{t\le T}w_t x_i^t$, and
$H_{i,T}(z)=\sum_{t\le T}w_t\prod_{j\ne i}x_j^t(z)$. They are divided by
$W_T>0$ only when a checkpoint is evaluated. Thus
$\bar c_{-i}=H_{i,T}/W_T$ and
$D_{-i}=\sum_zp_c(z)|H_{i,T}(z)/W_T-
\prod_{j\ne i}(X_{j,T}(z)/W_T)|$.
This is algebraically equivalent to normalized online maintenance but avoids
changing all previous entries when $W_T$ grows.

\paragraph{An explicit game for Remark~\ref{rem:eps-blind}.}
\label{app:eps-blind-game}
Three players. Player 1 chooses Out, ending the game with payoff $0$ to
everyone, or In; after In, players 2 and 3 simultaneously choose a bit.
Player 1 receives $1$ if the bits differ and $0$ otherwise; players 2 and
3 always receive $0$. Take two equally weighted rounds in which player 1
plays Out in both and the opponents play $(0,0)$ in the first and $(1,1)$
in the second. Both rounds end at Out, so $\dbar = \delta_{\text{Out}}$;
the average plan of player 1 is Out with probability $1$, so
$\mubar = \delta_{\text{Out}}$ as well and $\eps = \KL(\dbar \Vert \mubar)
= 0$. Every weighted external regret is $0$: player 1 deviating to In earns
$0$ against $(0,0)$ and $0$ against $(1,1)$, and players 2 and 3 have
constant payoff. Yet the average profile has $\xbar_2 = \xbar_3 =
(\tfrac12,\tfrac12)$, so deviating to In earns player 1 the value
$\tfrac12$: $\gap(\sbar) = \tfrac12$. The coalition term sees it:
$\bar c_{-1}$ puts mass $\tfrac12$ on each of $(0,0)$ and $(1,1)$ while
$\xbar_2\xbar_3$ puts $\tfrac14$ on each of the four bit pairs, so
$D_{-1} = 4 \cdot \tfrac14 = 1$, Proposition~\ref{prop:coalition} gives
$\gap \le 0 + 0 + M D_{-1} = 1$, and the exact comparator of
Proposition~\ref{prop:coalition-lbr} gives $w_1 = \tfrac14$ on $(0,1)$
and $(1,0)$, $0$ elsewhere, so $L_{-1} = \tfrac12$ and the bound is
exact. A bound that vanishes with $\eps$ cannot hold here: $\eps = 0$ and
the regrets are $0$ while the gap is $\tfrac12$, so for $n \ge 3$ the
coalition term is not a function of $\eps$ and the regrets.

\section{Per-Solver Results}
\label{app:tables}

\subsection{Portfolio Composition and Ablation}
\label{app:pool-ablation}

\paragraph{Portfolio composition and selection signal.}
The sixteen-solver pool solves $0.89$, against $0.84$ for the seven original
solvers and $0.80$ for the best single solver, Stable PRM$^+$, a gain of six
instances over that solver with no losses. The certified fraction is $0.89$ as
well. Within the original seven, removing the DPRM$^+$ solvers lowers the solved
fraction to $0.70$. Of the added families, the stabilised RM$^+$ solvers
contribute two new solved instances and IR-PRM$^+$ one. The discounting and
schedule families add none to the union but raise the certified fraction
(Table~\ref{tab:pools}). On the sixteen-solver pool, selecting by
$\max_i\Rhat_i$ alone loses eight instances, at $0.77$, the final weighted
average loses one, at $0.88$, and a uniform random solver and checkpoint
solve $0.35$. $C^1$ picks the
same checkpoint as the regret term in $39\%$ of runs
(Table~\ref{tab:selection-ablation-perarm}). $C^1$ selection matches the oracle
union for every tested pool.

\begin{table}[t]
\centering
\small
\caption{Pool ablation at threshold $10^{-5}$ (biased Leduc, $n{=}64$
paired). PRM$^+$-core contains PRM$^+$ with $p\in\{1,2\}$ and DPRM$^+$
with $p=2,\lambda=0.3$; no DPRM$^+$ contains RM$^+$ and PRM$^+$ with
$p\in\{1,2\}$; original contains the seven solvers of the first stage; the
``$+$'' rows add one family of the extension to them (discounting: DCFR,
DCFR$^+$, PDCFR$^+$; schedules: HS-PCFR$^+$, HS-DCFR; stabilised: Smooth
PRM$^+$, Stable PRM$^+$, ExRM$^+$); all contains the sixteen solvers. Selection
is the exact gap at
the cross-solver $C^1$ argmin, certified columns count instances whose
selected certificate is below the threshold, aligned is selection by the
start-of-round certificate, and oracle union is the pool's best trajectory
minimum, which exact-gap selection on the same grid attains.}
\label{tab:pools}
\begin{tabular}{lcccccc}
\toprule
Pool & solvers & selection $C^1$ & certified $C^1$ & certified $C^0$ & aligned & oracle union \\
\midrule
PRM$^+$-core & $3$ & $0.80$ & $0.72$ & $0.70$ & $0.77$ & $0.80$ \\
no DPRM$^+$ & $4$ & $0.70$ & $0.64$ & $0.64$ & $0.67$ & $0.70$ \\
original & $7$ & $0.84$ & $0.75$ & $0.73$ & $0.80$ & $0.84$ \\
$+$ discounting & $10$ & $0.84$ & $0.77$ & $0.77$ & $0.81$ & $0.84$ \\
$+$ schedules & $9$ & $0.84$ & $0.77$ & $0.73$ & $0.81$ & $0.84$ \\
$+$ stabilised RM$^+$ & $10$ & $0.88$ & $0.88$ & $0.88$ & $0.88$ & $0.88$ \\
$+$ IR-PRM$^+$ & $8$ & $0.86$ & $0.81$ & $0.80$ & $0.84$ & $0.86$ \\
all & $16$ & $\mathbf{0.89}$ & $\mathbf{0.89}$ & $0.89$ & $0.89$ & $0.89$ \\
\bottomrule
\end{tabular}
\end{table}

Weighted RM$^+$, uniform PRM$^+$, DCFR, DCFR$^+$ and HS-DCFR each solve
$0.00$ at the target. Regret discounting without a predictive step does not
reach it, and among these tested configurations only predictive updates
combined with weighting do. Appendix~\ref{app:impl} states the scope of this
local comparison. Table~\ref{tab:union-per-threshold} extends the
union-match statement to every evaluated threshold.

\begin{table}[t]
\centering
\small
\setlength{\tabcolsep}{5pt}
\caption{Solved counts (of $64$ paired biased-Leduc instances) at each
threshold for $C^1$ selection, the solvers' trajectory-minimum union, selection
by the aligned start-of-round certificate, and the certified counts; the
sixteen-solver pool, with the seven-solver counts in parentheses.}
\label{tab:union-per-threshold}
\begin{tabular}{lccccc}
\toprule
Threshold & $C^1$ selection & Oracle union & Aligned selection & Certified $C^1$ & Certified $C^0$ \\
\midrule
$10^{-2}$ & $64$ ($64$) & $64$ ($64$) & $64$ ($64$) & $64$ ($64$) & $64$ ($64$) \\
$10^{-3}$ & $64$ ($64$) & $64$ ($64$) & $64$ ($64$) & $64$ ($64$) & $64$ ($64$) \\
$10^{-4}$ & $62$ ($58$) & $62$ ($58$) & $62$ ($58$) & $62$ ($57$) & $62$ ($56$) \\
$10^{-5}$ & $57$ ($54$) & $57$ ($54$) & $57$ ($51$) & $57$ ($48$) & $57$ ($47$) \\
\bottomrule
\end{tabular}
\end{table}

\subsection{Per-Solver Biased-Leduc Results}
\label{app:per-arm-biased-leduc}

Table~\ref{tab:perarm} reports the oracle (trajectory-minimum) solved
fractions of every portfolio solver on the $64$ paired biased-Leduc
instances, complementing the pool-level view of the main text. The DPRM$^+$
solvers reach $0.67$--$0.75$ at $10^{-5}$ and the added solvers $0.00$--$0.80$;
Stable PRM$^+$ at $0.80$ is the strongest single solver, but none reaches the
pool selection value of $0.89$. Both weighted RM$^+$ solvers and the three
solvers that discount regrets without a predictive step are at $0.00$. Within
fourteen solvers, selection by $C^1$ reaches the solver's own oracle fraction at
$10^{-5}$; HS-PCFR$^+$ ($45$ of $46$) and ExRM$^+$ ($16$ of $17$) each miss
one instance. DPRM$^+$ uses the decorrelation prediction with
$\lambda \in \{0.1, 0.3\}$ as noted; tie-breaking details are in the
run scripts, and the unsolved-instance seed lists are in
Section~\ref{app:heldout}.

\begin{table}[t]
\centering
\small
\setlength{\tabcolsep}{5pt}
\caption{Per-solver solved fractions, biased Leduc, $n{=}64$ paired instances,
$T = 2^{18}$. The first four columns are the oracle trajectory-minimum
protocol; the last column reports the $C^1$-argmin extractable fraction
within that solver alone, at threshold $10^{-5}$ (\emph{n/a} on the two
portfolio rows, which are not single solvers). All solvers use alternating
updates except ExRM$^+$ (simultaneous).}
\label{tab:perarm}
\begin{tabular}{lccccc}
\toprule
Solver & $10^{-2}$ & $10^{-3}$ & $10^{-4}$ & $10^{-5}$ & select \\
\midrule
RM$^+$, $p{=}1$ & $1.00$ & $0.50$ & $0.00$ & $0.00$ & $0.00$ \\
RM$^+$, $p{=}2$ & $1.00$ & $0.52$ & $0.00$ & $0.00$ & $0.00$ \\
PRM$^+$, $p{=}1$ & $1.00$ & $1.00$ & $0.88$ & $0.67$ & $0.67$ \\
PRM$^+$, $p{=}2$ & $1.00$ & $1.00$ & $0.89$ & $0.70$ & $0.70$ \\
DPRM$^+$, $p{=}1$, $\lambda{=}0.3$ & $1.00$ & $1.00$ & $0.88$ & $0.73$ & $0.73$ \\
DPRM$^+$, $p{=}2$, $\lambda{=}0.1$ & $1.00$ & $1.00$ & $0.88$ & $0.67$ & $0.67$ \\
DPRM$^+$, $p{=}2$, $\lambda{=}0.3$ & $1.00$ & $1.00$ & $0.91$ & $0.75$ & $0.75$ \\
\midrule
DCFR ($\alpha{=}1.5$, $\beta{=}0$), $p{=}2$ & $1.00$ & $0.62$ & $0.02$ & $0.00$ & $0.00$ \\
DCFR$^+$ ($\alpha{=}1.5$), $p{=}4$ & $1.00$ & $0.42$ & $0.00$ & $0.00$ & $0.00$ \\
PDCFR$^+$ ($\alpha{=}1.5$), $p{=}5$ & $1.00$ & $1.00$ & $0.75$ & $0.66$ & $0.66$ \\
HS-PCFR$^+$ ($\gamma_0{=}30$) & $1.00$ & $1.00$ & $0.80$ & $0.72$ & $0.70$ \\
HS-DCFR ($\gamma_0{=}15$) & $1.00$ & $0.52$ & $0.00$ & $0.00$ & $0.00$ \\
Smooth PRM$^+$ ($\eta{=}10^4$), $p{=}2$ & $1.00$ & $1.00$ & $0.88$ & $0.73$ & $0.73$ \\
Stable PRM$^+$ ($\eta{=}10^3$), $p{=}2$ & $1.00$ & $1.00$ & $0.89$ & $0.80$ & $0.80$ \\
ExRM$^+$ ($\eta{=}10^4$), $p{=}2$, simult. & $1.00$ & $0.94$ & $0.53$ & $0.27$ & $0.25$ \\
IR-PRM$^+$, $p{=}2$ & $1.00$ & $0.81$ & $0.28$ & $0.19$ & $0.19$ \\
\midrule
portfolio of $7$ (selection) & $1.00$ & $1.00$ & $0.91$ & $0.84$ & n/a \\
portfolio of $16$ (selection) & $1.00$ & $1.00$ & $0.97$ & $0.89$ & n/a \\
\bottomrule
\end{tabular}
\end{table}

To make the selection signal itself auditable,
Table~\ref{tab:selection-ablation-perarm} reports, per solver, the selection
policies under comparison: $C^1$ argmin (the
paper's method), the aligned start-of-round certificate, the regret-only
surrogate $\max_i \Rhat_i$, the value-corrected score $\max_i[\Rhat_i +
\Delta^{(i)}_i]$ without any comparator term (the evaluation-sequence
value correction of Proposition~\ref{prop:sequence}), the final
weighted-average iterate,
and a uniformly random checkpoint. Cross-solver rows choose the solver by the same
score; the final-iterate row chooses the solver by $C^1$ at the last
checkpoint, and the random row draws solver and checkpoint uniformly.
Dropping the value and comparator terms costs the sixteen-solver portfolio
eight instances ($0.89$ versus $0.77$; on the seven original solvers it cost
one, seed $33$) and costs no single solver more than one or two. Keeping
$\Delta^{(i)}_i$ alone recovers all of them, so on this benchmark the
comparator term does not change any solved-or-not outcome. The value
term matters more the larger the pool, since with sixteen candidates the regret
accumulator alone ranks loosely certified runs ahead of solved ones more
often. $C^1$ and the regret surrogate pick the same checkpoint in $39\%$ of
runs ($46\%$ on the original seven), so the two terms move the chosen
checkpoint often and the extracted outcome rarely. The aligned certificate
selects the same instances as $C^1$ on the sixteen-solver portfolio (three
fewer on the original seven) and one or two fewer on several solvers. The
final-iterate policy is close but systematically lower, and the random
policy lower still. The regret term carries most of the trajectory minimum;
selection does not recover it for free.

\begin{table}[t]
\centering
\small
\setlength{\tabcolsep}{5pt}
\caption{Per-solver selection-policy ablation at threshold $10^{-5}$
(computed from the stored checkpoint histories; see the text above for
the policy definitions). Random
values are expectations over one uniform checkpoint draw.}
\label{tab:selection-ablation-perarm}
\begin{tabular}{lccccccc}
\toprule
Solver & $C^1$ & aligned & regret-only & $\Rhat{+}\Delta^{(i)}$ & final & random & oracle \\
\midrule
RM$^+$, $p{=}1$ & $0.00$ & $0.00$ & $0.00$ & $0.00$ & $0.00$ & $0.00$ & $0.00$ \\
RM$^+$, $p{=}2$ & $0.00$ & $0.00$ & $0.00$ & $0.00$ & $0.00$ & $0.00$ & $0.00$ \\
PRM$^+$, $p{=}1$ & $0.67$ & $0.67$ & $0.67$ & $0.67$ & $0.66$ & $0.57$ & $0.67$ \\
PRM$^+$, $p{=}2$ & $0.70$ & $0.67$ & $0.70$ & $0.70$ & $0.67$ & $0.60$ & $0.70$ \\
DPRM$^+$, $p{=}1$, $\lambda{=}0.3$ & $0.73$ & $0.72$ & $0.72$ & $0.73$ & $0.70$ & $0.55$ & $0.73$ \\
DPRM$^+$, $p{=}2$, $\lambda{=}0.1$ & $0.67$ & $0.64$ & $0.66$ & $0.67$ & $0.62$ & $0.51$ & $0.67$ \\
DPRM$^+$, $p{=}2$, $\lambda{=}0.3$ & $0.75$ & $0.75$ & $0.75$ & $0.75$ & $0.73$ & $0.63$ & $0.75$ \\
\midrule
DCFR, $p{=}2$ & $0.00$ & $0.00$ & $0.00$ & $0.00$ & $0.00$ & $0.00$ & $0.00$ \\
DCFR$^+$, $p{=}4$ & $0.00$ & $0.00$ & $0.00$ & $0.00$ & $0.00$ & $0.00$ & $0.00$ \\
PDCFR$^+$, $p{=}5$ & $0.66$ & $0.66$ & $0.66$ & $0.66$ & $0.66$ & $0.56$ & $0.66$ \\
HS-PCFR$^+$ & $0.70$ & $0.67$ & $0.67$ & $0.70$ & $0.64$ & $0.56$ & $0.72$ \\
HS-DCFR & $0.00$ & $0.00$ & $0.00$ & $0.00$ & $0.00$ & $0.00$ & $0.00$ \\
Smooth PRM$^+$, $p{=}2$ & $0.73$ & $0.70$ & $0.73$ & $0.73$ & $0.73$ & $0.61$ & $0.73$ \\
Stable PRM$^+$, $p{=}2$ & $0.80$ & $0.78$ & $0.80$ & $0.80$ & $0.80$ & $0.65$ & $0.80$ \\
ExRM$^+$, $p{=}2$, simult. & $0.25$ & $0.25$ & $0.22$ & $0.25$ & $0.20$ & $0.14$ & $0.27$ \\
IR-PRM$^+$, $p{=}2$ & $0.19$ & $0.19$ & $0.19$ & $0.19$ & $0.19$ & $0.15$ & $0.19$ \\
\midrule
portfolio of $7$ & $0.84$ & $0.80$ & $0.83$ & $0.84$ & $0.78$ & $0.41$ & $0.84$ \\
portfolio of $16$ & $0.89$ & $0.89$ & $0.77$ & $0.89$ & $0.88$ & $0.35$ & $0.89$ \\
\bottomrule
\end{tabular}
\end{table}

\section{External-Validity Games}
\label{app:external}

\paragraph{Stress tests beyond Leduc.}
On Sheriff, selection by $C^1$ solves every instance at every threshold, and
each PRM$^+$-family solver certifies all of them. Among the added solvers so do
PDCFR$^+$, HS-PCFR$^+$, Smooth PRM$^+$, Stable PRM$^+$ and IR-PRM$^+$, while
HS-DCFR ($0.02$) and ExRM$^+$ ($0.92$) fall short. With the sixteen-solver pool
the selected iterate is exact to floating-point precision on all $64$ instances
(HS-PCFR$^+$ on $33$, PDCFR$^+$ on $31$). On Goofspiel, the sixteen-solver pool
selects $0.62$, against $0.64$ for exact-gap selection on the same grid and
$0.53$ for the best single solver, Smooth PRM$^+$. The seven-solver pool selected
$0.56$ against $0.58$ and $0.50$. The DPRM$^+$ solvers are $6$--$20\times$
loose there and certify nothing, so the seven-solver certified fraction, $0.45$,
came from the PRM$^+$ solvers alone. The added solvers certify what they solve
except DCFR, Smooth PRM$^+$ and IR-PRM$^+$ (PDCFR$^+$, HS-PCFR$^+$ and
Stable PRM$^+$ at $1.2$--$1.3\times$), and the sixteen-solver certified
fraction is $0.58$. $C^1$ and $C^0$ were above the
exact gap at every one of the $2\times65{,}536$ checkpoints, as
Proposition~\ref{prop:sequence} requires.

Both games run the main benchmark
protocol: $64$ paired instances, all sixteen solvers, $T=2^{16}$, and $64$
checkpoints. These are mechanism-transfer tests, not scalability experiments.
\begin{table}[t]
\centering
\small
\setlength{\tabcolsep}{4pt}
\caption{External validity at threshold $10^{-5}$ ($n=64$ paired
instances per game, $T=2^{16}$; the seven-solver pool of the first stage and
the sixteen-solver pool). ``Best solver'' is
the strongest solver's oracle fraction; selection is the exact gap at the
cross-solver $C^1$ argmin; certified counts instances whose selected $C^1$ is
below the threshold; oracle is the pool union, attained by exact-gap
selection on the same grid; ratio is the median of $C^1$
over the exact gap at the selected iterate, over instances whose selected
gap is above $10^{-12}$; \emph{n/a} marks the Sheriff sixteen-solver cell,
where every selection gap is below it. Table~\ref{tab:external-perarm}
gives all solvers.}
\label{tab:external}
\begin{tabular}{llccccc}
\toprule
 & & best & \multicolumn{3}{c}{portfolio} & \\
\cmidrule(lr){4-6}
Game & pool & solver & $C^1$ select. & certified $C^1$ & oracle & ratio \\
\midrule
Sheriff & $7$ solvers & $1.00$ & $1.00$ & $1.00$ & $1.00$ & $11.4\times$ \\
 & $16$ solvers & $1.00$ & $1.00$ & $1.00$ & $1.00$ & n/a \\
Goofspiel GS($5$) & $7$ solvers & $0.50$ & $0.56$ & $0.45$ & $0.58$ & $8.3\times$ \\
 & $16$ solvers & $0.53$ & $0.62$ & $0.58$ & $0.64$ & $2.0\times$ \\
\bottomrule
\end{tabular}
\end{table}

On Sheriff, the PRM$^+$-family solvers certify all $64$ instances, while RM$^+$
stalls (oracle $0.42/0.53$, certified $0.00$), as do DCFR and DCFR$^+$
(oracle $0.38$ and $0.67$, certified $0.00$). HS-DCFR reaches the threshold
on every instance but its $C^1$ argmin lands on a solved iterate on only
$0.59$ of them and certifies $0.02$, and ExRM$^+$ certifies $0.92$. Goofspiel shows portfolio
complementarity. With sixteen solvers the best solver reaches $0.53$, selection
$0.62$, and the oracle union $0.64$, a loss of one instance, as with seven
($0.50$, $0.56$, $0.58$). DPRM$^+$ and PRM$^+$ solve different instances,
but the DPRM$^+$ certificates are loose ($6$--$20\times$) and certify
nothing, so the seven-solver certified fraction $0.45$ is carried by the PRM$^+$
solvers (tightness $1.4\times$). The sixteen-solver selection is spread over
eleven solvers, led by HS-PCFR$^+$ ($15$) and Stable PRM$^+$ ($10$), and its
median tightness at the selected iterate is $2.0\times$. The DPRM$^+$
trajectories on Goofspiel are sensitive to last-bit floating-point
differences. Rerunning them under a different NumPy build moved their
oracle fractions by up to $0.06$, while the RM$^+$ and PRM$^+$ solvers
reproduced bit for bit. All numbers here come from one environment.

\paragraph{Sheriff of Nottingham.}
The bribery negotiation game of \citet{farina2019correlation}
(their Appendix~F). The Smuggler secretly loads
$n \in \{0, \dots, n_{\max}\}$ illegal items, then $r$ bribe rounds
follow, in which the Smuggler offers $b_t \in \{0, \dots, b_{\max}\}$ and the
Sheriff answers accept/reject, with only the final-round decision
binding. If the Sheriff accepts, payoffs are $(v \cdot n - b_r,\, b_r)$;
if he rejects and $n \ge 1$, $(-p \cdot n,\, p \cdot n)$; if he rejects
and $n = 0$, $(+s,\, -s)$.\footnote{The conference version of
\citet{farina2019correlation} prints the accept payoff as
$p \cdot n - b_r$; the arXiv version and the utility values in their
Table~1 give $v \cdot n - b_r$, which we implement.} The accept branch
has payoff sum $v \cdot n \ne 0$, making the game general-sum; there
are no chance moves, and the Sheriff never observes $n$. We fix
$n_{\max} = 5$, $b_{\max} = 2$, $r = 2$ ($216$ terminals; $43$ Smuggler
and $21$ Sheriff infosets) and draw instances as $v \sim U(1, 5)$,
$p \sim U(0.5, 2)$, $s \sim U(0.5, 2)$. Our implementation reproduces
the sequence-count formulas and the numerical examples of the reference
paper.

\paragraph{General-sum Goofspiel GS($5$).}
We use Goofspiel \citep{ross1971goofspiel} in the full-information
family studied by \citet{lanctot2013monte}, with $5$ cards per player and
a general-sum twist: each round, player~$1$
bids a card (hidden), then
player~$2$ bids, then both are revealed; the higher card wins the
round's prize, and on a tie the prize \emph{evaporates}: neither
player collects, so total payoff is not constant. Bids are revealed
after each round (the ``full information'' variant of
\citet{lanctot2013monte} except for the within-round hiding of
player~$1$'s bid); there are no chance moves. The game has
$(5!)^2 = 14{,}400$ terminals and $18{,}426$ infosets per player.
Instances draw the prize sequence as a seeded random permutation of
$\{1, \dots, 5\}$ perturbed elementwise by $U(0.8, 1.2)$.

\paragraph{Per-solver results.}
Table~\ref{tab:external-perarm} gives the full per-solver breakdown behind
Table~\ref{tab:external}. On Sheriff the PRM$^+$-family solvers saturate, as do PDCFR$^+$, HS-PCFR$^+$,
Smooth PRM$^+$, Stable PRM$^+$ and IR-PRM$^+$; the RM$^+$, DCFR and DCFR$^+$
solvers provide the discrimination (their certificates never reach
$10^{-5}$), and ExRM$^+$ and HS-DCFR certify only $0.92$ and $0.02$ of
instances. On Goofspiel the DPRM$^+$ solvers solve
instances the PRM$^+$ solvers do not but carry loose certificates (median
tightness $6$--$20\times$); the added predictive solvers solve as many or more
(PDCFR$^+$ $0.52$, Smooth PRM$^+$ $0.53$) with tight certificates on
Sheriff; on Goofspiel Smooth PRM$^+$ carries a $4.6\times$ median and
certifies fewer than it solves, and the
sixteen-solver portfolio certifies $0.58$ against $0.45$ for seven.

\begin{table}[tp]
\centering
\small
\setlength{\tabcolsep}{3.2pt}
\caption{Per-solver external-validity results at threshold $10^{-5}$
($n{=}64$ paired instances per game, $T = 2^{16}$, alternating
updates except ExRM$^+$). Columns as in Table~\ref{tab:external}; per-solver
``select.''\ picks that solver's own $C^1$-argmin checkpoint,
``certified'' is the fraction whose selected $C^1$ is below the threshold,
and ``ratio'' is the median of $C^1$ over the exact gap there; \emph{n/a}
marks a cell whose selected gaps are all below $10^{-12}$. The portfolio
row selects across solvers, and its oracle column is the pool's oracle union.}
\label{tab:external-perarm}
\begin{tabular}{lcccc}
\toprule
Solver & oracle & $C^1$ select. & certified & ratio \\
\midrule
\multicolumn{5}{l}{\emph{Sheriff}} \\
RM$^+$, $p{=}1$ & $0.42$ & $0.38$ & $0.00$ & $3.0\times$ \\
RM$^+$, $p{=}2$ & $0.53$ & $0.42$ & $0.00$ & $2.9\times$ \\
PRM$^+$, $p{=}1$ & $1.00$ & $1.00$ & $1.00$ & $17.2\times$ \\
PRM$^+$, $p{=}2$ & $1.00$ & $1.00$ & $1.00$ & $12.1\times$ \\
DPRM$^+$, $p{=}1$, $\lambda{=}0.3$ & $1.00$ & $1.00$ & $1.00$ & $16.0\times$ \\
DPRM$^+$, $p{=}2$, $\lambda{=}0.1$ & $1.00$ & $1.00$ & $1.00$ & $11.7\times$ \\
DPRM$^+$, $p{=}2$, $\lambda{=}0.3$ & $1.00$ & $1.00$ & $1.00$ & $12.0\times$ \\
DCFR, $p{=}2$ & $0.38$ & $0.34$ & $0.00$ & $3.1\times$ \\
DCFR$^+$, $p{=}4$ & $0.67$ & $0.50$ & $0.00$ & $3.0\times$ \\
PDCFR$^+$, $p{=}5$ & $1.00$ & $1.00$ & $1.00$ & $5.4\times$ \\
HS-PCFR$^+$ & $1.00$ & $1.00$ & $1.00$ & n/a \\
HS-DCFR & $1.00$ & $0.59$ & $0.02$ & $2.7\times$ \\
Smooth PRM$^+$, $p{=}2$ & $1.00$ & $1.00$ & $1.00$ & $10.9\times$ \\
Stable PRM$^+$, $p{=}2$ & $1.00$ & $1.00$ & $1.00$ & $10.9\times$ \\
ExRM$^+$, $p{=}2$, simult. & $0.95$ & $0.95$ & $0.92$ & $7.9\times$ \\
IR-PRM$^+$, $p{=}2$ & $1.00$ & $1.00$ & $1.00$ & $9.4\times$ \\
portfolio of $7$ & $1.00$ & $1.00$ & $1.00$ & $11.4\times$ \\
portfolio of $16$ & $1.00$ & $1.00$ & $1.00$ & n/a \\
\midrule
\multicolumn{5}{l}{\emph{Goofspiel GS($5$)}} \\
RM$^+$, $p{=}1$ & $0.05$ & $0.05$ & $0.05$ & $1.3\times$ \\
RM$^+$, $p{=}2$ & $0.06$ & $0.06$ & $0.06$ & $1.2\times$ \\
PRM$^+$, $p{=}1$ & $0.45$ & $0.45$ & $0.44$ & $1.4\times$ \\
PRM$^+$, $p{=}2$ & $0.47$ & $0.47$ & $0.45$ & $1.4\times$ \\
DPRM$^+$, $p{=}1$, $\lambda{=}0.3$ & $0.25$ & $0.25$ & $0.00$ & $19.7\times$ \\
DPRM$^+$, $p{=}2$, $\lambda{=}0.1$ & $0.50$ & $0.48$ & $0.00$ & $8.7\times$ \\
DPRM$^+$, $p{=}2$, $\lambda{=}0.3$ & $0.36$ & $0.33$ & $0.00$ & $6.1\times$ \\
DCFR, $p{=}2$ & $0.08$ & $0.06$ & $0.06$ & $1.2\times$ \\
DCFR$^+$, $p{=}4$ & $0.05$ & $0.05$ & $0.05$ & $1.2\times$ \\
PDCFR$^+$, $p{=}5$ & $0.52$ & $0.52$ & $0.52$ & $1.3\times$ \\
HS-PCFR$^+$ & $0.48$ & $0.48$ & $0.48$ & $1.2\times$ \\
HS-DCFR & $0.06$ & $0.06$ & $0.06$ & $1.3\times$ \\
Smooth PRM$^+$, $p{=}2$ & $0.53$ & $0.53$ & $0.50$ & $4.6\times$ \\
Stable PRM$^+$, $p{=}2$ & $0.48$ & $0.48$ & $0.48$ & $1.2\times$ \\
ExRM$^+$, $p{=}2$, simult. & $0.12$ & $0.12$ & $0.12$ & $1.3\times$ \\
IR-PRM$^+$, $p{=}2$ & $0.30$ & $0.30$ & $0.16$ & $1.4\times$ \\
portfolio of $7$ & $0.58$ & $0.56$ & $0.45$ & $8.3\times$ \\
portfolio of $16$ & $0.64$ & $0.62$ & $0.58$ & $2.0\times$ \\
\bottomrule
\end{tabular}
\end{table}

\section{OpenSpiel Cross-Validation}
\label{app:openspiel}

To reduce the cross-implementation uncertainty discussed in
Appendix~\ref{app:future-work} without access to the reference paper's
code, we checked selected aspects of our implementation against OpenSpiel
\citep{lanctot2019openspiel} (v1.6.3). A second implementation of
biased Leduc was written directly from the reference paper's game
description as an OpenSpiel Python game, sharing no code with our
engine. NashGap on the OpenSpiel side is computed from the per-player
improvements the library reports per player in general-sum games, taking
their maximum (its headline scalar is their \emph{sum}, the NashConv
convention, a different quantity from our $\max_i$ NashGap. Sum-form
certification requires the per-player bounds $\Rhat_i + \Delta_i$ summed).
The following checks validate selected aspects of our own rules and
evaluator implementation. They are not a reproduction of NPCFR or its
reported PCFR$^+$ baseline. Three checks follow, in increasing strictness of what
they exercise.

\begin{itemize}
\item \emph{Rules.} At $\beta = 1$ biased Leduc reduces to standard
Leduc. Running OpenSpiel's own CFR$^+$ solver for $512$ iterations on
our reimplementation and on the library's own Leduc poker game gives
\emph{identical} NashGap values at all ten logged checkpoints (relative
difference $0.0$; both games have $936$ infosets).
\item \emph{Evaluator.} We exported final average strategies from our
engine for $12$ instance--solver pairs ($\beta \in \{1.0\} \cup$ three
sampled benchmark instances, $\times$ three solvers spanning
RM$^+$/PRM$^+$/DPRM$^+$) and injected them into OpenSpiel tabular
policies (information-state key sets verified equal). OpenSpiel's
maximum per-player improvement matches our NashGap on all $12$ pairs with
relative error $\le 3.9\times10^{-7}$; the worst pair has a true gap of
$1.7\times10^{-8}$, i.e.\ the error is at floating-point noise level,
and all other pairs agree to $\le 3\times10^{-9}$.
\item \emph{Trajectories.} Running OpenSpiel's CFR$^+$ for $2^{12}$
iterations on three benchmark instances and comparing against our
matched solver's logged trajectory gives NashGap ratios from $0.65$ to $1.24$
at the aligned checkpoints ($5$ of $6$ inside a $[2/3, 3/2]$ acceptance
band, one marginally outside at $0.6497$). A fine-grid diagnostic shows
the two trajectories agree to relative difference $\sim 10^{-11}$ at
$t = 64$ on all three instances. This is a trajectory diagnostic. It tests
none of solver equivalence, update bookkeeping, or long-horizon behavior.
The difference then grows monotonically to $24$--$35\%$ at
$t = 4096$; we do not use this diagnostic to identify its cause.
\end{itemize}

\noindent The first two checks support the limited conclusion used here.
The game rules and NashGap evaluator in our implementation agree with an
independent implementation on the tested cases. They leave the reference
solver, its averaging conventions, and the source of the cross-paper
discrepancy untested. The scripts, exported policies and logs behind this
cross-validation are not part of the code released upon acceptance, and its manifest
states what is included.

\section{Implementation Details}
\label{app:impl}

\paragraph{Engine.}
All games are compiled to sequence form, with terminal arrays
$(p_c, u_1, \dots, u_n)$ over $\cZ$ plus per-player sequence trees.
Realization, backward regret sums, best responses, and normalization are
vectorized by tree depth, so a CFR iteration is $O(\mathrm{depth})$ NumPy
calls over $O(|\cZ|)$ arrays. Three-player Leduc compiles to
$1{,}043{,}952$ terminals and $8{,}600$ infosets per player
($\approx 195$\,ms per PRM$^+$ iteration on one CPU core).

\paragraph{Accumulators and certificates.}
Each run maintains, with averaging weights $w_t$, a separate raw signed
counterfactual-regret accumulator (for $\Rhat_i$), distinct from the clipped
RM$^+$ state, together with the weighted average realization
plans $\xbar_i$, the weighted empirical terminal distribution $\dbar$
(one fused multiply-add over $\cZ$ per iteration), and for $n \ge 3$ the
per-player coalition products for $D_{-i}$ ($O(n|\cZ|)$). At each
checkpoint the aligned tight certificate is
$\max_i [\Rhat_i + \ip{\dbar}{u_i} - \ip{\mubar}{u_i}]$ and the
coalition certificate adds $M D_{-i}$ inside the max; both are a few dot
products. Algorithm~\ref{alg:portfolio} writes $H_i$ without the chance
factor so that $H_i/W_T=\bar c^{(i)}_{-i}$ exactly as defined in
Proposition~\ref{prop:sequence}. The implementation folds $p_c$ into the
stored accumulator, whose raw coalition sum is
$\sum_t w_t p_c(z)\prod_{j\ne i}y^{(i),t}_j(z)$, and forms the imbalance
weight as $w^{(i)}_i = u_i\,(p_c\prod_{j\ne i}\xbar_j - \mathrm{coal}_i/W_T)$,
where $\mathrm{coal}_i$ is that chance-weighted sum, so $p_c$ appears
exactly once. The comparator identity
$u_i(x',\xbar_{-i}) - \sum_t\rho_t u_i(x',y^{(i),t}_{-i}) =
\ip{x'}{w^{(i)}_i}$ is an identity of linear functionals of $x'$. Both sides
reduce to $\sum_z w^{(i)}_i(z)\,x'(z)$, so it holds for every $x'\in\cX_i$
without enumerating deviations. The implemented weight was checked against
that closed form on biased Leduc, whose fold and showdown terminals carry
different $p_c$: applying $p_c$ twice moves the residual from
$1.9\times10^{-16}$ to $1.2\times10^{-1}$.

\paragraph{DPRM$^+$.}
At zero-based iteration $t$, the engine realizes $x^t$, sets
$w_t=(t+1)^p$, and inserts $w_t x^t$ and $w_t d_{\sigma^t}$ into the
running totals before forming the look-ahead proxy, so the passed totals
$(W_t,J_t,M_t)$ already include the current iterate:
\begin{align}
 p_t(z)&=\frac{J_t(z)+w_t d_{\sigma^t}(z)}{W_t+w_t},\\
 q_t(z)&=p_c(z)\prod_j
 \frac{M_{j,t}(z)+w_t x_j^t(z)}{W_t+w_t},
\end{align}
with $M_{j,t}(z)=\sum_{s\le t}w_s x_j^s$ player $j$'s running weighted
realization total at the sequence that leads to $z$. The proxy doubles the
current iterate's effective weight; it is a finite-difference estimate of
the next profile's direction. With
$F^t_{\mathrm{proxy}}=\sum_zp_t(z)\log[p_t(z)/q_t(z)]$ differentiated with
respect to player $i$'s current terminal reach $x_i^t(z)$ at fixed
$(W_t,J_t,M_t)$, the literal gradient carries the prefactor
$w_t/(W_t+w_t)$; the implemented direction returns
$g^t_{\mathrm{term}}=F_t\,\nabla F^t_{\mathrm{proxy}}$ with
$F_t=(W_t+w_t)/w_t$, so the effective pre-clipping multiplier relative to
the literal gradient is $\lambda F_t$, which grows like $t/(p+1)$. The
runs use the own-reach-excluded counterfactual convention, so $g_i^t$ is a
prediction direction in regret space. The actual prediction is
\begin{equation}
 m_i^{t+1}=ir_i^t+
 \operatorname{clip}(-\lambda g_i^t,-B_t,B_t),\qquad
 B_t=8(t+1)^{1/4}.
\end{equation}
The clip applies to the decorrelation bias only; $ir_i^t$ and the raw
certificate accumulator enter unclipped. In alternating mode, player $i>1$
is re-realized after earlier players update, so its prediction and regret
use a partially updated profile; the accumulators $\dbar^{(i)}$ and
$\bar c^{(i)}_{-i}$ of Proposition~\ref{prop:sequence} are taken from that
same profile, while the averages use the start profile. A finite-difference
test in the own-reach-included mode checks the analytic gradient of the
doubled-proxy objective against central differences, to relative error
$<10^{-5}$. No parameter is learned: the decorrelation scale is drawn from
the fixed candidate set $\lambda\in\{0.1,0.3\}$, protocol configuration
shared by all reported runs.

\paragraph{Added solvers.}
The nine solvers of the second stage are local regret minimizers plugged into
the same engine; the certificate accumulators read the instantaneous regret
at the profile actually played and never the optimizer's discounted or
projected state, so Proposition~\ref{prop:sequence} applies unchanged. DCFR
discounts the accumulated regret by $t^\alpha/(t^\alpha+1)$ for positive and
$t^\beta/(t^\beta+1)$ for negative entries with $(\alpha,\beta)=(1.5,0)$;
DCFR$^+$ applies the positive discount and the $+$ floor, and PDCFR$^+$ adds
the one-step prediction of Table~1 of \citet{xu2024minimizing}; both use the
corresponding power weights in the average ($p{=}2$, $4$ and $5$). HS-PCFR$^+$
and HS-DCFR follow the linear schedule of equation~(4) of
\citet{zhang2026faster} with the total iteration count $n=T$ substituted,
$\gamma_s=\gamma_0-5s/n$ for the weight exponent (and the discount exponents
for HS-DCFR); the resulting weights reach $e^{369}$ at $T$ and stay inside
double precision, and the configuration check rejects schedules that would
not. Smooth PRM$^+$ projects the regret state onto $\{R\ge0,\lVert R\rVert_1
\ge 1\}$ per infoset, Stable PRM$^+$ restarts an infoset's state at the unit
level when it falls below it, and ExRM$^+$ takes the extragradient step of
Algorithm~5 of \citet{farina2023regret}; the step $\eta$ rescales the
regrets against that unit level, and because counterfactual regrets in
biased Leduc carry chance reach of about $1/30$ the grid was
$\eta\in\{10,10^2,10^3,10^4\}$, above the matrix-game values of the
paper. ExRM$^+$ exists only with simultaneous updates. IR-PRM$^+$ is
Algorithm~2 of \citet{zhang2025scale} with the previous utility vector as
prediction. Each family was run on seeds $0$--$9$ over a grid of $35$
configurations in total (discount exponents, schedules, $\eta$, averaging
exponent, update order), and one configuration per family was selected by a
rule written before the pilot: the highest oracle fraction at $10^{-5}$,
ties broken by the smaller median trajectory minimum. The selected
configurations are the nine solvers reported everywhere; seeds $0$--$9$ are
part of the in-sample set, so the sixteen-solver in-sample figures include this
selection step. The two held-out stages use the same $64$ instances
(seeds $64$--$127$): Stage~1 touched them first, and the second
preregistration was frozen before the Stage~2 runs, so the Stage~2 figures
reuse the Stage~1 instances and are not fresh data with respect to Stage~1.
Stage~3 uses a separate set (seeds $128$--$191$) that no earlier stage had
touched. Six added solvers ran on three-player
Leduc by a rule stated before the two-player results were read: the three
discounting-family solvers, plus every added solver whose two-player oracle
fraction reached $0.60$ (HS-PCFR$^+$, Smooth PRM$^+$, Stable PRM$^+$).

\paragraph{Portfolio and reported rows.}
The pool of $K=16$ solvers contains RM$^+$ and PRM$^+$ with linear ($p{=}1$)
and quadratic ($p{=}2$) weights, three DPRM$^+$ solvers that add a
clipped, scaled, tree-chained decorrelation direction from a
doubled-current-iterate proxy, and the nine added solvers above, all
training-free. ``Selection''
rows of Table~\ref{tab:main} report the exact NashGap of the iterate chosen
by a certificate. ``Certified'' rows report the fraction whose selected
certificate is itself below the threshold, the number a deployment can state
without an exact best-response pass against the average. The reference's
three-player benchmark is the fixed game of its Table~6.

\paragraph{Protocol.}
Biased Leduc: $\beta \sim U(0, \tfrac12)$ per instance drawn from
seeds $0$--$63$, paired across solvers, $T = 2^{18}$, our own dense grid of $128$ evenly spaced
checkpoints; the reference instead reports checkpoints at $2^k$, about
$19$ evaluations through this horizon. The three control solvers of
Table~\ref{tab:main} (uniform RM$^+$/PRM$^+$ and simultaneous PRM$^+$) use
the same paired seeds. Three-player Leduc: $T=2^{16}$, deterministic, with $64$ checkpoints spaced
every $1024$ iterations. Every run records $C^0$ and $C^1$ from the
accumulators of Proposition~\ref{prop:sequence}; the two-player grid and the
preselected three-player solver additionally log the aligned start-of-round
certificate (extra regret pass) for the comparison in
Section~\ref{sec:exp-alignment}. The aligned three-player rerun
(rerun in strict logging mode into a fresh output directory) reproduced
the original trajectory to floating-point precision; the solver was fixed
before it. All headline numbers are recomputed from the stored per-run
logs by the analysis scripts.

\section{Future Work}
\label{app:future-work}

The results point to the following experiments and technical extensions.

\paragraph{Reproducing the reported baselines.}
Our re-evaluation compares our runs against the \emph{reported} numbers
of \citet{sychrovsky2026approximating}: no code accompanies their paper,
which has no repository link or reproducibility statement and no public
release, so rerunning their solver is not possible. The paper does
not fully specify the averaging weights of the
plus-family baselines. We could not reproduce their PCFR$^+$ row under
any configuration we tried. Our uniform PRM$^+$ measures $0.00$
at $10^{-5}$ and $0.77$ at $10^{-4}$ against their reported $0.16$ and
$0.48$, mismatched in both directions, and our uniform-average runs remain
above $10^{-5}$ at this horizon for RM$^+$ and PRM$^+$. To test our own
rules and NashGap evaluator as the sources of this mismatch, we
cross-validated them against OpenSpiel \citep{lanctot2019openspiel}. A
reimplementation of biased Leduc written independently from the paper's
text matches OpenSpiel's built-in Leduc exactly at $\beta = 1$
(identical CFR$^+$ NashGap values at every logged checkpoint), and
injecting our exported average strategies into OpenSpiel's general-sum
gap computation reproduces our NashGap to relative error
$\le 3.9\times10^{-7}$ on all $12$ instance--solver pairs tested
(Appendix~\ref{app:openspiel}). These checks validate our own rules and
evaluator; the reference solver's configuration remains an open
reproduction target, and obtaining it would enable a controlled rerun.

\paragraph{A paired low-$\beta$ comparison.}
At threshold $10^{-5}$ on biased Leduc, NPCFR$^+$'s reported $0.96$
exceeds our certificate-selected $0.89$, and the two sides use
independently drawn instance samples. Our portfolio's $7$ in-sample failures
lean towards low $\beta$ and its $5$ held-out failures do not. A shared
instance set would test whether the reported aggregate margin concentrates
in the low-$\beta$ region. The NPCFR instance draws are unpublished, so the
comparison currently rests on aggregates.

\paragraph{Aligned logging as an ablation.}
Proposition~\ref{prop:sequence} makes the second regret pass against the
start-of-round profile unnecessary, but we paid it across the full
two-player biased-Leduc portfolio to compare the two certificates on
identical trajectories. Both are sound at all $131{,}072$ checkpoints; on
the seven original solvers the aligned certificate selects $0.80$ and certifies
$0.69$ at $10^{-5}$ against $0.84$ and $0.75$ for $C^1$, and on the
sixteen-solver pool it certifies $0.86$ against $0.89$
(Section~\ref{sec:exp-alignment}). The
comparison isolates a design principle. The certificate should follow the
regrets the algorithm actually minimizes and account for the resulting
profile mismatch explicitly. Forcing every accumulator onto one profile
costs tightness. The certified-anchor pattern also transfers beyond
regret-matching loops. In a side experiment on Leduc, a regularized
entropy-descent solver advancing a fixed-average reference maintained an
aligned certificate next to its working strategy, and all $13$ anchor
updates over $5{,}000$ iterations were exactly re-audited as sound (final
anchor NashConv $0.0139$ against its stored bound $0.0205$; the per-move
audit will be released with the code). That
deployment certifies the sum-form NashConv, so the per-player bounds
$\Rhat_i + \Delta_i$ enter as a sum
(Appendix~\ref{app:openspiel}).

\paragraph{Certificates for selection and for control.}
Using the certificate to \emph{control} the solver is a different regime
with a known failure mode. In the same side experiment, gating reference
advances on certified progress of a single sequence froze the dynamics
after four advances (final NashConv $0.109$ versus $0.022$ for the
ungated schedule). Forcing advances past the gate violated the gate's own
monotone-threshold invariant at every forced step. The stored acceptance
threshold no longer bounded the reference's NashConv, because an anchor
advance replaces the reference that bound was evaluated on. All $18$
forced advances in the pilot run violated the stored bound. Throughout,
the certificate
inequality $\sum_i(\Rhat_i+\Delta_i) \ge \mathrm{NashConv}$ held for the
anchor actually in force, whose audit logs will be released with the code.
These results separate certificate-gated control from certificate-based
selection; our protocol keeps the certificate out of the update loop.

\paragraph{Tighter multi-player certificates.}
On three-player Leduc the $M D_{-i}$ form of the certificate is
$12$--$387\times$ loose at $T$ across the six original solvers, $C^0$ is
$2.9$--$105\times$, and the exact comparator $C^1$ is $1.4$--$8.2\times$
(Table~\ref{tab:leduc3}); on each of them the comparator relaxation
$M D_{-i} - L_{-i}$ is essentially the whole slack of the $M D$ form
(median share $0.98$--$1.0$; Figure~\ref{fig:coalition} shows the PRM$^+$,
$p{=}2$ solver). The added HS-PCFR$^+$ solver is the exception that confirms the
attribution: its averaging weights sit almost entirely on the last iterates,
the coalition imbalance is near zero, and the three forms coincide at
$1.3\times$. Closing the remaining gap between $C^0$ and $C^1$ therefore requires
a zero-best-response surrogate for the comparator optimum $L_{-i}$, since
evaluating $L_{-i}$ costs a backward pass per player per checkpoint.
One-shot evaluation remains the general final-verification tier when an
exact guarantee is needed.

\paragraph{Broader benchmark coverage.}
The benchmarks come from the paper we re-evaluate and are
Leduc variants, deliberately so because the target claim concerns those
benchmarks. Appendix~\ref{app:external} adds two structurally different
general-sum EFGs, where selection and certification carry over. Sheriff is near-saturated for the predictive solvers, Goofspiel leaves
the DPRM$^+$ certificates visibly loose, and all games remain small
tabular benchmarks. The three-player result is a single deterministic
run, as in the reference protocol. Larger and more varied games are the
next test of mechanism transfer; so are payoffs other than chips, such as
tournament prize equity \citep{li2026icm}. Agents that maintain no regret
accumulator, such as LLM poker agents grounded in a rule library
\citep{li2026pokerskill}, fall outside the certificate's scope.

\paragraph{Sampled and amortized certificates.}
The certificate consumes exact weighted terminal distributions, an
$O(|\cZ|)$ object per checkpoint, and is therefore incompatible with
Monte Carlo CFR-style sampling \citep{lanctot2009monte} as it stands.
Relatedly, the DPRM$^+$ solver's $O(|\cZ|)$ decorrelation gradient per
iteration was infeasible at full length on three-player Leduc
($\approx 3.5$\,s per iteration), so the three-player portfolio ran
without DPRM$^+$ solvers. Sampled estimators or amortized terminal
accumulators are the natural next step.

\end{document}